%% file: ex_article.tex
\pdfoutput=1
\documentclass[hidelinks,onefignum,onetabnum]{siamart250211}

\input{ex_shared}

\ifpdf
\hypersetup{
  pdftitle={A Kinetic Theory Approach To Ordered Fluids: Direct Simulation Monte Carlo Methods},
  pdfauthor={J.~A.~Carrillo, P.~E.~Farrell, A.~Medaglia, U.~Zerbinati}
}
\fi

\usepackage{caption}
\usepackage{subfig}

\begin{document}

\maketitle

\begin{abstract}
We present a direct simulation Monte Carlo (DSMC) method for the kinetic theory of ordered fluids \revii{recently introduced by the authors~\cite{carrillo2025kinetic}}. The microscopic constituents of these fluids carry an order parameter belonging to a manifold. The method applies Strang splitting to couple a \revii{Monte Carlo treatment of the binary collisions} with a symplectic integration of the mean-field Vlasov transport on the order-parameter manifold. We use the Nanbu--Babovsky approach for Maxwellian collision kernels, and Bird's acceptance--rejection method for general ones. The mean-field Vlasov force is reconstructed from the particle ensemble at each time step. \revii{We also include a Bhatnagar--Gross--Krook (BGK) / Andersen thermostat to control the system temperature.} We formulate the method for a generic order-parameter manifold, and specialize it to two-dimensional calamitic (rod-like) molecules. Numerical tests validate the method. We confirm the relaxation to the Maxwellian distribution predicted by the $H$-theorem, and simulate the mean-field alignment driven by quadratic, Kuramoto, and Onsager interaction potentials. With the Onsager potential we successfully capture the isotropic--nematic phase transition at the correct critical temperature.
\end{abstract}

\begin{keywords}
kinetic theory, ordered fluids, liquid crystals, direct simulation Monte Carlo, order-parameter manifold, mean-field interaction, isotropic--nematic transition, symplectic integration, $H$-theorem
\end{keywords}

\begin{MSCcodes}
82C40, 65C05, 76A15, 82B26, 65P10
\end{MSCcodes}

\section{Introduction}
\label{sec:intro}
Fluids whose microscopic constituents carry a notion of order are ubiquitous in nature and technology. Liquid crystals, whose rod-like (calamitic) molecules tend to align, are the paradigmatic example, with the ordering encoding the orientation of each molecule. Microscopic orderings also arise for ferrofluids, suspensions of elongated colloids, and many biological active fluids. The ordering of the constituents is responsible for the rich macroscopic phenomenology of these materials, such as anisotropic elasticity, anisotropic viscosity, and the emergence of ordered phases through phase transitions.

In our companion work \cite{carrillo2025kinetic}, we developed a kinetic theory for these ordered fluids. The theory encodes the fluid's microstructure in an \emph{order parameter} taking values in a manifold $\mathcal{M}$, which is equipped with an action of the rotation group. Starting from a Hamiltonian description of the microscopic constituents, we derived a Bogoliubov--Born--Green--Kirkwood--Yvon (BBGKY) hierarchy and a Boltzmann--Vlasov kinetic equation for the one-particle distribution function. Using Noether's theorem, we identified the conserved quantities of the binary collisions, and we proved an $H$-theorem to guarantee relaxation to a Maxwellian steady state.

\revam{That work did not address the numerical treatment of the resulting kinetic equation. This equation} is very difficult to simulate because its phase space is high-dimensional, comprising position, translational velocity, order parameter, and conjugate momentum. \revam{Moreover, the motion on the order-parameter manifold is driven by a mean-field force that depends on the unknown distribution itself.}

\revam{Standard deterministic methods, such as finite-volume, discontinuous Galerkin, discrete-velocity, and spectral methods, approximate the distribution function on a phase-space grid \cite{dimarcoPareschi,pareschi2006}, and their cost grows exponentially with the dimension $d$. In our model, this requires resolving simultaneously position, velocity, order parameter, and conjugate momentum. High-order and fast spectral methods can reduce the discretization error or the cost of evaluating the collision operator on a fixed grid, but they do not remove the phase-space scaling. For the dimensions and resolutions considered here, a direct deterministic discretization therefore becomes computationally prohibitive.}

Direct simulation Monte Carlo (DSMC) methods offer a natural alternative. \revam{They have a long history in rarefied gas dynamics \cite{bird,pareschiRussoIntro,pareschi1999implicit,pareschi2001time,pareschi2013interacting} and have recently been coupled with particle-in-cell methods for collisional plasma models \cite{medaglia2026dsmc}. DSMC represents} the distribution function \revam{by} an empirical measure of particles and \revam{models} collisions stochastically. \revam{Typically, the cost of a collision step is linear in the number of particles, and} their convergence rate depends only on the law of large numbers, sidestepping the curse of dimensionality.

\revam{The price paid is statistical noise and a relatively slow convergence with respect to the particle number. Recent work by Borghi \& Pareschi analyzes the convergence of Nanbu-type collision schemes \cite{borghiPareschi2026}; see also the references therein. DSMC thus provides a feasible approach to the present high-dimensional kinetic equation. However, classical DSMC methods must be adapted to account for the motion on the order-parameter manifold and for the self-consistent mean-field force.}

In traditional engineering DSMC, molecules with internal degrees of freedom are typically handled using the phenomenological Borgnakke--Larsen model \cite{borgnakkeLarsen}, which reallocates internal energy during a fraction of collisions by sampling an equilibrium distribution. In contrast, \revam{the kinetic theory considered here \cite{carrillo2025kinetic}} tracks the order parameter and its conjugate momentum \revam{explicitly}, using a collision rule derived directly from microscopic mechanics. Farrell, Russo \& Zerbinati \cite{farrellRusso2024} instead derived a kinetic equation for calamitic constituents without a mean field by closing the hierarchy around an already aligned state, leading to corresponding hydrodynamic models \cite{farrellRusso2024, farrellMalekSoucekZerbinatiLC2026}.

Alternatively, one can build a kinetic description of rod-like particles starting from stochastic, overdamped systems rather than Hamiltonian dynamics. For example, \revii{a matched asymptotic expansion in the occupied volume fraction} for planar Brownian hard needles yields a nonlinear nonlocal equation for the spatial and orientational density. Its spatially homogeneous states successfully recover Onsager's isotropic--nematic transition \revii{\cite{brunaChapmanSchmidtchen2023}}, advancing the coarse-graining of hard-core interactions \cite{brunaChapman2012}. Similarly, Bruna et al.\ \cite{brunaBurgerEspositoSchulz2022} compared mean-field and short-range excluded-volume closures for active Brownian particles oriented on the circle. \revam{These approaches concern different microscopic regimes and do not address the numerical approximation of the Boltzmann--Vlasov equation considered here.}

This paper designs and validates a DSMC scheme for this new kinetic theory of ordered fluids. We formulate the method for a generic order-parameter manifold, and for concreteness we use two-dimensional calamitic molecules (where the order parameter is an angle on the circle $\mathbb{S}^1$) as a running example.

Our numerical scheme uses a Strang splitting \cite{Strang1968} to separate the collision operator from the Vlasov-type transport operator. The collision operator is simulated using the Nanbu--Babovsky DSMC approach \cite{nanbu,babovsky} to strictly preserve the invariants of the binary collision rule. For non-Maxwellian kernels, we combine this with Bird's acceptance--rejection method \cite{bird,koura}, detailed in \cref{sec:kernels}\reviv{, for which no symmetry between the forward and the reverse transition probabilities is needed}. \revii{We also consider the Bhatnagar--Gross--Krook (BGK) relaxation in \cref{sec:bgk}, replacing binary collisions with relaxation toward a local Maxwellian. This also serves as the basis for the thermostat of \cref{sec:thermostat}}.

We highlight two differences compared to classical DSMC for the Boltzmann equation \revam{of neutral particles. Both differences, however, have analogues in other kinetic models.}

First, \revam{instead of undergoing} free streaming \revam{between collisions}, our particles move on the order-parameter manifold driven by a mean-field Vlasov force. \revam{A similar situation arises in plasma models, where the dynamics of charged particles are coupled to an electromagnetic field that must be determined from the corresponding field equations \cite{medaglia2023stochastic,medaglia2026dsmc}. In our setting, the self-consistent force drives the motion on the order-parameter manifold.} We integrate this motion using a symplectic St\"ormer--Verlet scheme \cite{Stormer,Verlet,hairerLubichWannerActa}. A na\"{\i}ve integration would cause the total energy to drift \cite{hairer2006geometric}, obscuring the crucial energy balance (the conversion of interaction energy to kinetic energy) that drives the emergence of order. \revam{Indeed, in our calamitic simulations, even if a symplectic Euler scheme is used the resulting energy drift completely prevents the formation of nematic ordering.}

Second, the mean-field force depends on the unknown distribution itself. \revam{A dependence on $f$ also occurs in quantum kinetic models. In particular, the collision operators for fermionic models used in semiconductor transport, see \cite{markowich1990semiconductor, medaglia2025GNR} and references therein, and for bosonic models, see \cite{kaniadakis1994classical, markowich2005fast} and references therein, contain factors depending on $f$, accounting respectively for Pauli exclusion principle and Bose condensation. In those models this dependence occurs in the collision operator, whereas in our model it enters the transport operator through the mean-field force.} At every time step, we must \revam{therefore} reconstruct \revam{the distribution} from the particle ensemble. Depending on the potential, this requires either \revii{computing a few sample moments} \rev{or reconstructing the full order-parameter density on a mesh of the manifold}\footnote{While unusual for the \revam{classical} Boltzmann equation \revam{of neutral particles}, this is a standard technique in particle-in-cell simulations of \revam{plasma models} \cite[Chapter~2.6]{birdsallLangdon}.}.

We validate our method with a series of numerical tests. First, we confirm that without a mean-field potential, the system relaxes to the Maxwellian distribution predicted by the $H$-theorem \cite{carrillo2025kinetic}. Next, we study the mean-field alignment driven by quadratic, Kuramoto, and Onsager-type potentials. This includes simulating the nucleation of order from a perturbed isotropic state and the collisionless filamentation of phase space.

We also verify that the ordering dynamics are robust to the choice of collision kernel by repeating tests with the physical hard-needle kernel. \rev{Furthermore, we confirm relaxation to the Maxwellian using a collision rule with differing forward and reverse transition probabilities---the specific regime for which the underlying $H$-theorem \cite{carrillo2025kinetic} was designed.}

Finally, we simulate the isotropic--nematic phase transition with the Onsager potential. By varying the bath temperature with \revii{an Andersen} thermostat, we observe a second-order transition at the critical temperature $T_c = 8/\pi$. This precisely matches the critical temperature from a linear stability analysis of the isotropic state (\cref{prop:critical}). The second-order nature of the transition also aligns with the Landau--de Gennes theory of two-dimensional nematics \rev{\cite[Chapter~2]{degennes}} and with the multi-particle collision dynamics simulations of Shendruk \& Yeomans \cite{shendrukYeomans}.

The remainder of the paper is organized as follows. In \cref{sec:model} we recall the kinetic theory of ordered fluids \cite{carrillo2025kinetic} and its specialization to two-dimensional calamitic molecules\rev{, which we develop as a running example throughout the paper}. In \cref{sec:scheme} we present the DSMC scheme: the operator splitting, the symplectic transport step and its energy-conservation properties, the reconstruction of the mean-field force, the collision step, and the thermostat. \Cref{sec:potentials} collects the mean-field potentials we consider. \Cref{sec:experiments} reports the numerical tests, and \cref{sec:conclusion} draws our conclusions.

\section{A kinetic theory of ordered fluids}
\label{sec:model}
We briefly recall the kinetic model \cite{carrillo2025kinetic}. The fluid's microstructure is captured by an order parameter $\nu$ taking values in a smooth, compact \emph{order-parameter manifold} $\mathcal{M}$. The rotation group $SO(d)$ acts on $\mathcal{M}$ via a group action $\mathcal{A}$, where $d$ is the physical dimension. Each microscopic constituent is completely described by its position $\pt{x}$, translational momentum $\vec{p}$, order parameter $\nu$, and conjugate momentum $\vec{\varsigma}$.

Assuming a weak interaction between the ordered constituents, we developed a BBGKY hierarchy to derive a Boltzmann--Vlasov equation \cite[equation~(3.1)]{carrillo2025kinetic} for the one-particle distribution function $f=f(\pt{x},\nu,\vec{p},\vec{\varsigma},t)$:
\begin{equation}
	\label{eq:boltzmann_generic}
	\frac{\partial f}{\partial t} + m^{-1}\vec{p}\cdot\nabla_{\pt{x}}f + \mat{B}(\nu)^{-1}\vec{\varsigma}\cdot\nabla_{\nu}f + \mathcal{V}\cdot\nabla_{\vec{\varsigma}}f = \mathcal{C}[f,f].
\end{equation}
Here, $m$ is the molecular mass, $\mat{B}(\nu)$ is the (symmetric, positive-definite) generalized inertia tensor, $\mathcal{C}$ is the binary collision operator, and $\mathcal{V}(\vec{x},\nu,t)$ is a mean-field Vlasov force acting on the order parameters generated by an interaction potential $\mathcal{W}(\vec{x},\vec{x}_*,\nu,\nu_*)$. \reviv{The interaction potential $\mathcal{W}$ is the potential energy of a pair of constituents, and the mean-field force is the average of $-\nabla_\nu\mathcal{W}$ against the distribution of the partner \cite[equation~(3.36)]{carrillo2025kinetic}.}
The collision operator conserves mass, linear momentum, generalized angular momentum, and energy \cite[Theorem~3.11]{carrillo2025kinetic}. 
Notice that due to the way the generalised inertia matrix $\mat{B}(\nu)$ is defined \cite[eq. 2.18]{carrillo2025kinetic}, the transport part of \eqref{eq:boltzmann_generic}, that is the right-hand side of \eqref{eq:boltzmann_generic}, conserves mass, linear momentum and generalised angular momentum. This together with the conservation properties of the collision operator imply the conservation of mass, linear momentum and generalised angular momentum for the all dynamic \cite{farrellMalekSoucekZerbinatiLC2026}. Furthermore, it satisfies an $H$-theorem \cite[Theorem~3.12]{carrillo2025kinetic}: the only equilibria are the Maxwellian distributions, and spatially homogeneous dynamics will inevitably relax toward them \cite[Theorem~3.13]{carrillo2025kinetic}.

This paper focuses entirely on the spatially homogeneous setting. Because we assume that $f$ is independent of position $\pt{x}$, physical transport is absent and translational velocities only matter during collisions.

\reviv{In this setting the mean-field force is conservative and its} potential is obtained by convolving the pairwise interaction potential with the \reviv{particle density in the order parameter}\revii{, the convolution being defined as}
\begin{equation}\label{eq:conservative_force_generic}
	\mathcal{V}(\nu) = -\nabla_\nu \Phi(\nu), \qquad \Phi(\nu) = (\mathcal{W}\ast\rho)(\nu) \revii{{}= \int_{\mathcal{M}} \mathcal{W}(\nu,\nu_*)\,\rho(\nu_*)\,d\nu_*},
\end{equation}
\revii{where $\rho(\nu,t) = \int f \,d\vec{p}\,d\vec{\varsigma}$ is the \reviv{particle density in the order parameter}.}

\rev{The central conserved quantity is the total energy $\mathcal{E}$. In this homogeneous setting, $\mathcal{E}$ is the sum of the kinetic energy and the mean-field interaction energy:
\begin{equation}\label{eq:totalenergy_generic}
	\mathcal{E} = \underbrace{\int \tfrac{1}{2}\big(m^{-1}|\vec{p}|^2 + \vec{\varsigma}\cdot\mat{B}(\nu)^{-1}\vec{\varsigma}\big)\, f\,d\vec{p}\,d\nu\,d\vec{\varsigma}}_{\text{kinetic energy}} \; + \; \underbrace{\tfrac{1}{2}\iint \mathcal{W}(\nu,\nu_*)\,\rho(\nu)\,\rho(\nu_*)\,d\nu\,d\nu_*}_{\text{interaction energy}},
\end{equation}
where $\rho$ is the \reviv{particle density in the order parameter of \cref{eq:conservative_force_generic}}. 
Provided the interaction potential $\mathcal{W}$ is symmetric, the dynamics strictly conserves $\mathcal{E}$ (\cref{prop:energy}).}
\begin{proposition}[Conservation of the total energy]
\label{prop:energy}
	\reviv{Let the group action $\mathcal{A}$ on the order-parameter manifold be transitive, and let $f$ be a solution of the spatially homogeneous form of \cref{eq:boltzmann_generic} with elastic collisions and with the conservative mean-field force \cref{eq:conservative_force_generic}, where the interaction potential is symmetric, $\mathcal{W}(\nu,\nu_*)=\mathcal{W}(\nu_*,\nu)$. Then the total energy \rev{\cref{eq:totalenergy_generic}} is conserved in time.}
\end{proposition}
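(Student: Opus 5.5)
We differentiate $\mathcal{E}(t)$ in time and treat the two parts of the spatially homogeneous equation separately, writing
\[
\partial_t f = -\mat{B}(\nu)^{-1}\vec{\varsigma}\cdot\nabla_\nu f - \mathcal{V}\cdot\nabla_{\vec{\varsigma}} f + \mathcal{C}[f,f].
\]
Let $e(\vec{p},\nu,\vec{\varsigma}) = \tfrac12\big(m^{-1}|\vec{p}|^2 + \vec{\varsigma}\cdot\mat{B}(\nu)^{-1}\vec{\varsigma}\big)$ denote the single-particle kinetic energy. Since $\mathcal{W}$ is symmetric, the time derivative of the interaction energy is
\[
\tfrac12\iint \mathcal{W}\,(\partial_t\rho\,\rho_* + \rho\,\partial_t\rho_*) = \int \Phi(\nu)\,\partial_t\rho(\nu)\,d\nu
= \int \Phi(\nu)\,\partial_t f\,d\vec{p}\,d\nu\,d\vec{\varsigma}.
\]
Hence $\tfrac{d}{dt}\mathcal{E} = \int (e+\Phi)\,\partial_t f$. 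This is the only place the symmetry of $\mathcal{W}$ enters, and it is what makes the factor $\tfrac12$ disappear.

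The collision contribution is $\int (e+\Phi)\,\mathcal{C}[f,f]$. Because $\Phi$ depends only on $\nu$, I expect the elastic collision rule to leave the order parameters $\nu,\nu_*$ unchanged and alter only the momenta. I would check this against the collision rule of \cite{carrillo2025kinetic}; it is plausibly where transitivity of $\mathcal{A}$ is used, to identify the collision invariants as functions of $\nu$ together with energy. Under this property, $\Phi(\nu)$ is a collision invariant, and so is $e$ by energy conservation in elastic collisions \cite[Theorem~3.11]{carrillo2025kinetic}. The standard weak form of $\mathcal{C}$ then gives $\int (e+\Phi)\,\mathcal{C}[f,f]=0$. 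If instead $\nu$ could change in a collision, I would fall back on the fact that $\int \psi(\nu)\,\mathcal{C}[f,f]=0$ whenever $\rho$ is unchanged by collisions, namely mass conservation at each $\nu$.

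For the transport part, integrate by parts on $\mathcal{M}\times$ momentum space, using compactness of $\mathcal{M}$ and decay in $\vec{\varsigma}$. Interpreting $\nabla_\nu\cdot$ with the Riemannian volume $d\nu$, the vector field $(\mat{B}^{-1}\vec{\varsigma},\mathcal{V})$ is Hamiltonian for $H=e+\Phi$. It is therefore divergence free in the canonical phase-space measure, and $\{H,H\}=0$ gives
\[
\int H\,\big(\mat{B}^{-1}\vec{\varsigma}\cdot\nabla_\nu f + \mathcal{V}\cdot\nabla_{\vec{\varsigma}}f\big)
= -\int f\,\big(\mat{B}^{-1}\vec{\varsigma}\cdot\nabla_\nu H - \nabla_\nu\Phi\cdot\nabla_{\vec{\varsigma}} H\big).
\]
Expanding the bracket, $\nabla_\nu\Phi\cdot\mat{B}^{-1}\vec{\varsigma}$ cancels against $-\nabla_\nu\Phi\cdot\nabla_{\vec{\varsigma}}e$. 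What remains is $\mat{B}^{-1}\vec{\varsigma}\cdot\nabla_\nu e$, the $\nu$-derivative of $\tfrac12\vec{\varsigma}\cdot\mat{B}(\nu)^{-1}\vec{\varsigma}$.

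The main obstacle is precisely this term. For the transport to preserve $H$, the equation must carry the geodesic force $-\nabla_\nu e$, or equivalently $\mat{B}$ must be $\nu$-independent in suitable coordinates. I would use transitivity here: $\mathcal{M}$ is then a homogeneous space, so I can choose (left-)invariant frames in which $\mat{B}$ is constant, as stated via \cite[eq.~2.18]{carrillo2025kinetic}. Then $\nabla_\nu e=0$ and the divergence terms vanish, so $\tfrac{d}{dt}\mathcal{E}=0$. In the calamitic case $\mathcal{M}=\mathbb{S}^1$ with constant $\mat{B}=I$, this is immediate.
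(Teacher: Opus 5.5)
Your proposal is correct and is essentially the paper's argument. Bundling $e+\Phi$ into a single $H$ merges the paper's two computations: the force term acting on the kinetic energy, and the continuity equation $\partial_t\rho+\nabla_\nu\cdot\mu=0$ acting on $\Phi$. It uses the same ingredients, namely the symmetry of $\mathcal{W}$, elastic collisions that conserve $e$ and leave $\nu$ fixed, and transitivity giving a constant $\mat{B}$ (the paper cites \cite[Proposition~2.17]{carrillo2025kinetic} for this).

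Two points should be tidied. Transitivity plays no role in the collision step. Your assertion that $(\mat{B}^{-1}\vec{\varsigma},\mathcal{V})$ is Hamiltonian for $H$, and hence divergence free, holds only once $\mat{B}$ is known to be constant, so invoke that fact at the outset, as the paper does.
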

\begin{proof}
	The action being transitive, the inertia tensor $\mat{B}$ is constant over $\mathcal{M}$ \cite[Proposition~2.17]{carrillo2025kinetic}, so the kinetic energy density $\psi(\vec{p},\vec{\varsigma})=\tfrac12(m^{-1}|\vec{p}|^2+\vec{\varsigma}\cdot\mat{B}^{-1}\vec{\varsigma})$ does not depend on $\nu$. We differentiate the two terms of $\mathcal{E}$ in time. The function $\psi$ is a collision invariant of the elastic collisions, $\int \psi\, \mathcal{C}[f,f]\,d\vec{p}\,d\nu\,d\vec{\varsigma} = 0$ \cite[Theorem~3.11]{carrillo2025kinetic}, and the collisions do not alter the order parameters, so they affect neither $\rho$ nor the interaction energy. It remains to account for the transport terms of \cref{eq:boltzmann_generic}. The streaming term contributes nothing to the kinetic energy: since $\psi$ does not depend on $\nu$, the integrand $\psi\,\mat{B}^{-1}\vec{\varsigma}\cdot\nabla_\nu f$ is an exact divergence on the closed manifold $\mathcal{M}$, on which the Hamiltonian streaming is incompressible, Hamiltonian flows preserving phase-space volume \cite[Theorem~8.3]{fasanoMarmi}, and its integral vanishes. The force term is integrated by parts in $\vec{\varsigma}$, the boundary term vanishing by the decay of $f$ at large momenta, so that
	\begin{equation*}
		\frac{d}{dt}\int \psi f = \int \nabla_{\vec{\varsigma}}\psi\cdot\mathcal{V}\,f = \int \big(\mat{B}^{-1}\vec{\varsigma}\big)\cdot\mathcal{V}(\nu)\, f = -\int_{\mathcal{M}} \nabla_\nu\Phi(\nu)\cdot\mu(\nu)\,d\nu,
	\end{equation*}
	where $\mu(\nu) = \int \mat{B}^{-1}\vec{\varsigma}\, f \,d\vec{p}\,d\vec{\varsigma}$ is the orientational flux. For the interaction energy, the product rule gives
	\begin{align*}
		\frac{d}{dt}\,\frac{1}{2}\iint \mathcal{W}\,\rho(\nu)\,\rho(\nu_*)
		&= \frac{1}{2}\iint \mathcal{W}\,\partial_t\rho(\nu)\,\rho(\nu_*) + \frac{1}{2}\iint \mathcal{W}\,\rho(\nu)\,\partial_t\rho(\nu_*) \\
		&= \iint \mathcal{W}(\nu,\nu_*)\,\partial_t\rho(\nu)\,\rho(\nu_*)\,d\nu\,d\nu_*,
	\end{align*}
	where the second equality follows by exchanging the names of the integration variables in the second term and using the symmetry of $\mathcal{W}$. Integrating \cref{eq:boltzmann_generic} in $\vec{p}$ and $\vec{\varsigma}$ yields the continuity equation
	\begin{equation*}
		\partial_t \rho + \nabla_\nu\cdot\mu = 0,
	\end{equation*}
	and therefore
	\begin{equation*}
		\iint \mathcal{W}(\nu,\nu_*)\,\partial_t\rho(\nu)\,\rho(\nu_*)\,d\nu\,d\nu_*
		= -\int_{\mathcal{M}} \Phi(\nu)\,\nabla_\nu\cdot\mu(\nu)\,d\nu = \int_{\mathcal{M}} \nabla_\nu\Phi(\nu)\cdot\mu(\nu)\,d\nu,
	\end{equation*}
	the integration by parts on the closed manifold carrying no boundary term. The two contributions cancel, hence $\frac{d}{dt}\mathcal{E}=0$.
\end{proof}

\rev{Throughout the paper we propose a DSMC algorithm for a generic order parameter manifold. For concreteness we consider throughout the case of two-dimensional calamitic molecules, developed in a sequence of \reviv{dedicated paragraphs}.}

\subsection{Two-dimensional calamitic molecules}
\label{sec:calamitic}
\rev{We recall the model of calamitic molecules \cite[Example~2.4]{carrillo2025kinetic}.} These are segments of vanishing girth, uniform length $L$ and mass $m$, whose orientation is described by the unit vector $\vec{\nu}=(\cos\theta,\sin\theta)$. The order-parameter manifold is the circle $\mathbb{S}^1$, parameterized by the angle $\theta\in[-\pi,\pi]$, the generalised inertia reduces to the scalar moment of inertia $I_3$, and the conjugate momentum is proportional to the angular velocity $\omega$. In these coordinates, and with a Maxwellian collision kernel, \cref{eq:boltzmann_generic} becomes, using the parameterization in \cite[Examples~2.4 and~3.19]{carrillo2025kinetic},
\begin{equation}
	\label{eq:Boltzmann2D_calamitic}
	\frac{\partial f}{\partial t} +\omega \partial_{\theta}f + \mathcal{V}\partial_{\omega}f =  \frac{1}{\tau}\int \!\!\!\!\int \!\!\!\!\int \left( f' f_*' - f f_*\right) dv_* d\theta_* d\omega_*,
\end{equation}
where $f=f(\vec{v},\theta,\omega,t)$, $f_*=f(\vec{v}_*,\theta_*,\omega_*,t)$, and $f',f_*'$ are the distributions depending on the post-interaction coordinates, while \reviv{$1/\tau>0$ is the collision frequency, which multiplies the collision operator after a rescaling of the time variable}.
\reviv{The Maxwellian kernel gives every pair of molecules the same collision rate, whatever their relative velocity and orientations, and it is the simplest kernel with which to present the scheme. The physical hard-needle kernel defined in \cite[Example~2.33]{carrillo2025kinetic} is treated in \cref{sec:kernels}, and \cref{sec:hardneedle} shows that the ordering dynamics does not depend on this choice.}
In the generic equation \cref{eq:boltzmann_generic} the conjugate momentum $\vec{\varsigma}$ is a vector of the tangent space of $\mathcal{M}$ at $\nu$, and it is on this vector that the numerical scheme acts. For two-dimensional calamitic molecules the order-parameter manifold is the circle, whose tangent bundle is globally trivial, $T\mathbb{S}^1 \cong \mathbb{S}^1\times\mathbb{R}$. The conjugate momentum is therefore described by a single coordinate in this \revii{setting}, proportional to the angular velocity $\omega$, and we treat it as a scalar throughout. Explicitly, the inertia tensor reduces to the scalar $\mat{B} = I_3\mat{I}$ and $\varsigma = I_3\,\omega$, so that the force appearing in \cref{eq:Boltzmann2D_calamitic} acts on the angular velocity and is therefore the angular \emph{acceleration}, the force on the conjugate momentum divided by $I_3$. For a general manifold the drift and kick maps introduced in \cref{sec:transport} below act on the vector $\vec{\varsigma}$, and it is the vector space structure of the tangent space that makes the kick step well defined.
We complement the equation with the initial condition $f(\vec{v},\theta,\omega,0) = f^0(\vec{v},\theta,\omega)$.

\begin{remark}
	To lighten the notation, in view of the presentation of the numerical scheme, we have introduced some differences with respect to the notation used previously \cite{carrillo2025kinetic}. We have dropped the subscript $1$ in the phase-space coordinates and replaced the subscript $2$ with the subscript $*$. Furthermore, we have dropped the subscript $1$ in the distribution function $f$.
\end{remark}

\section{A direct simulation Monte Carlo \texorpdfstring{\reviv{splitting}}{splitting} scheme}
\label{sec:scheme}
We now present the DSMC \reviv{splitting} scheme for the kinetic equation of ordered fluids. The construction is generic and applies to any order-parameter manifold. The manifold enters through the explicit form of the binary collision rule, through the differential structure needed to compute the mean-field force, which involves the gradient of the interaction potential on $\mathcal{M}$, and through the inertia tensor $\mat{B}(\nu)$. %

\subsection{Operator splitting}
\label{sec:splitting}
We discretize the time interval $[0,t_f]$, with $t_f>0$ the final simulation time, in steps of size $\Delta t>0$, so that $t^n=n\Delta t$, with $n\in\mathbb{N}$. \rev{By $f^n(\vec{v},\nu,\vec{\varsigma})$ we denote an approximation of $f(\vec{v},\nu,\vec{\varsigma},t^n)$,} and we apply a splitting method between the Vlasov-type transport operator and the collisional operator. First we solve the Vlasov-type step $\hat{f}=\mathcal{T}_{\Delta t}(f^n)$
\begin{equation}\label{eq:f_star}
	\left\lbrace
	\begin{aligned}
		&\rev{\frac{\partial \hat{f}}{\partial t} + \mat{B}(\nu)^{-1}\vec{\varsigma}\cdot\nabla_{\nu} \hat{f} + \mathcal{V}\cdot\nabla_{\vec{\varsigma}} \hat{f} = 0}\\
		&\rev{\hat{f}(\vec{v},\nu,\vec{\varsigma},0) = f^n(\vec{v},\nu,\vec{\varsigma})}
	\end{aligned}
	\right.
\end{equation}
and then we solve the collision step $\hathat{f}=\mathcal{Q}_{\Delta t}(\hat{f})$ with initial data given by the solution of the previous step
\begin{equation}\label{eq:f_star_star}
	\left\lbrace
	\begin{aligned}
		&\frac{\partial \hathat{f}}{\partial t}  = \frac{1}{\tau}\mathcal{C}[\hathat{f},\hathat{f}]\\
		&\rev{\hathat{f}(\vec{v},\nu,\vec{\varsigma},0) = \hat{f}(\vec{v},\nu,\vec{\varsigma},\Delta t).}
	\end{aligned}
	\right.
\end{equation}
\revii{The characteristics of \cref{eq:f_star} follow curves on the order-parameter manifold. \Cref{rem:manifold} explains how our discrete transport step stays precisely on this manifold.}

Instead of a simple first-order Lie splitting $f^{n+1}=\mathcal{Q}_{\Delta t}(\mathcal{T}_{\Delta t}(f^n))$, we use a symmetric \emph{Strang} splitting \cite{Strang1968}\revii{, i.e.,}
\begin{equation}\label{eq:strang}
\rev{f^{n+1} =\mathcal{T}_{\Delta t/2}\Big(\mathcal{Q}_{\Delta t}\big(\mathcal{T}_{\Delta t/2}(f^n)\big)\Big).}
\end{equation}
Here, the transport operator is applied over two half-steps that bracket a full collision step. Strang splitting achieves second-order accuracy when its substeps are exact. In our case, the transport substep $\mathcal{T}$ moves particles under the mean-field force via a second-order St\"ormer--Verlet scheme (\cref{sec:transport}). The collisional substep $\mathcal{Q}$, however, uses the Nanbu--Babovsky DSMC method \cite{nanbu,babovsky}\reviv{, or Bird's method \cite{bird} for a general kernel, and both are} only first-order accurate in time \cite{babovskyIllner}.
Consequently, our overall method is strictly first-order accurate. \rev{To regain second-order accuracy, one could substitute the collision substep with time-relaxed Monte Carlo schemes based on a truncated Wild expansion \cite{pareschi2001time}, though we do not pursue that here. Because this is the first Monte Carlo simulation of the kinetic theory \cite{carrillo2025kinetic}, we use a classical DSMC methodology in the spatially homogeneous setting to focus the presentation on the new issues that arise and their possible solutions.} We nevertheless keep the symmetric structure of \cref{eq:strang} because it is essential for total energy conservation, as discussed in \cref{sec:transport}. For higher-order splittings, see \cite{McLachlanQuispel2002} for a broad survey, and \cite{dimarco2015numerical,medaglia2023stochastic,medaglia2026dsmc} for applications in plasma.

\subsection{Particle approximation}
\label{sec:particles}
We approximate the distribution function with a sample of $N$ particles. At time $t$, each particle is identified by its velocity $\vec{v}_i(t)$, order parameter \rev{$\nu_i(t)$}, and conjugate momentum \rev{$\vec{\varsigma}_i(t)$}\reviv{, and we write $\vec{v}^n_i$, $\nu^n_i$, $\vec{\varsigma}^n_i$ for its state at time $t^n$}. As in \cite[equation~(3.61)]{carrillo2025kinetic}, this yields the empirical measure
\begin{equation}\label{eq:empirical}
\rev{f(\vec{v},\nu,\vec{\varsigma},t) \approx f^{N}(\vec{v},\nu,\vec{\varsigma},t) = \frac{1}{N}\sum_{i=1}^{N} \delta(\vec{v}-\vec{v}_i(t))\otimes \delta(\nu-\nu_i(t))\otimes\delta(\vec{\varsigma}-\vec{\varsigma}_i(t)).}
\end{equation}

\rev{To capture local macroscopic fields, we partition the order-parameter manifold into a mesh $\{I_l\}_{l=1}^{N_{\mathcal{M}}}$. On this mesh we accumulate the local moments of the sample: mass, mean velocity, mean conjugate momentum, and temperature.} \reviv{Each particle carries the weight $1/N$ of \cref{eq:empirical} and deposits it in the cell containing its order parameter.}
\revii{Following the equipartition theorem, a cell's temperature is defined as twice the mean kinetic energy of its particles (measured relative to the cell's mean velocity and conjugate momentum) divided by the $N_{\mathrm{dof}}$ kinetic degrees of freedom.}

\paragraph{Two-dimensional calamitic molecules}
\rev{The manifold is the circle, and we mesh the chart $[-\pi,+\pi]$ with $N_\theta$ cells $\{I_l\}_{l=1}^{N_\theta}$ of size $\Delta \theta=2\pi/N_\theta$.} \reviv{In this chart the state of particle $i$ at time $t^n$ is $(\vec{v}^n_i,\theta^n_i,\omega^n_i)$, with $\theta^n_i$ its angle and $\omega^n_i=\varsigma^n_i/I_3$ its angular velocity.} \rev{The $\theta$-local moments of the distribution then read}
\begin{equation}\label{eq:moments}
	\begin{split}
		\rho^n_{\theta,l} &= \frac{m}{N} \sum_{i=1}^N \reviv{\eta_{l}}(\theta^n_i), \qquad
		\Omega^n_{l} = \frac{m}{N\rho^n_{\theta,l}} \sum_{i=1}^N \omega^n_i\, \reviv{\eta_{l}}(\theta^n_i), \qquad
		\vec{U}^n_{l} = \frac{m}{N\rho^n_{\theta,l}} \sum_{i=1}^N \vec{v}^n_i\, \reviv{\eta_{l}}(\theta^n_i), \\
		T^n_l &= \frac{2}{N_{\mathrm{dof}}} \frac{m}{N\rho^n_{\theta,l}} \sum_{i=1}^N \frac{1}{2} \left( m|\vec{v}^n_i-\vec{U}^n_{l}|^2 + I_3 (\omega^n_i-\Omega^n_l)^2 \right) \reviv{\eta_{l}}(\theta^n_i), \qquad \reviv{l}=1, \dots, N_\theta.
	\end{split}
\end{equation}
\reviv{Here $\eta_{l}$ is the indicator of the cell $I_l$, so that \cref{eq:moments} are cell averages of the sample. Smoother shape functions produce smoother moments.} The factor $2/N_{\mathrm{dof}}$ in the local temperature $T^n_l$ reflects the $N_{\mathrm{dof}}$ kinetic degrees of freedom per molecule, enforcing the equipartition theorem. For a two-dimensional calamitic molecule, the two translational and one rotational velocities yield $N_{\mathrm{dof}}=3$.

\subsection{The transport step: symplectic integration and energy conservation}
\label{sec:transport}
The Vlasov transport step $\mathcal{T}_{\Delta t}(\cdot)$ is the push-forward of the distribution along the characteristics of \cref{eq:f_star}. At the particle level, this describes motion on the order-parameter manifold governed by:
\begin{equation}
	\label{eq:ODEs_2}
	\rev{\frac{d \nu_i}{d t}  = \mat{B}^{-1}\vec{\varsigma}_i,\qquad \frac{d \vec{\varsigma}_i}{d t}  = \mathcal{V}(\nu_i).}
\end{equation}
In this work, all considered mean-field potentials generate a Vlasov force \rev{$\mathcal{V}=\mathcal{V}(\nu)$ that depends solely on the order parameter}. We integrate this flow by splitting it into two exactly solvable components: a \emph{drift} $\mathcal{D}_{\delta}$ and a \emph{kick} $\mathcal{K}_{\delta}$. \rev{The drift advances the order parameter along a free trajectory on $\mathcal{M}$ while holding the conjugate momentum constant. Conversely, the kick updates the conjugate momentum while freezing the order parameter. As noted in \reviv{\cref{sec:calamitic}}, the kick is well defined because the tangent space is a vector space.}

We realize the transport substep through the symmetric St\"ormer--Verlet composition:
\begin{equation}\label{eq:verlet}
	\mathcal{T}_{\Delta t} \approx \mathcal{D}_{\Delta t/2}\,\mathcal{K}_{\Delta t}\,\mathcal{D}_{\Delta t/2}.
\end{equation}
Combining this with the Strang splitting \cref{eq:strang}, a full time step becomes:
\begin{equation}\label{eq:fullstep}
	f^{n+1} = \mathcal{D}_{\Delta t/4}\,\mathcal{K}_{\Delta t/2}\,\mathcal{D}_{\Delta t/4}\,\circ\,\mathcal{Q}_{\Delta t}\,\circ\,\mathcal{D}_{\Delta t/4}\,\mathcal{K}_{\Delta t/2}\,\mathcal{D}_{\Delta t/4}\,(f^n).
\end{equation}

\begin{remark}
\label[remark]{rem:manifold}
	\revii{Both halves of the transport substep keep the state on the manifold by construction. The kick acts on the conjugate momentum alone, at frozen order parameter, so it remains in the vector space $T_\nu\mathcal{M}$, as observed in \reviv{\cref{sec:calamitic}}. The drift is the free motion generated by the kinetic energy, whose solution is the geodesic of the metric defined by the inertia tensor, so the order parameter travels along a curve of $\mathcal{M}$ and the conjugate momentum is carried along it.
\end{remark}
\begin{remark}
Two implementations of the drift are available, with differing cost. One integrates the free motion in a chart of $\mathcal{M}$, which is the route we follow. For the circle the drift is the rotation of \cref{eq:driftkick} below, the exponential map of $\mathbb{S}^1$ written in the angular chart, so the substep is exact and the reduction modulo $2\pi$ is a change of representative in the chart and no projection is needed. The price of this route is that it requires an atlas, which is trivial for the circle but might be much more complex for a generic manifold. The alternative embeds $\mathcal{M}$ in a Euclidean space, takes an unconstrained step, and retracts the result onto $\mathcal{M}$. This route needs no atlas, but it costs a closest-point projection per particle and per time step, and the projection destroys the symplecticity of the substep unless the constraint is enforced by a method of RATTLE type \cite[Section~VII.1]{hairer2006geometric}. These two approaches correspond to the local-coordinate and projection methods for differential equations on manifolds described in \cite[Sections~IV.4 and~IV.5]{hairer2006geometric}.}
\end{remark}

All the mean-field forces considered in this work are conservative, meaning that they are generated by a potential\reviv{, as in \cref{eq:conservative_force_generic}}.

\paragraph{Two-dimensional calamitic molecules}
\rev{In the calamitic chart the equations of motion \cref{eq:ODEs_2} read $\dot{\theta}_i=\omega_i$ and $\dot{\omega}_i=\mathcal{V}(\theta_i)$, and the drift and the kick are explicit,}
\begin{equation}\label{eq:driftkick}
	\mathcal{D}_{\delta}:\ \theta_i \mapsto \theta_i + \omega_i\,\delta \!\!\pmod{2\pi}, \qquad
	\mathcal{K}_{\delta}:\ \omega_i \mapsto \omega_i + \mathcal{V}(\theta_i)\,\delta .
\end{equation}
\revii{The reduction modulo $2\pi$ is the periodicity of the angular chart and not a projection back onto the manifold, see \cref{rem:manifold}.}
\rev{The force of \cref{eq:conservative_force_generic} is the torque divided by the moment of inertia,}
\begin{equation}\label{eq:conservative_force}
	\mathcal{V}(\theta) = -I_3^{-1}\,\partial_\theta \Phi(\theta), \qquad \Phi(\theta) = (\mathcal{W}\ast\rho_\theta)(\theta) \revii{{}= \int_{-\pi}^{\pi}\mathcal{W}(\theta,\theta_*)\,\rho_\theta(\theta_*)\,d\theta_*},
\end{equation}
where $\mathcal{W}$ is the pairwise interaction potential and $\rho_\theta$ the orientational density. \revii{The chart being periodic, a translation-invariant kernel $\mathcal{W}(\theta,\theta_*)=\mathcal{W}(\theta-\theta_*)$ has to be $2\pi$-periodic for the convolution to be well defined on the circle, which is the case for all the kernels of \cref{sec:potentials}.} \rev{In all our tests $m=1$ and $I_3=1$, so that we simply write $\mathcal{V}=-\partial_\theta\Phi$. In the tests in which the length of the molecules enters the dynamics through the mean-field potential or the collision kernel we take $L=\sqrt{12}$, so that $I_3=mL^2/12=1$.}
\rev{The spatially homogeneous dynamics conserves the total energy \cref{eq:totalenergy_generic}, and the symplectic integration allows the fully discrete scheme to approximate this conservation, up to a small residual error.} In the tests of \cref{sec:onsager} this residual error stays below $0.6\%$ of the total energy over the full simulation horizon, both with the collisions switched on ($0.6\%$) and off ($0.4\%$). \rev{The residual conservation error has two sources. The mean-field force is reconstructed from the particle sample at every time step (\cref{sec:meanfieldforce}), so the particles do not follow the flow of a fixed Hamiltonian. Moreover, even for a fixed force, a symplectic integrator conserves a modified energy rather than the energy itself.}

The two-dimensional calamitic molecules of \cref{eq:Boltzmann2D_calamitic} satisfy the hypothesis of \cref{prop:energy}, the action of $SO(2)$ on the circle being transitive, and in their chart the conserved energy \cref{eq:totalenergy_generic} reads
\begin{equation}\label{eq:totalenergy}
	\mathcal{E} = \underbrace{\int \tfrac{1}{2}\big(|\vec{v}|^2 + I_3\,\omega^2\big)\, f\,d\vec{v}\,d\theta\,d\omega}_{\text{kinetic energy}} \; + \; \underbrace{\tfrac{1}{2}\iint \mathcal{W}(\theta,\theta_*)\,\rho_\theta(\theta)\,\rho_\theta(\theta_*)\,d\theta\,d\theta_*}_{\text{interaction energy}},
\end{equation}
where we have used $m=1$, and the orientational flux is the scalar $\mathcal{J}(\theta)=\int \omega f\,d\vec{v}\,d\omega$.

At the discrete level the two substeps respect the two terms of \cref{eq:totalenergy} separately. \rev{The Nanbu--Babovsky collision step introduced below in \cref{sec:collision}, \revii{with elastic collisions}, conserves} the kinetic energy exactly at every interaction, and \rev{leaves} the orientations and hence the interaction energy unchanged. In the absence of Vlasov forces the St\"ormer--Verlet integrator \cref{eq:verlet} is symplectic, so the quadratic kinetic energy error remains bounded and oscillatory over very long time intervals, and in particular it does not grow with the number of steps \cite{hairer2006geometric}.

\rev{For a quadratic energy, symplectic Runge--Kutta methods, such as the Gauss collocation methods, are exactly conservative, since they preserve every quadratic invariant of the flow \cite[Section~IV.2]{hairer2006geometric}. The St\"ormer--Verlet scheme is not a Runge--Kutta method, but rather a partitioned Runge--Kutta method, the two-stage Lobatto IIIA--IIIB pair \cite[Section~II.2]{hairer2006geometric}, and} \revii{symplectic partitioned Runge--Kutta methods do not preserve the kinetic energy $\tfrac{1}{2}\vec{\varsigma}\cdot\mat{B}^{-1}\vec{\varsigma}$.} Note that in our scheme the force itself is reconstructed from the particles at every step (\cref{sec:meanfieldforce}), so exact conservation cannot be expected even from a symplectic Runge--Kutta integrator. We opted for a second-order time discretization of \cref{eq:ODEs_2}, because a first-order one perturbs the energy at each step with a definite sign, producing a drift that grows linearly in time. Over the long horizons needed to reach the ordered steady state, this drift would overwhelm the physical effect we seek to capture, namely the rise in temperature that accompanies the emergence of orientational order, as interaction energy is converted into kinetic energy at constant $\mathcal{E}$.

To illustrate the conservation properties of different transport integrators, we perform a brief numerical experiment (\cref{fig:integrators}). Using the transport substep of \rev{the hard-needle simulation from Test~5 \revii{(described in detail in \cref{sec:hardneedle})}}, we compare four schemes: the explicit and symplectic Euler methods (first order), the symmetric St\"ormer--Verlet scheme \cref{eq:verlet} (second order), and Ruth's symplectic integrator \cite{Ruth1983} (third order). Over a horizon of $t=500$, both first-order schemes drift by about a factor of two in total energy, spuriously heating the system. In contrast, the St\"ormer--Verlet scheme maintains the energy plateau within a few percent. The third-order integrator offers only marginal improvement over St\"ormer--Verlet, reproducing similar physics. This confirms that second-order time-stepping accuracy is sufficient to prevent numerical errors from polluting the physical results. For a broader view of symplectic integration, see \cite{SanzSerna1992,LeimkuhlerReich2004,FengQin2010,BlanesCasas2016}; for integrating Hamiltonian dynamics on manifolds, see \cite[Chapter~IV]{hairer2006geometric} and \cite[Chapters~7 and~8]{LeimkuhlerReich2004}.

\begin{figure}[tb]
	\centering
	\subfloat[{full horizon, $t\in[0,500]$}]{\includegraphics[width=0.8\textwidth]{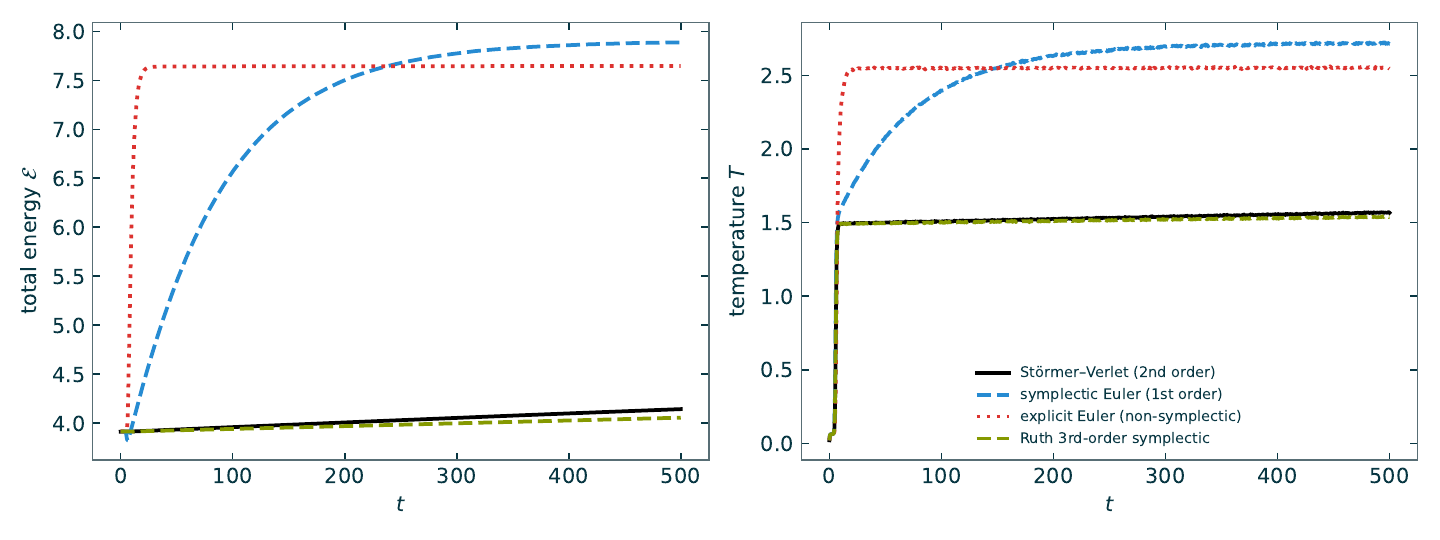}}\\
	\subfloat[{early times, $t\in[0,50]$}]{\includegraphics[width=0.8\textwidth]{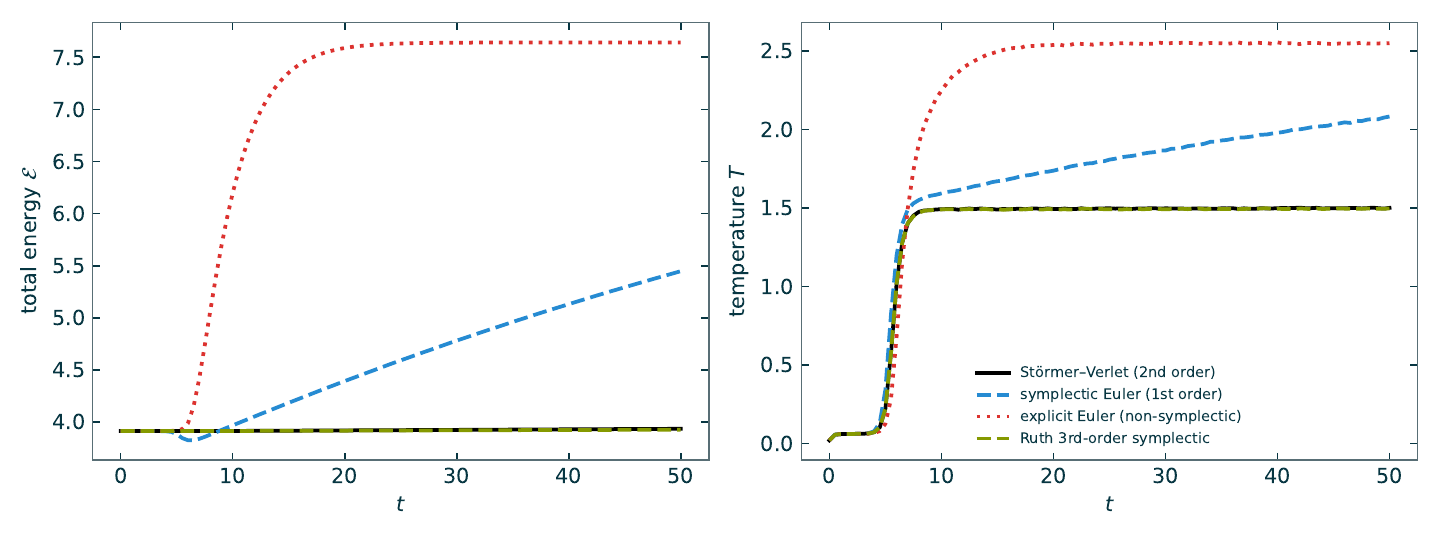}}
	\caption{Total energy and temperature histories of a DSMC simulation of \eqref{eq:Boltzmann2D_calamitic} with hard-needle collision kernel and Onsager potential. The transport substep is integrated by the explicit Euler, symplectic Euler, St\"ormer--Verlet and third-order Ruth schemes. The first-order schemes drift and spuriously heat the system, while the second- and third-order schemes hold the energy plateau and produce the same physics.}
	\label{fig:integrators}
\end{figure}

\subsection{Reconstruction of the mean-field force}
\label{sec:meanfieldforce}
\rev{In the classical DSMC method for \reviv{neutral} monatomic gases the transport step is a free streaming, and it requires no information beyond the velocity of the particle itself.} \reviv{For charged particles the transport is instead driven by the self-consistent field, computed from the particles by the particle-in-cell methods of plasma physics \cite{hockneyEastwood,birdsallLangdon}.}
\rev{In our scheme the transport step is \reviv{likewise} driven by a mean-field force that depends on the unknown distribution.} 
Consequently, the force must be \emph{reconstructed} from the particle ensemble at every time step. This reconstruction is a central piece of our discretization. \revii{If the force only depends on $f$ through a few sample moments, we simply accumulate those moments.} \rev{If the force involves the convolution \cref{eq:conservative_force_generic}, we must reconstruct the full \reviv{particle density} $\rho$ on the mesh of $\mathcal{M}$ defined in \cref{sec:particles}. Borrowing from the particle-in-cell literature \cite{hockneyEastwood,birdsallLangdon}, we mesh the manifold, reconstruct $\rho$ using shape functions that form a partition of unity, evaluate the convolution $\Phi=\mathcal{W}\ast\rho$ by quadrature on the mesh, and finally differentiate the grid potential to recover the force.}

\paragraph{Two-dimensional calamitic molecules}
For the phenomenological quadratic and Kuramoto potentials of \cref{sec:potentials} the force depends on $f$ only through the mean orientation of the sample. In this case the reconstruction amounts to computing the circular mean of the particle orientations,
\begin{equation}\label{eq:circmean}
	\hat{\theta} = \arctan\big(\hat{\nu}^{(y)},\hat{\nu}^{(x)}\big), \qquad \hat{\vec{\nu}}=\sum_{i=1}^N \vec{\nu}_i, \qquad \vec{\nu}_i = (\cos\theta_i,\sin\theta_i),
\end{equation}
at a cost of $\mathcal{O}(N)$ per time step.

For a general interaction potential the force is given by \cref{eq:conservative_force} and the full orientational density $\rho_\theta$ must be reconstructed. We use a cloud-in-cell (CIC) estimator on the grid $\{I_l\}$ of \cref{sec:particles}: each particle distributes its \reviv{weight $1/N$} linearly between the two cells nearest to its orientation, with periodic wrapping at the ends of the grid. The CIC estimator is the second member of a hierarchy of reconstructions by shape functions of increasing smoothness, whose first member is the raw histogram. We choose it because it yields a continuous, piecewise-linear approximation of $\rho_\theta$ with a lower statistical noise than the histogram, while touching only two cells per particle. Reconstructions of this kind are a classical tool of the particle-in-cell methods of plasma physics \cite{hockneyEastwood,birdsallLangdon}. The convolution $\Phi=\mathcal{W}\ast\rho_\theta$ is then computed as a matrix--vector product on the grid, and the force on a particle in the cell $I_l$ is the difference quotient of the grid potential between the two adjacent cells, $-I_3^{-1}(\Phi_{l+1}-\Phi_l)/\Delta\theta$. The cost of this step is $\mathcal{O}(N_\theta^2)$ per time step, independent of the number $N$ of particles, and dominates the cost of the transport substep.

\subsection{Nanbu--Babovsky collisions}
\label{sec:collision}

The physical binary collisions are solved with the Nanbu--Babovsky DSMC approach \cite{nanbu,babovsky}. Babovsky and Illner \cite{babovskyIllner} established its convergence to the solution of the Boltzmann equation; Pareschi and Toscani \cite{pareschi2013interacting} and Pareschi and Russo \cite{pareschi2001time} give systematic accounts. We first describe the method for the Maxwellian kernel, for which all pairs of molecules are equally likely to collide and the collision operator \rev{reads $\mathcal{C}[f,f]=\int (f'f_*'-ff_*)\,d\Xi_*$, where $\Xi=(\vec{v},\nu,\vec{\varsigma})$ collects the phase-space coordinates of a molecule, the subscript $*$ denotes the collision partner, the primes denote the post-collisional states, and the calamitic instance is the right-hand side of \cref{eq:Boltzmann2D_calamitic}}. In this case we can rewrite the collisional operator to highlight its gain and loss parts, by integrating the loss term,
\begin{equation}
	\frac{\partial{f}}{\partial t}  =\rev{\frac{1}{\tau}\int f' f_*' \,d\Xi_* - \frac{1}{\tau}f,}
\end{equation}
and we discretize the time derivative with a first-order Euler scheme to obtain
\begin{equation}\label{eq:collision_convex}
f^{n+1} = \left(1-\frac{\Delta t}{\tau}\right)f^n + \frac{\Delta t}{\tau} C^{+}[f^n,f^n],\qquad C^{+}[f,f]\coloneqq \rev{\int f' f_*' \,d\Xi_*.}
\end{equation}
Since $f^{n+1}$ is a convex combination of two probability density functions\reviv{, provided $\Delta t\leq\tau$}, it is itself a probability density function, and we can apply the classical DSMC probabilistic interpretation.
With probability $1-\Delta t/\tau$ a particle remains unchanged, while with probability $\Delta t/\tau$ it undergoes a binary interaction.

We stress that this rewriting hinges on the Maxwellian kernel. For a general transition probability the loss term is $f\int \mathcal{W}(\Xi,\Xi_*)\,f_*\,d\Xi_*$, whose coefficient depends on the state of the molecule, and the convex-combination structure of \cref{eq:collision_convex} is lost. For hard top-symmetric molecules, Curtiss and Dahler obtained a gain--loss splitting of the collision operator by exploiting the symmetry of the molecules \cite{curtiss,curtissV}. For a generic bounded kernel the standard remedy is to bound the kernel from above and to introduce fictitious (null) collisions that leave the particles unchanged \cite{koura,bird}. This is the route we follow for the physical hard-needle kernel, and we derive it in \cref{sec:kernels}.

\paragraph{Two-dimensional calamitic molecules}
For two-dimensional calamitic molecules the binary interaction follows the collision rule in \cite[Example~2.33]{carrillo2025kinetic}, i.e.
\begin{alignat}{2}
\label{eq:binary_rule_vij} \vec{v}'_i &= \vec{v}_i - (1+e_v)\frac{J}{m} \vec{n},\qquad && \vec{v}'_j  = \vec{v}_j + (1+e_v)\frac{J}{m} \vec{n}, \\
\label{eq:binary_rule_omegaij} \omega'_i  &= \omega_i - (1+e_\omega)J  I^{-1}_3 (\vec{r}_i\times \vec{n}), \qquad &&\omega'_j  = \omega_j + (1+e_\omega)J  I^{-1}_3 (\vec{r}_j\times \vec{n}),
\end{alignat}
with impulse
\begin{equation}\label{eq:binary_rule_J}
J = -\dfrac{V \cdot \vec{n}}{\frac{2}{m} + \left[ I_3^{-1} (\vec{r}_i\times \vec{n})^2 + I^{-1}_3 (\vec{r}_j\times \vec{n})^2 \right]},
\end{equation}
where $\vec{n}$ is the contact normal, $\vec{r}_i,\vec{r}_j$ the contact arms (see \cref{fig:collision}), $V$ the relative contact velocity, $e_v,e_\omega\in[0,1]$ the translational and rotational coefficients of restitution, and $I_3=mL^2/12$ is the moment of inertia of a segment of length $L$ rotating about its center of mass. The orientations are unchanged by the collision, $\theta'=\theta$ and $\theta_*'=\theta_*$. We take $e_v=e_\omega=1$ throughout, so that the collisions are elastic and conserve linear momentum, angular momentum, and energy exactly. \rev{In \cref{algorithm} the collision parameters of this model are sampled as $\psi\sim\mathcal{U}([0,2\pi])$ and $\ell\sim\mathcal{U}(\revvi{[-L/2,L/2]})$, fixing the contact normal $\vec{n}=[\cos\psi,\sin\psi]$ and the contact arms $\vec{r}_i=\ell\vec{\nu}_i$ and $\vec{r}_j=\revvi{\tfrac{L}{2}}\vec{\nu}_j$, with $\vec{\nu}_k=[\cos\theta_k,\sin\theta_k]$.}

\begin{remark}
\label[remark]{rem:arms}
	Two segments of vanishing girth in general position never overlap along a set of positive length, so the first contact between two approaching molecules occurs when an endpoint of one meets the interior of the other. \revvi{The endpoints of a segment of length $L$ lie at distance $L/2$ from its centre.} We therefore place the contact \rev{on the axis of molecule $j$ at distance \revvi{$L/2$} from its centre}, so that $\vec{r}_j=\revvi{\tfrac{L}{2}}\vec{\nu}_j$, and \rev{on the axis of molecule $i$ at a distance drawn uniformly}, so that $\vec{r}_i=\ell\vec{\nu}_i$ with $\ell\sim\mathcal{U}(\revvi{[-L/2,L/2]})$. The resulting rule is asymmetric in the two molecules, but no molecule is privileged by this choice, since the pairs $(i,j)$ are drawn uniformly at every step, so that each molecule takes either role with equal probability.
\end{remark}

The Nanbu--Babovsky procedure is summarized in \cref{algorithm}. The number $N_c$ of interacting pairs per step is set by the collision frequency through a stochastic rounding $\mathrm{Sround}$, which rounds a positive real $x$ to $\lfloor x\rfloor+1$ with probability $x-\lfloor x\rfloor$ and to $\lfloor x\rfloor$ otherwise, so that $\mathbb{E}[\mathrm{Sround}(x)]=x$.

\begin{figure}[htb]
	\centering
	\begin{minipage}{.9\linewidth}
		\begin{algorithm}[H]
			\footnotesize
			\caption{\small \rev{Nanbu--Babovsky algorithm for a generic order-parameter manifold} } \label{algorithm}
				\begin{algorithmic}
					\STATE{Compute the initial \rev{state $\{\vec{v}^0_i,\,\nu^0_i,\,\vec{\varsigma}^0_i\}_{i=1}^N$} by sampling from the initial distribution \rev{$f^0(\vec{v},\nu,\vec{\varsigma})$};}
					\FOR{$n=1$ to $t_f/\Delta t$, given \rev{$\{\vec{v}^n_i,\,\nu^n_i,\,\vec{\varsigma}^n_i\}_{i=1}^N$}}
						\STATE{set $N_c=\textrm{Sround}(N\Delta t / 2 \tau)$, number of interacting pairs;}
						\STATE{select $N_c$ interacting pairs $(i,j)$ uniformly among all the possible ones;}
						\FOR{every pair \rev{$\Xi^n_i$--$\Xi^n_j$, with $\Xi=(\vec{v},\nu,\vec{\varsigma})$}}
							\STATE{\rev{sample the collision parameters of the molecular model, for the calamitic molecules the contact normal $\vec{n}$ and the contact arms $\vec{r}_i$, $\vec{r}_j$ of \cref{fig:collision};}}
							\STATE{\rev{compute the post-collisional state by the binary collision rule of the model, for the calamitic molecules \cref{eq:binary_rule_vij,eq:binary_rule_omegaij} with the impulse \cref{eq:binary_rule_J};}}
							\STATE{set $\vec{v}^{n+1}_i=\vec{v}'_i$, \rev{$\vec{\varsigma}^{n+1}_i=\vec{\varsigma}'_i$}, $\vec{v}^{n+1}_j=\vec{v}'_j$, and \rev{$\vec{\varsigma}^{n+1}_j=\vec{\varsigma}'_j$};}
						\ENDFOR
						\STATE{set $\vec{v}^{n+1}_i=\vec{v}^n_i$ and \rev{$\vec{\varsigma}^{n+1}_i=\vec{\varsigma}^n_i$} for all the particles that have not been collided;}
						\STATE{set \rev{$\nu^{n+1}_i=\nu^n_i$} for all the particles $i=1,\dots,N$;}
					\ENDFOR
				\end{algorithmic}
		\end{algorithm}
	\end{minipage}
\caption*{}
\end{figure}

\begin{figure}[htb]
	\centering
	\begin{tikzpicture}
		\begin{pgfonlayer}{nodelayer}
			\node [style=none] (0) at (1, 3.75) {};
			\node [style=none] (1) at (-2.5, -1.5) {};
			\node [style=none] (2) at (1.25, 2.25) {};
			\node [style=none] (3) at (1.25, -3.75) {};
			\node [style=none] (4) at (-0.75, 1) {};
			\node [style=none] (5) at (1.25, -0.75) {};
			\node [style=none] (6) at (1.25, -0.75) {};
			\node [style=none] (7) at (-1.5, -2.3) {};
			\node [style=none] (8) at (2.2, 4.36) {};
			\node [style=none] (9) at (0.75, 1.75) {};
			\node [style=none] (10) at (0.25, 2) {};
			\node [style=none] (11) at (0.55, 2.1) {$\color{verde}\vec{n}$};
			\node [style=none] (12) at (0.1, 2.8) {$\vec{\nu}_i$};
			\node [style=none] (13) at (1.55, 1.5) {$\vec{\nu}_j$};
			\node [style=none] (14) at (0.5, 3) {};
			\node [style=none] (15) at (1.25, 1.75) {};
			\node [style=none] (16) at (0, 1.1) {$\color{rosso}\vec{r}_i$};
			\node [style=none] (17) at (0.75, 0.5) {$\color{rosso}\vec{r}_j$};
		\end{pgfonlayer}
		\begin{pgfonlayer}{edgelayer}
			\draw [bend left=270, looseness=0.50] (0.center) to (1.center);
			\draw [bend left=90, looseness=0.50] (0.center) to (1.center);
			\draw [bend right=90, looseness=0.50] (2.center) to (3.center);
			\draw [bend left=90, looseness=0.50] (2.center) to (3.center);
			\draw [style=GreenDashed] (7.center) to (8.center);
			\draw [style=greenArrow] (9.center) to (10.center);
			\draw [style=ArrowRed] (4.center) to (9.center);
			\draw [style=ArrowRed] (6.center) to (9.center);
			\draw [style=Arrow] (4.center) to (14.center);
			\draw [style=Arrow] (6.center) to (15.center);
		\end{pgfonlayer}
	\end{tikzpicture}
	\caption{Collision of two calamitic molecules in two dimensions. The contact normal $\vec{n}$ and the contact arms $\vec{r}_i$, $\vec{r}_j$ determine the impulse $J$ of \cref{eq:binary_rule_J} and the post-collisional velocities \cref{eq:binary_rule_vij,eq:binary_rule_omegaij}.}
	\label{fig:collision}
\end{figure}

\subsection{Collision kernels and Bird's method}
\label{sec:kernels}
\Cref{algorithm} uses a Maxwellian kernel, where all pairs are equally likely to collide. \rev{For a general molecule, the collision operator \cite{carrillo2025kinetic} takes the form proposed by Cercignani and Lampis \cite{cercignaniLampis}, allowing forward and reverse transition probabilities to differ:
\begin{equation}\label{eq:collision_operator_generic}
	\mathcal{C}[f,f](\Xi) = \int \Big[ W(\Xi',\Xi_*'\mapsto\Xi,\Xi_*)\,f'f_*' - W(\Xi,\Xi_*\mapsto\Xi',\Xi_*')\,ff_* \Big]\, d\Xi_*\,d\Xi'\,d\Xi_*'.
\end{equation}
Integrating the forward transition probability over all post-collisional states yields a pair's total collision rate:
\begin{equation}\label{eq:pair_rate}
	\beta(\Xi,\Xi_*) = \int W(\Xi,\Xi_*\mapsto\Xi',\Xi_*')\,d\Xi'\,d\Xi_*'.
\end{equation}
Because this loss coefficient $\beta$ depends on the molecular state, the convex-combination structure of \cref{eq:collision_convex} breaks down.} 

To restore this structure, we use Bird's acceptance--rejection technique \cite{bird}, which relies on Koura's null-collision principle \cite{koura}. Although Bird's approach predates Nanbu and Babovsky's, its convergence to the Boltzmann equation solution was rigorously established later \cite{Wagner1992}. \rev{By bounding the pair rate from above ($\beta \leq w_{\max}$), adding and subtracting $w_{\max}\,f f_*$ in the loss term, and noting that $f_*$ integrates to one, we obtain:
\begin{equation}\label{eq:bird_split}
	\rev{\frac{\partial f}{\partial t} = \int W(\Xi',\Xi_*'\mapsto\Xi,\Xi_*)\, f' f_*'\, d\Xi_*\,d\Xi'\,d\Xi_*' + \int \big(w_{\max}-\beta\big)\, f f_*\, d\Xi_* - w_{\max} f.}
\end{equation}
The loss term $-w_{\max}f$ now carries a constant coefficient $w_{\max}$ that is independent of the molecule's state. The gain term splits into two non-negative parts: true collisions (weighted by the forward transition probability) and null collisions (weighted by $w_{\max}-\beta$, which leave the pair unchanged). Together, these gain terms integrate exactly to $w_{\max}$.} \revii{Applying a first-order Euler discretization in time thus recovers a convex combination of two probability density functions,
\begin{equation}\label{eq:bird_convex}
	f^{n+1} = \left(1-w_{\max}\Delta t\right) f^n + w_{\max}\Delta t\; C^{+}_{\max}[f^n,f^n],
\end{equation}
where the normalised gain
\begin{equation}\label{eq:bird_gain}
	C^{+}_{\max}[f,f] = \frac{1}{w_{\max}}\int \Big[ \int W(\Xi',\Xi_*'\mapsto\Xi,\Xi_*)\, f' f_*'\, d\Xi'\,d\Xi_*' + \big(w_{\max}-\beta\big)\, f f_* \Big]\, d\Xi_*
\end{equation}
is itself a probability density, and the combination is valid for $\Delta t\leq 1/w_{\max}$. This is \cref{eq:collision_convex} with $1/\tau$ replaced by $w_{\max}$, and it carries the same probabilistic interpretation. With probability $1-w_{\max}\Delta t$ a particle is left unchanged, and with probability $w_{\max}\Delta t$ it takes part in a candidate interaction.} \rev{The right-hand side describes an equivalent dynamics in which every pair collides with the constant frequency $w_{\max}$. Each of these collisions is a true collision with probability $\beta/w_{\max}$, in which case the post-collisional state is drawn from the normalised transition density $W(\Xi,\Xi_*\mapsto\cdot)/\beta(\Xi,\Xi_*)$, and a \emph{null} collision otherwise. The scheme samples only the forward transition probability, so no symmetry between the forward and the reverse transition probabilities is required. The reciprocity principle \cite{carrillo2025kinetic} relates these probabilities and enters the $H$-theorem, not the algorithm.} \reviv{This acceptance--rejection step is where the scheme differs most from the classical DSMC method, since the transition probability need not be symmetric.}

\paragraph{Two-dimensional calamitic molecules}
The physical kernel of two-dimensional calamitic molecules \cite[Example~2.33]{carrillo2025kinetic} is the hard-needle kernel $W(\Xi,\Xi_*)=|\vec{V}\cdot\vec{n}|\,S(\theta,\theta_*)$, with cross-section $S=L\,|\sin(\theta-\theta_*)|$, for which nearly parallel rods have a vanishing cross-section. \rev{Here $W(\Xi,\Xi_*)$ is the density of the forward transition probability over the impact parameters $(\psi,\ell)$ of \cref{algorithm}, the post-collisional state being their deterministic outcome \cref{eq:binary_rule_vij,eq:binary_rule_omegaij}, so the pair rate \cref{eq:pair_rate} is the average of $W$ over the impact parameters and a pointwise bound $W\leq w_{\max}$ bounds $\beta$.} In practice one draws $\mathrm{Sround}(w_{\max} N \Delta t/2)$ candidate pairs, \rev{samples the impact parameters uniformly,} evaluates the kernel $w=|\vec{V}\cdot\vec{n}|\,L|\sin(\theta-\theta_*)|$ on each candidate, and accepts the collision with probability $w/w_{\max}$. \rev{The acceptance realises simultaneously the choice between a true and a null collision, with probability $\beta/w_{\max}$, and the sampling of the impact parameters with density proportional to $w$, a standard property of acceptance--rejection sampling.} The bound $w_{\max}$ is adjusted adaptively during the simulation.
The procedure is summarized in \cref{algorithm:bird}, and parallels \cref{algorithm} with the acceptance--rejection step added. 

\begin{figure}[htb]
	\centering
	\begin{minipage}{.9\linewidth}
		\begin{algorithm}[H]
			\footnotesize
			\caption{\small \rev{Bird's method for a generic bounded collision kernel} } \label{algorithm:bird}
				\begin{algorithmic}
					\STATE{Compute the initial \rev{state $\{\vec{v}^0_i,\,\nu^0_i,\,\vec{\varsigma}^0_i\}_{i=1}^N$} by sampling from the initial distribution \rev{$f^0(\vec{v},\nu,\vec{\varsigma})$}, and initialize the bound $w_{\max}$;}
					\FOR{$n=1$ to $t_f/\Delta t$, given \rev{$\{\vec{v}^n_i,\,\nu^n_i,\,\vec{\varsigma}^n_i\}_{i=1}^N$}}
						\STATE{set $N_c=\mathrm{Sround}(w_{\max} N \Delta t / 2)$, number of candidate pairs;}
						\STATE{select $N_c$ candidate pairs $(i,j)$ uniformly among all the possible ones;}
						\FOR{every candidate pair \rev{$\Xi^n_i$--$\Xi^n_j$}}
							\STATE{\rev{sample the collision parameters of the model, as in \cref{algorithm};}}
							\STATE{evaluate the kernel \rev{$w$ on the sampled parameters, for the hard needles $w=|\vec{V}\cdot\vec{n}|\,L\,|\sin(\theta_i-\theta_j)|$,} and draw $u\sim\mathcal{U}([0,1])$;}
							\IF{$u<w/w_{\max}$}
								\STATE{\rev{perform the collision according to the rule of the model, for the calamitic molecules \cref{eq:binary_rule_vij,eq:binary_rule_omegaij} with the impulse \cref{eq:binary_rule_J};}}
							\ELSE
								\STATE{leave the pair unchanged (null collision);}
							\ENDIF
						\ENDFOR
						\STATE{leave all the non-selected particles unchanged, and set \rev{$\nu^{n+1}_i=\nu^n_i$} for all $i=1,\dots,N$;}
						\STATE{update the bound $w_{\max}$ adaptively from the kernel values sampled during the step;}
					\ENDFOR
				\end{algorithmic}
		\end{algorithm}
	\end{minipage}
\caption*{}
\end{figure}

\subsection{A handed exchange rule}
\label{sec:handedrule}
\rev{We now construct a collision rule for calamitic molecules whose forward and reverse transition probabilities genuinely differ. We test it numerically in \cref{sec:handed}. Given a selected pair, sample a direction $\vec{n}=[\cos\psi,\sin\psi]$ with $\psi\sim\mathcal{U}([0,2\pi])$\revvi{, replaced by $-\vec{n}$ if $\vec{n}\cdot(\vec{\nu}_i+\vec{\nu}_j)<0$\footnote{We neglect the case $\vec{\nu}_i=-\vec{\nu}_j$, which corresponds to a configuration with two anti-parallel nematic molecules. This is not considered in our collision model since it is not possible to define a unique collision point. Since such configurations occur with probability zero, we expect neglecting this configuration to have no qualitative implications in our simulations.} so that the direction lies in the half circle fixed by the pair,} and a \emph{one-sided} angle $\chi\sim\mathcal{U}([0,\pi])$, and rotate
\begin{equation}\label{eq:handed_rule}
	\begin{pmatrix} a \\ b \end{pmatrix} =
	\begin{pmatrix} \sqrt{m/4}\;\, \vec{g}\cdot\vec{n} \\ \reviii{\sqrt{I_3/4}\;\,(\omega_i-\omega_j)} \end{pmatrix}
	\;\mapsto\;
	\begin{pmatrix} \cos\chi & -\sin\chi \\ \sin\chi & \phantom{-}\cos\chi \end{pmatrix}
	\begin{pmatrix} a \\ b \end{pmatrix},
\end{equation}
where $\vec{g}=\vec{v}_i-\vec{v}_j$ is the relative velocity.} \reviii{Denoting by $\Delta$ the increment produced by the rotation, the pair is updated by
\begin{equation}\label{eq:handed_update}
	\vec{v}_i' = \vec{v}_i + \tfrac{1}{2}\Delta(\vec{g}\cdot\vec{n})\,\vec{n}, \quad
	\vec{v}_j' = \vec{v}_j - \tfrac{1}{2}\Delta(\vec{g}\cdot\vec{n})\,\vec{n}, \quad
	\omega_i' = \omega_i + \tfrac{1}{2}\Delta\omega, \quad
	\omega_j' = \omega_j - \tfrac{1}{2}\Delta\omega,
\end{equation}
where $\Delta\omega=\Delta(\omega_i-\omega_j)$, the tangential component of $\vec{g}$ and the orientations being left unchanged. The quantity
\begin{equation}\label{eq:handed_energy}
	a^2+b^2 = \frac{m}{4}\,(\vec{g}\cdot\vec{n})^2 + \frac{I_3}{4}\,(\omega_i-\omega_j)^2
\end{equation}
is the part of the kinetic energy of the pair carried by the normal component of the relative velocity and by the relative spin. The rotation preserves it and leaves the mean velocity of the pair, its tangential relative velocity, and its mean spin untouched, so every such map conserves the linear momentum, the total spin $I_3(\omega_i+\omega_j)$, and the kinetic energy of the pair exactly. The generalised angular momentum defined in \cite[Theorem~3.11]{carrillo2025kinetic} is conserved as well.} \rev{Each map is a measure-preserving bijection of the pair phase space, applied at the constant Maxwellian rate $1/\tau$, so both sides of the reciprocity principle \cite{carrillo2025kinetic} equal $1/\tau$ and the $H$-theorem applies. \revvi{The restriction on $\vec{n}$ involves the order parameters alone, which the collision leaves unchanged, so the same half circle is available before and after the collision.} However, detailed balance fails. Undoing a collision requires the rotation by $-\chi$. \revvi{Keeping the angle and reversing the direction does exactly that, and it is what the restriction forbids.}}

\subsection{The BGK operator}
\label{sec:bgk}
\reviv{In the strongly collisional regime we can replace the binary collisions with the BGK approximation} of the collision operator \cite{bgk}, which relaxes the distribution towards the local Maxwellian \reviv{$\bar{f}_{\vec{U},\Omega,T}$} carrying the same collision invariants as $f$\reviv{, with $\vec{U}$, $\Omega$ and $T$ the mean velocity, the mean conjugate momentum and the temperature of $f$},
\begin{equation}\label{eq:bgk_operator}
	\mathcal{Q}_{\mathrm{BGK}}[f] = \frac{1}{\tau_{\mathrm{r}}}\big(\reviv{\bar{f}_{\vec{U},\Omega,T}}-f\big),
\end{equation}
with $1/\tau_{\mathrm{r}}>0$ the relaxation frequency. We substitute \cref{eq:bgk_operator} into the collisional substep $\partial_t f = \mathcal{Q}_{\mathrm{BGK}}[f]$, freeze the Maxwellian over the time step, and discretize the time derivative with a first-order Euler scheme. This gives
\begin{equation}\label{eq:bgk_convex}
	f^{n+1} = \left(1-\frac{\Delta t}{\tau_{\mathrm{r}}}\right)f^n + \frac{\Delta t}{\tau_{\mathrm{r}}}\,\reviv{\bar{f}_{\vec{U}^n,\Omega^n,T^n}}.
\end{equation}
Since $f^{n+1}$ is a convex combination of two probability density functions, provided $\Delta t\leq\tau_{\mathrm{r}}$, it is itself a probability density function, and it carries the DSMC probabilistic interpretation \reviv{of \cref{eq:collision_convex}}. With probability $1-\Delta t/\tau_{\mathrm{r}}$ a particle is left unchanged, and with probability $\Delta t/\tau_{\mathrm{r}}$ its \rev{velocity and conjugate momentum are resampled from the corresponding Maxwellian $\reviv{\bar{f}_{\vec{U}^n,\Omega^n,T^n}}$, the order parameters} being left unchanged. The procedure is summarized in \cref{algorithm:bgk}.

The collisions in \cref{sec:collision} are instantaneous momentum exchanges at the point of contact that leave orientations unchanged. This has two key consequences. 

First, any function of the orientation alone acts as a collision invariant, in addition to the four invariants identified in \cite{carrillo2025kinetic}. As a result, the collision operator pointwise conserves the entire orientational density. Retaining the orientations in \cref{eq:bgk_convex} essentially relaxes $f$ toward the product of its \rev{order-parameter} marginal and the velocity Gaussians\rev{, which in the calamitic chart of \cref{sec:calamitic} reads}:
\begin{equation}\label{eq:bgk_target}
	\reviv{\bar{f}_{\vec{U},\Omega,T}}(\vec{v},\theta,\omega) = \rho_\theta(\theta)\,\frac{m}{2\pi T}\exp\left(-\frac{m|\vec{v}-\vec{U}|^2}{2T}\right)\sqrt{\frac{I_3}{2\pi T}}\exp\left(-\frac{I_3(\omega-\Omega)^2}{2T}\right).
\end{equation}

Second, these Gaussians must capture \emph{every} moment conserved by the collisions. In particular, failing to include the mean angular velocity $\Omega$ would spuriously damp the sample's mean spin at rate $1/\tau_{\mathrm{r}}$. This formulation is a classic maximum-entropy closure: $\bar{f}$ maximizes entropy among all distributions that preserve the required moments. In this light, our ansatz is the simplest member of established kinetic-model hierarchies and moment closures \cite{grossJackson1959,levermore,levermoreMorokoff}.

The main drawback of using a single relaxation time is that all moments of $f$ decay at identical rates. Consequently, the Chapman--Enskog transport coefficients are artificially coupled, leading to a Prandtl number of one rather than the $2/3$ expected for a monatomic gas \cite[Section~3.1]{cercignani}. %

\begin{figure}[htb]
	\centering
	\begin{minipage}{.9\linewidth}
		\begin{algorithm}[H]
			\footnotesize
			\caption{\small{BGK collision step for the order-parameter dynamics} } \label{algorithm:bgk}
				\begin{algorithmic}
					\STATE{Compute the initial \rev{state $\{\vec{v}^0_i,\,\nu^0_i,\,\vec{\varsigma}^0_i\}_{i=1}^N$} by sampling from the initial distribution \rev{$f^0(\vec{v},\nu,\vec{\varsigma})$};}
					\FOR{$n=1$ to $t_f/\Delta t$, given \rev{$\{\vec{v}^n_i,\,\nu^n_i,\,\vec{\varsigma}^n_i\}_{i=1}^N$}}
						\STATE{compute the mass, the mean velocity, the \rev{mean conjugate momentum}, and the temperature of the sample and form the Maxwellian $\reviv{\bar{f}_{\vec{U}^n,\Omega^n,T^n}}$ of \eqref{eq:bgk_target};}
						\FOR{every particle $i=1,\dots,N$}
							\STATE{sample a uniform random number $u_i\sim\mathcal{U}([0,1])$;}
							\IF{$u_i<\Delta t/\tau_{\mathrm{r}}$}
								\STATE{\rev{resample $(\vec{v}^{n+1}_i,\,\vec{\varsigma}^{n+1}_i)$ from the Maxwellian $\reviv{\bar{f}_{\vec{U}^n,\Omega^n,T^n}}$ in the velocity and conjugate-momentum variables;}}
							\ELSE
								\STATE{set $\vec{v}^{n+1}_i=\vec{v}^n_i$ and \rev{$\vec{\varsigma}^{n+1}_i=\vec{\varsigma}^n_i$};}
							\ENDIF
							\STATE{set \rev{$\nu^{n+1}_i=\nu^n_i$};}
						\ENDFOR
					\ENDFOR
				\end{algorithmic}
		\end{algorithm}
	\end{minipage}
\caption*{}
\end{figure}

The same BGK collision step is the mechanism behind the thermostat of \cref{sec:thermostat}. There the Maxwellian is held fixed at the bath temperature $T_{\mathrm{bath}}$ and the frequency is $1/\tau_{\mathrm{bath}}$, so that the resampling drives the system towards the canonical equilibrium instead of conserving the energy.

\subsection{Isothermal simulations: a BGK/Andersen thermostat}
\label{sec:thermostat}
Because\linebreak both the collision operator and transport conserve total energy at the continuous level \cref{eq:totalenergy}, the scheme naturally simulates the microcanonical (constant-energy) ensemble. To simulate the canonical (isothermal) ensemble and trace the model's phase diagram as the temperature $T_{\mathrm{bath}}$ is varied, we introduce an Andersen thermostat \cite{andersen}. In this approach, each particle independently undergoes a ``bath collision'' with frequency $1/\tau_{\mathrm{bath}}$. During such a collision, the particle's velocities are resampled from a Maxwellian at $T_{\mathrm{bath}}$:
\begin{equation}\label{eq:thermostat}
	\vec{v}\sim\mathcal{N}\!\big(0,\sqrt{T_{\mathrm{bath}}/m}\,\big), \qquad \omega\sim\mathcal{N}\!\big(0,\sqrt{T_{\mathrm{bath}}/I_3}\,\big),
\end{equation}
while its orientation remains unchanged. This resampling occurs with probability $\Delta t/\tau_{\mathrm{bath}}$ per time step. Conceptually, this is the particle realization of a \reviv{BGK} relaxation \cite{bgk}, $\mathcal{Q}_{\mathrm{bath}}[f]=\tau_{\mathrm{bath}}^{-1}(\bar{f}_{T_{\mathrm{bath}}}-f)$, driving the system toward the canonical equilibrium at $T_{\mathrm{bath}}$. We apply this thermostat once per time step, immediately following the collision substep in \cref{eq:fullstep}.

\section{Mean-field potentials}
\label{sec:potentials}

The mean-field force $\mathcal{V}$ \revjac{of \cref{eq:boltzmann_generic}, that is of \cref{eq:Boltzmann2D_calamitic} for the two dimensional calamitic molecules,} is determined by the interaction potential $\mathcal{W}$\revii{, through the convolution \cref{eq:conservative_force_generic}. On a generic order-parameter manifold $\mathcal{W}$ is a function on $\mathcal{M}\times\mathcal{M}$. \revjac{On the circle it has to be $2\pi$-periodic in each argument, as observed in \cref{sec:transport}.} If $\mathcal{W}$ is symmetric, $\mathcal{W}(\nu,\nu_*)=\mathcal{W}(\nu_*,\nu)$, then the total energy is conserved by \cref{prop:energy}. If $\mathcal{W}$ is invariant under the group action, $\mathcal{W}(\mathcal{A}_{\reviii{\mat{Q}}}\nu,\mathcal{A}_{\reviii{\mat{Q}}}\nu_*)=\mathcal{W}(\nu,\nu_*)$ for every rotation $\reviii{\mat{Q}}$, then no order parameter is privileged and any order generated by the group action $\mathcal{A}$ starting from a solution $\nu_*$ can be achieved.}

\revii{When the kernel takes the form of a finite sum of separable terms,
\begin{equation}\label{eq:separable_kernel}
	\mathcal{W}(\nu,\nu_*) = \sum_{k=1}^{K} c_k\,\varphi_k(\nu)\,\varphi_k(\nu_*),
\end{equation}
the mean-field potential $\Phi=\mathcal{W}\ast\rho$ depends entirely on the $K$ moments $\int_{\mathcal{M}}\varphi_k\,\rho\,d\nu$ of \reviv{the density} $\rho$. In this case, we can reconstruct the force simply by accumulating those sample moments, which costs only $\mathcal{O}(N)$ per time step. For generic, non-separable kernels, however, we must reconstruct the full density on a mesh of $\mathcal{M}$ and evaluate the convolution via numerical quadrature.}

We consider three choices of potential for the running example of two-dimensional calamitic molecules, of increasing physical fidelity. Each favor alignment of the molecules. The first two are phenomenological forces frequently used to model synchronization and alignment. The third is the excluded-volume potential of Onsager's theory of liquid crystals \cite{onsager}.
\subsection{Quadratic potential}
\label{sec:quadratic}
The harmonic mean-field potential \cite[Example~3.21]{carrillo2025kinetic}
\begin{equation}\label{eq:quadratic_potential}
	\mathcal{W}(\theta,\omega) = \tfrac12\alpha\big|\theta-\hat{\theta}\big|^2 + \beta\,\omega\theta
\end{equation}
yields the linear restoring force
\begin{equation}\label{eq:quadratic_force}
	\mathcal{V}(\theta,\omega) = -\alpha\left(\theta-\hat{\theta}\right) - \beta\,\omega,
\end{equation}
with $\alpha>0$ the interaction strength and $\beta\geq 0$ a friction. The characteristics \cref{eq:ODEs_2} are then linear, with Jacobian eigenvalues $\mu_{1,2}=\tfrac12\big(-\beta\pm\sqrt{\beta^2-4\alpha}\big)$, so that the fixed point $(\hat{\theta},0)$ is a center for $\beta=0$ and $\alpha>0$, a saddle for $\alpha<0$, a stable spiral for $\beta>0$ and $\alpha>\beta^2/4$, and a stable node for $\beta>0$ and $\alpha\in(0,\beta^2/4)$. In the tests of \cref{sec:alignment} we take the frictionless case $\beta=0$ and $\alpha=1$, the damping being supplied by the collisions rather than by the force. Unlike the two potentials below, \cref{eq:quadratic_potential} is posed directly at the mean-field level. It is not generated by a symmetric pair kernel on the circle, since the term $|\theta-\hat{\theta}|^2$ depends on the chosen chart. \revjac{Nor is it periodic. It makes sense on the circle only after a periodic extension, whose force jumps at the cut of the chart, and we evaluate the force \cref{eq:quadratic_force} in the chart, so molecules on the far side of the cut are pushed toward the mean orientation the long way round.} As a result, the energy conservation of \cref{prop:energy} does not apply, and we do not track interaction energy during these tests. \revii{Because the force only depends on the sample through $\hat{\theta}$, its reconstruction uses the cheaper method from \cref{sec:meanfieldforce}.}

\subsection{Kuramoto potential}
\label{sec:kuramoto}
The Kuramoto model of coupled oscillators defines the interaction energy between a pair of molecules via the attractive \emph{pair kernel}:
\begin{equation}\label{eq:cosine_potential}
	\mathcal{W}(\theta,\theta_*) = -\cos(\theta-\theta_*).
\end{equation}
Averaging this kernel against the empirical density yields the \emph{mean-field potential}. By the cosine addition formula, this simplifies to $-R\cos(\theta-\hat{\theta})$, where $R\,e^{i\hat{\theta}} = N^{-1}\sum_i e^{i\theta_i}$ is the sample's resultant (\cref{eq:circmean}). Differentiating this potential gives the \emph{Kuramoto force}, $-R\sin(\theta-\hat{\theta})$. Throughout this paper, we reserve the term ``Kuramoto potential'' strictly for the pair kernel \cref{eq:cosine_potential}. In our numerical tests, however, we use a frozen-amplitude version of the force where the resultant length $R$ is replaced by a constant interaction strength of one:
\begin{equation}\label{eq:cosine_force}
	\mathcal{V}(\theta) = -\sin\left(\theta-\hat{\theta}\right).
\end{equation}
Because this force depends solely on the sample's mean orientation, we reconstruct it using only the circular mean \cref{eq:circmean}. \revii{This computational simplification works precisely because the pair kernel \cref{eq:cosine_potential} is separable ($-\cos(\theta-\theta_*) = -\cos\theta\cos\theta_* - \sin\theta\sin\theta_*$), providing an instance of \cref{eq:separable_kernel} with $K=2$.}

\subsection{Onsager potential}
\label{sec:onsagerpotential}
The Onsager excluded-volume potential of two-dimensional needles \cite{onsager,virgaOnsager},
\begin{equation}\label{eq:onsager}
	\mathcal{W}(\theta,\theta_*) = L^2\,|\sin(\theta-\theta_*)|,
\end{equation}
assigns minimum interaction energy to parallel and anti-parallel rods, for which the excluded volume vanishes, and thus favors alignment along a common, head--tail symmetric director.
\revii{This kernel is not a finite sum of separable terms, its Fourier expansion carrying every even harmonic, so it is the case in which the whole orientational density has to be reconstructed on the grid.}

At temperature $T$, the orientational marginal of the canonical equilibrium of the mean-field dynamics satisfies the Onsager self-consistency equation \cite{onsager,fatkullinSlastikov}:
\begin{equation}\label{eq:selfconsistency}
	\rho(\theta) = \frac{1}{Z}\exp\left(-\frac{(\mathcal{W}\ast\rho)(\theta)}{T}\right), \qquad Z = \int_{-\pi}^{\pi}\exp\left(-\frac{(\mathcal{W}\ast\rho)(\theta)}{T}\right)d\theta.
\end{equation}
Here, the unit-mass density $\rho$ represents $\rho_\theta$. \revjac{\Cref{eq:selfconsistency} holds only at equilibrium. The steady states of \cref{eq:Boltzmann2D_calamitic} with the thermostat of \cref{sec:thermostat} have Maxwellian marginals in $\vec{v}$ and $\omega$ at the temperature $T_{\mathrm{bath}}$, and an orientational marginal $\rho_\theta$ that solves \cref{eq:selfconsistency} \cite[Section~4(b)]{farrellMalekSoucekZerbinatiLC2026}.} \Cref{eq:selfconsistency} is simply the standard Gibbs ansatz from classical mean-field liquid crystal theory \cite{onsager,virgaOnsager}, formulated variationally as in \cite{farrellMalekSoucekZerbinatiLC2026}. Because collisions thermalize the velocities without altering the orientations, the final orientational equilibrium stems strictly from a thermodynamic competition between mean-field energy (driving alignment) and orientational entropy (driving disorder). While the completely isotropic state $\rho_0=1/(2\pi)$ trivially solves \cref{eq:selfconsistency} at all temperatures, it is not always stable. The following proposition performs a linear stability analysis on \cref{eq:selfconsistency}---paralleling Fatkullin \& Slastikov's study of the Onsager functional \cite{fatkullinSlastikov}---to identify the exact temperature where the isotropic state loses stability.

\begin{proposition}
\label{prop:critical}
	\reviv{Let $\mathcal{W}$ be the Onsager potential \cref{eq:onsager}. The isotropic state $\rho_0 = 1/(2\pi)$ is a linearly stable solution of \cref{eq:selfconsistency} if and only if
	\begin{equation}\label{eq:Tc}
		T > T_c = \frac{2L^2}{3\pi},
	\end{equation}
	and the mode that destabilizes it at $T=T_c$ is the nematic, head--tail symmetric mode $\cos(2(\theta-\theta_0))$.}
\end{proposition}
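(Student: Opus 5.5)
We perturb the isotropic state, linearize \cref{eq:selfconsistency}, and diagonalize the linearization in the Fourier basis of the circle. Write $\rho=\rho_0+\varepsilon h$ with $\int_{-\pi}^{\pi}h\,d\theta=0$, so that mass is preserved. Since $\mathcal{W}(\theta,\theta_*)=L^2|\sin(\theta-\theta_*)|$ is translation invariant, $\mathcal{W}\ast\rho_0$ is a constant and cancels against $Z$. Expanding the exponential and the normalization to first order in $\varepsilon$ gives the linearized equation
\begin{equation*}
	h = -\frac{\rho_0}{T}\Big(\mathcal{W}\ast h - \frac{1}{2\pi}\int_{-\pi}^{\pi}\mathcal{W}\ast h\,d\theta\Big) = -\frac{1}{2\pi T}\,\mathcal{W}\ast h ,
\end{equation*}
where the mean term vanishes because $h$ has zero mean. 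Convolution with a translation-invariant kernel is diagonal on $e^{im\theta}$. The linearized map is therefore $h\mapsto \mathcal{L}h$ with eigenvalues $-\lambda_m/(2\pi T)$, where $\lambda_m=\int_{-\pi}^{\pi}\mathcal{W}(x)e^{-imx}\,dx$ for $m\neq 0$.

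The computational core is the Fourier series of $|\sin x|$, which I will derive by direct integration:
\begin{equation*}
	|\sin x| = \frac{2}{\pi}-\frac{4}{\pi}\sum_{k\ge1}\frac{\cos(2kx)}{4k^2-1}.
\end{equation*}
This gives $\lambda_m=0$ for odd $m$ and $\lambda_{\pm 2k}=-4L^2/(4k^2-1)$. Hence every nonzero eigenvalue of $\mathcal{L}$ is positive and equal to
\begin{equation*}
	\frac{2L^2}{\pi T(4k^2-1)}, \qquad k\ge 1.
\end{equation*}
These eigenvalues decrease in $k$, so the largest one is $2L^2/(3\pi T)$, attained only on the modes $e^{\pm 2i\theta}$, that is, on $\cos(2(\theta-\theta_0))$ with arbitrary phase $\theta_0$.

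It remains to fix the meaning of linear stability, and I expect this to be the only delicate point. I will use two equivalent notions. The first is that the self-consistency map $\rho\mapsto Z^{-1}\exp(-(\mathcal{W}\ast\rho)/T)$ is linearly stable at $\rho_0$ when the spectral radius of its derivative $\mathcal{L}$ is less than one. The second, the variational formulation of \cite{farrellMalekSoucekZerbinatiLC2026,fatkullinSlastikov}, asks that the second variation of the free energy $T\int\rho\log\rho+\tfrac12\iint\mathcal{W}\rho\rho$ be positive definite. At $\rho_0$ this second variation is
\begin{equation*}
	\sum_{m}2\pi\,|\hat h_m|^2\big(2\pi T+\lambda_m\big),
\end{equation*}
which is positive exactly when $1-\mathcal{L}$ has positive spectrum, so the two criteria agree. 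Both reduce to $2L^2/(3\pi T)<1$, that is, $T>T_c=2L^2/(3\pi)$. At $T=T_c$ the eigenvalue one is attained only by the $k=1$ mode, which is the nematic, head--tail symmetric mode $\cos(2(\theta-\theta_0))$. All odd modes have eigenvalue zero, so polar order is never destabilized.

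As a consistency check, $L=\sqrt{12}$ gives $T_c=8/\pi$, the critical temperature quoted in the introduction.
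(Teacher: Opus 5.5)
Your proposal is correct and follows essentially the same route as the paper's proof: a zero-mean perturbation of $\rho_0$, linearization of the self-consistency equation to $h=-(\rho_0/T)\,\mathcal{W}\ast h$, diagonalization via the Fourier series of $|\sin x|$ giving the multipliers $2L^2/(\pi T(4k^2-1))$, and reduction to the $k=1$ multiplier being below one. Your explicit Parseval computation of the second variation of the free energy makes precise the equivalence with the variational criterion, which the paper only asserts by citation; its agreement with your spectral-radius criterion relies on the multipliers being nonnegative, which you established.
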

\begin{proof}
	The kernel is even and $\pi$-periodic, so its Fourier expansion contains only even cosine harmonics,
	\begin{equation*}
		L^2|\sin\varphi| = L^2\left(\frac{2}{\pi} - \frac{4}{\pi}\sum_{k\geq 1}\frac{\cos(2k\varphi)}{4k^2-1}\right),
	\end{equation*}
	and, \revjac{since the kernel depends only on $\theta-\theta_*$,} the convolution acts diagonally on the Fourier modes \revjac{for every $\theta_0$}, $\mathcal{W}\ast\cos(2k(\,\cdot\,\revjac{-\theta_0})) = w_{2k}\cos(2k(\theta\revjac{-\theta_0}))$ with
	\begin{equation*}
		w_{2k} = \int_{-\pi}^{\pi} L^2|\sin\varphi|\cos(2k\varphi)\,d\varphi = -\frac{4L^2}{4k^2-1}\revjac{, \qquad k\geq 0}.
	\end{equation*}
	\revjac{In particular $w_0 = 4L^2$.} The odd Fourier coefficients vanish, $w_{2k+1}=0$, because $|\sin\varphi|$ has period $\pi$. A perturbation along an odd harmonic therefore produces no change in the mean-field potential $\mathcal{W}\ast g$ at first order, generates no force, and neither grows nor decays. In this sense polar perturbations are stable. \revjac{We now linearize \cref{eq:selfconsistency} around the unique constant steady state with unitary mass $\rho_0=1/(2\pi)$. Let $\rho = \rho_0 + \varepsilon\, g$ with $\int_{-\pi}^{\pi} g\,d\theta=0$. The convolution is linear, so $\mathcal{W}\ast\rho = c + \varepsilon\,\mathcal{W}\ast g$ with the constant $c=\rho_0 w_0 = 2L^2/\pi$, and $\int_{-\pi}^{\pi}\mathcal{W}\ast g\,d\theta = w_0\int_{-\pi}^{\pi} g\,d\theta = 0$. Expanding the exponential to first order in $\varepsilon$ gives}
	\begin{align*}
		\revjac{e^{-(\mathcal{W}\ast\rho)/T}} &\revjac{= e^{-c/T}\Big(1 - \frac{\varepsilon}{T}\,\mathcal{W}\ast g\Big) + O(\varepsilon^2),} \\
		\revjac{Z} &\revjac{= 2\pi\, e^{-c/T} + O(\varepsilon^2),} \\
		\revjac{\frac{e^{-(\mathcal{W}\ast\rho)/T}}{Z}} &\revjac{= \rho_0 - \varepsilon\,\frac{\rho_0}{T}\,\mathcal{W}\ast g + O(\varepsilon^2).}
	\end{align*}
	\revjac{The factor $e^{-c/T}$ cancels in the quotient, so the constant part of the potential plays no role. Matching the first-order terms with $\rho=\rho_0+\varepsilon g$ yields the linearized equation}
	\begin{equation*}
		g = -\frac{\rho_0}{T}\,\mathcal{W}\ast g.
	\end{equation*}
	\revjac{The linearized map $g\mapsto -(\rho_0/T)\,\mathcal{W}\ast g$ is diagonal in this basis. It annihilates the odd harmonics and multiplies $\cos(2k(\theta-\theta_0))$ by}
	\begin{equation*}
		\revjac{\lambda_{2k} = -\frac{\rho_0\, w_{2k}}{T} = \frac{2L^2}{\pi(4k^2-1)\,T}, \qquad k\geq 1,}
	\end{equation*}
	\revjac{for every $\theta_0$. The isotropic state is linearly stable when every multiplier is below one, which is also the condition for the second variation of the Onsager free energy \cite{fatkullinSlastikov} at $\rho_0$ to be positive. Since $\lambda_{2k}$ decreases with $k$, this reduces to $\lambda_2<1$, that is $T>2L^2/(3\pi)=T_c$. At $T=T_c$ the multiplier of $\cos(2(\theta-\theta_0))$ equals one for every $\theta_0$, so the nematic mode becomes unstable, with the director $\theta_0$ left undetermined by the rotational symmetry.}
\end{proof}

\section{Numerical tests \texorpdfstring{\revii{for two-dimensional calamitic molecules}}{for two-dimensional calamitic molecules}}
\label{sec:experiments}
We validate the scheme on a sequence of tests with two-dimensional calamitic molecules\revjac{, that is on \cref{eq:Boltzmann2D_calamitic}}. Unless otherwise stated we use $N=10^7$ particles, molecular mass $m=1$, \rev{moment of inertia $I_3=1$}, elastic collisions ($e_v=e_\omega=1$), time step $\Delta t=0.05$, and final time $t_f=50$, corresponding to $10^3$ time steps. \rev{In the tests driven by the Onsager potential (\cref{sec:onsager,sec:transition,sec:hardneedle}) the rod length is $L=\sqrt{12}$, consistently with $I_3=mL^2/12=1$, and it sets the strength of the potential and of the hard-needle cross-section. In the tests of \cref{sec:thermalization,sec:alignment} the length enters the dynamics only through the contact arms of the collision rule \revii{of \cref{rem:arms}}, and there we take unit length.} The orientational grid uses $N_\theta=256$ cells. All the tests use the Maxwellian kernel, except the hard-needle test of \cref{sec:hardneedle}, which uses the hard-needle kernel of \cref{sec:kernels}\rev{, and the test of \cref{sec:handed}, which uses the handed exchange rule of \cref{sec:handedrule}}. The rescaled collision frequency is $1/\tau=10$ in the thermalization tests and $1/\tau=4$ in the mean-field tests. \revjac{Videos of the time evolution of the $(v_x,v_y)$ and $(\theta,\omega)$ densities of every test are available at \cite{carrilloFarrellMedagliaZerbinatiVideos2026}.}

Measuring the emergence of order on a manifold requires some care. We borrow the relevant notions from directional statistics \cite{mardiaJupp}. The fundamental difficulty is already visible on the circle: two molecules pointing at $1^\circ$ and $359^\circ$ are almost aligned, yet their arithmetic mean is $180^\circ$. Furthermore, this mean changes if we redefine the branch cut of the angular chart. Averages of chart coordinates are therefore meaningless. Instead, we must average the \emph{embedded} sample, which means averaging the unit vectors themselves.

For circular data, the natural moments are those of the random vector $(\cos\theta,\sin\theta)$. The mean resultant length, $R_1 = |N^{-1}\sum_i e^{i\theta_i}|$, measures polar order. The circular variance, $1-R_1$, replaces the linear variance $\mathrm{Var}(\theta)$, which is ill-defined on a periodic domain. To understand why $1-R_1$ acts as a variance, notice that the average $N^{-1}\sum_i e^{i\theta_i}$ always lies inside the unit disc. If all angles coincide, the unit vectors add coherently, the average sits exactly on the unit circle, and $1-R_1=0$. As the sample spreads out, the unit vectors cancel each other, the average retreats toward the center of the disc, and $1-R_1$ grows toward $1$ (its value for a uniform distribution). The distance of the resultant from the unit circle thus plays the exact same role as the spread about the mean on the real line. For a highly concentrated sample, this identification is quantitative: expanding $e^{i(\theta_i-\bar\theta)}$ about the mean direction $\bar\theta$ yields $1-R_1 \approx \tfrac{1}{2}N^{-1}\sum_i(\theta_i-\bar\theta)^2$, which is half the linear variance.

Calamitic molecules, however, are axes rather than directed vectors. Because $\theta$ and $\theta+\pi$ represent the exact same physical configuration, their orientations actually live on the real projective line $\mathbb{RP}^1$. Consequently, any polar statistic will spuriously average to zero in a head--tail symmetric nematic phase. The standard workaround is the angle-doubling map $\theta\mapsto 2\theta$, which smoothly identifies $\mathbb{RP}^1$ with $\mathbb{S}^1$. Thus, the order parameter we monitor throughout our tests is the circular variance of these doubled angles:
\begin{equation}\label{eq:circvar}
	\sigma^2(\theta) = 1-R_2, \qquad R_2 = \Big|\frac{1}{N}\sum_{i=1}^N e^{2i\theta_i}\Big|.
\end{equation}
This variance equals $1$ in a perfectly isotropic phase and reduces to $0$ as the rods perfectly align along a common director. Note that $R_2$, the mean resultant length of the doubled angles, exactly matches the standard two-dimensional nematic scalar order parameter (i.e., twice the positive eigenvalue of the $Q$-tensor $Q = \langle \vec{\nu}\otimes\vec{\nu}\rangle - \tfrac12 \mathrm{Id}$ \rev{\cite[Chapter~2]{degennes}}). Analogous geometric mappings exist for the sphere and higher-dimensional projective spaces \cite{mardiaJupp}. \reviv{By the $v_x$ and $v_y$ marginal of $f$ we mean the density obtained by integrating $f$ in all the variables except $v_x$ or $v_y$ respectively, and likewise for the other marginals, while the $(\theta,\omega)$ density is the marginal in the pair $(\theta,\omega)$.}

\subsection{Test 1: Thermalization and the \texorpdfstring{$H$}{H}-theorem}
\label{sec:thermalization}
We begin by deactivating the mean-field potential ($\mathcal{V}=0$) in \eqref{eq:Boltzmann2D_calamitic}. The $H$-theorem predicts that the system described by \eqref{eq:Boltzmann2D_calamitic} relaxes to the Maxwellian distribution \cite[Section~3.4]{carrillo2025kinetic}:
\begin{equation}\label{eq:maxwellian_local}
	f^\infty(\vec{v},\theta,\omega) = \rho^{\infty}(\theta) \frac{\sqrt{I_3}}{(2\pi T)^{3/2}} \exp\left( - \frac{|\vec{v} - \vec{U}|^2}{2T} \right) \exp\left( - \frac{I_3( \omega - \Omega)^2}{2T} \right).
\end{equation}
Here $T$, $\vec{U}$, $\Omega$, and $\rho^\infty(\theta)$ are the equilibrium temperature, mean velocity, mean angular velocity, and mass fraction, respectively. The initial datum's energy, momentum, and angular momentum strictly determine the first three parameters, which remain independent of $\theta$. Because the collision operator couples all orientations $\theta_*$, the collision integral only vanishes once these intensive parameters fully equilibrate across the entire angular domain. Only the mass fraction $\rho^\infty(\theta)$ remains an arbitrary function of $\theta$, as collisions do not exchange orientations.
We take a factorised initial datum $f^0(\vec{v}, \theta, \omega) = f^0(\vec{v})f^0(\theta)f^0(\omega)$ with
\begin{equation}\label{eq:ic}
f^0(\vec{v}) \propto \chi_{[-1/2,\,1/2]^2}(\vec{v}), \quad f^0(\theta) \propto \chi_{[-\pi,\,\pi]}(\theta), \quad f^0(\omega) \propto \chi_{[-1/4,\,1/4]}(\omega),
\end{equation}
\rev{This provides a non-equilibrium initial datum, far from any Gaussian, whose relaxation is easy to visualize, and the unequal translational and rotational temperatures, $T_{\mathrm{tr}}=1/12$ and $T_{\mathrm{rot}}=1/48$, let the collisions redistribute the energy between \revii{the translational and the rotational components of the temperature}.}

We test two variants of the scheme: one with the transport step active, and one without. In both runs, the system successfully relaxes to \cref{eq:maxwellian_local}. Without a mean-field potential, the orientational distribution stays uniform. The two variants yield indistinguishable results, precisely as expected. When the potential is zero, the transport substep merely induces free rotation, leaving the uniform orientational distribution and all its marginals completely unchanged. We therefore plot only a single set of results (the variant without transport) in \cref{fig:test01}. This illustrates the numerical counterpart of the $H$-theorem \cite{carrillo2025kinetic}: relaxation is strictly monotone and targets the Maxwellian.

\begin{figure}[tb]
	\centering
	\subfloat[$v_y$ marginal, $t=0$]{\includegraphics[width=0.32\textwidth]{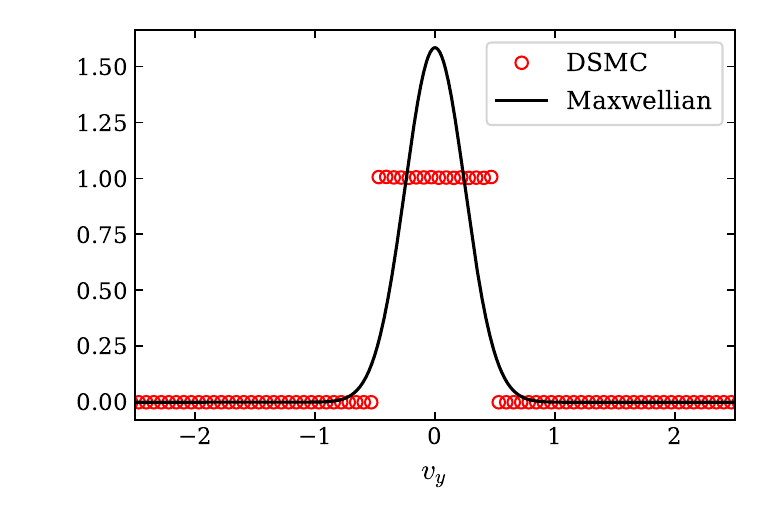}}\hfill
	\subfloat[$v_y$ marginal, $t=5$]{\includegraphics[width=0.32\textwidth]{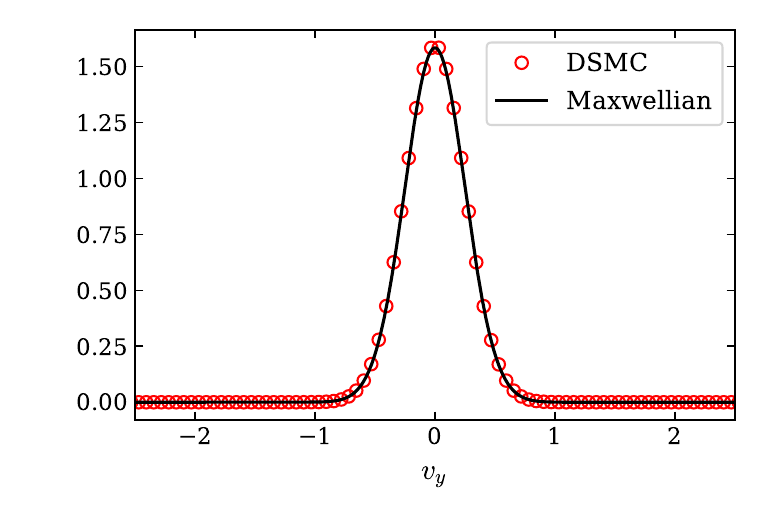}}\hfill
	\subfloat[$v_y$ marginal, $t=50$]{\includegraphics[width=0.32\textwidth]{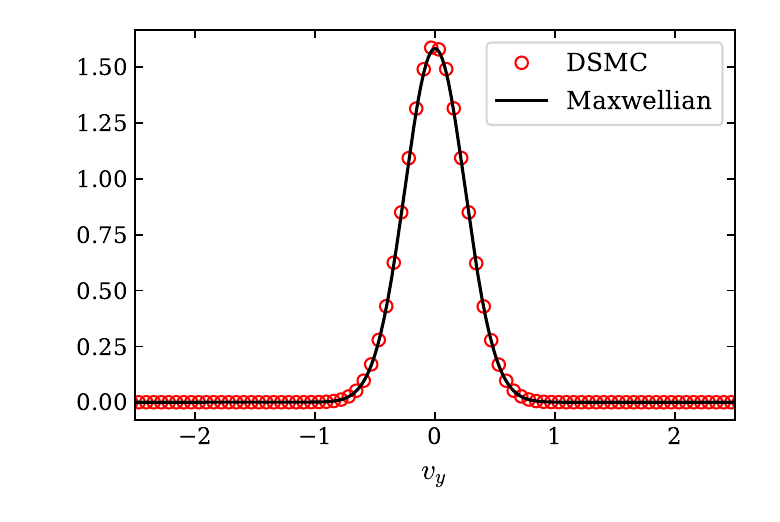}}\\
	\subfloat[$v_x$ marginal, $t=50$]{\includegraphics[width=0.32\textwidth]{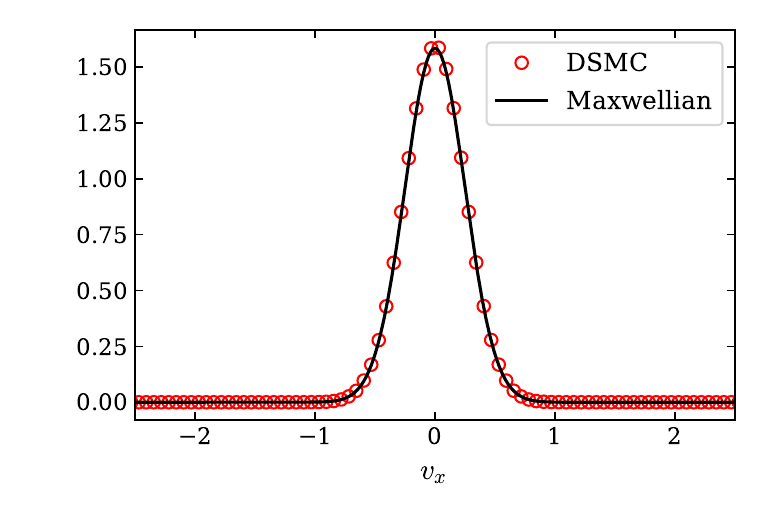}}\hfill
	\subfloat[$\omega$ marginal, $t=50$]{\includegraphics[width=0.32\textwidth]{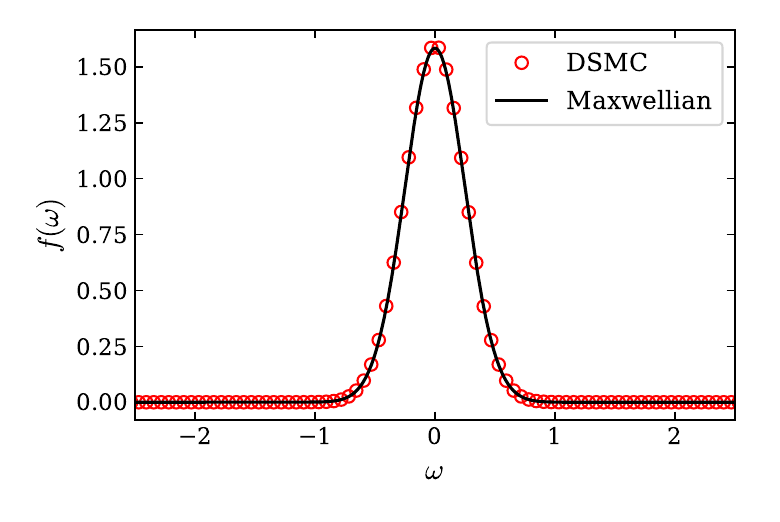}}\hfill
	\subfloat[$\theta$ marginal, $t=50$]{\includegraphics[width=0.32\textwidth]{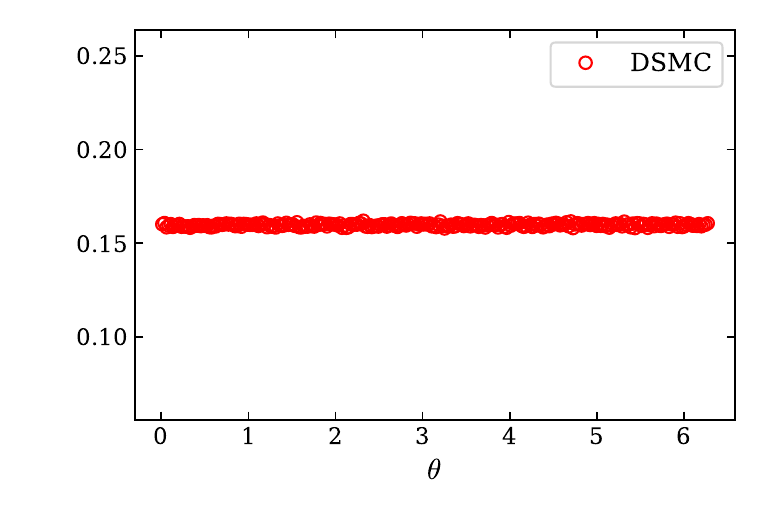}}\\
	\subfloat[temperature history]{\includegraphics[width=0.4\textwidth]{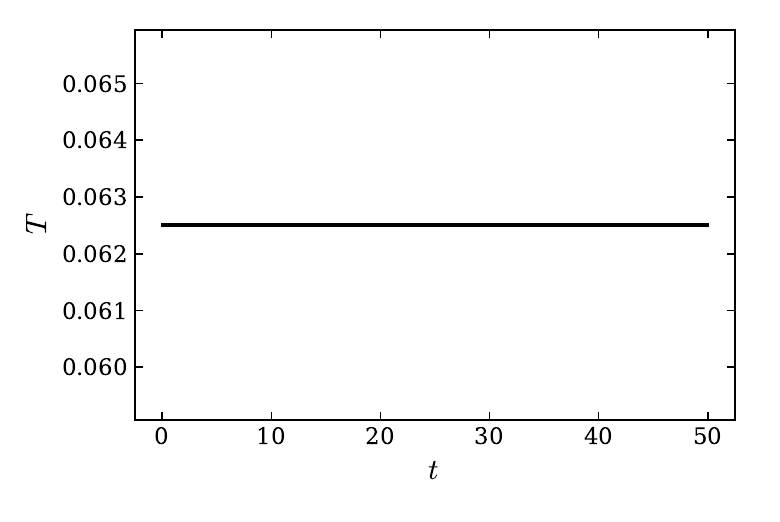}}
	\caption{Test~1. \revjac{We simulate \eqref{eq:Boltzmann2D_calamitic} with $\mathcal{V}=0$, to show thermalization with zero potential and elastic collisions (no transport).} The top row shows three snapshots of the $v_y$ marginal, which relaxes from the \rev{flat-top} initial datum \cref{eq:ic} to the Maxwellian. The middle row shows the remaining one-dimensional marginals at the final time, $t=50$. The markers are the DSMC solution and the solid line the Maxwellian, while the orientational marginal in panel (f) remains uniform. The bottom row shows the time evolution of the temperature. \reviv{$N=10^7$, $\Delta t=0.05$, $1/\tau=10$. The velocity and angular-velocity marginals are shown in detail on $[-2.5,2.5]$.}}
	\label{fig:test01}
\end{figure}

\subsection{Test 2: Mean-field alignment}
\label{sec:alignment}
Next, we activate the mean-field potential \revjac{in \cref{eq:Boltzmann2D_calamitic}} to study the emergence of orientational order. We test both the quadratic force \cref{eq:quadratic_force} (in the frictionless case, $\beta=0$) and the Kuramoto force \cref{eq:cosine_force}. Starting from a uniform orientational distribution, the rods successfully align around the mean director. As shown in \cref{fig:alignment}, the orientational distribution concentrates, the circular variance reduces, and the temperature undergoes a transient phase before settling. 

Unlike the pure thermalization case, the dynamics are now driven by a self-consistent force. Consequently, the relaxation is strictly non-monotone, marked by damped oscillations in both temperature and circular variance.
While both potentials drive the system toward the same qualitative endpoint, a single, highly concentrated cluster in the $(\theta,\omega)$ plane, they take distinct transient paths. The quadratic force, which grows unboundedly with angular separation, condenses the distribution much faster and triggers wider temperature oscillations. The Kuramoto force, by contrast, saturates at a separation of $\pi/2$, leading to a gentler condensation. The oscillations themselves stem purely from mean-field effects (absent in \cref{sec:thermalization}): the nucleating cluster overshoots the mean director and gets pulled back. This periodic exchange between potential and kinetic energy manifests as damped oscillations in the macroscopic temperature and kinetic energy, with the collisions providing the necessary damping. Ultimately, the circular variance settles at a small but non-zero value, reflecting the thermal broadening of the cluster at equilibrium. \revjac{The slow increase of the kinetic energy after the plateau under the quadratic force is a numerical error. The force jumps at the cut of the chart, see \cref{sec:quadratic}, and the symplectic integrator loses its energy behaviour for the molecules that cross the jump, each crossing feeding energy into the system. The Kuramoto force is smooth on the circle and its kinetic energy shows no drift.}

\begin{figure}[!tb]
	\centering
	\raisebox{1em}{\begin{minipage}[c]{0.035\textwidth}\centering\rotatebox{90}{\footnotesize Quadratic potential}\end{minipage}}\hfill
	\begin{minipage}[c]{0.95\textwidth}
		\centering
		\subfloat[$(\theta,\omega)$ density, $t=0$]{\includegraphics[width=0.32\textwidth]{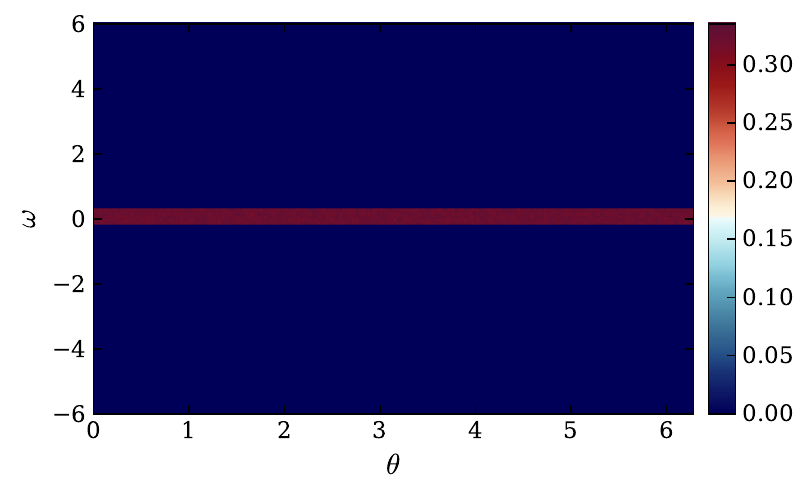}}\hfill
		\subfloat[$(\theta,\omega)$ density, $t=5$]{\includegraphics[width=0.32\textwidth]{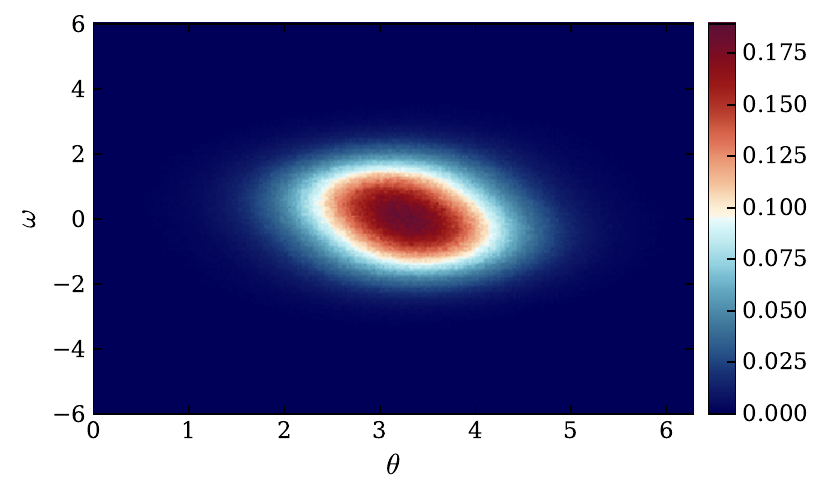}}\hfill
		\subfloat[$(\theta,\omega)$ density, $t=10$]{\includegraphics[width=0.32\textwidth]{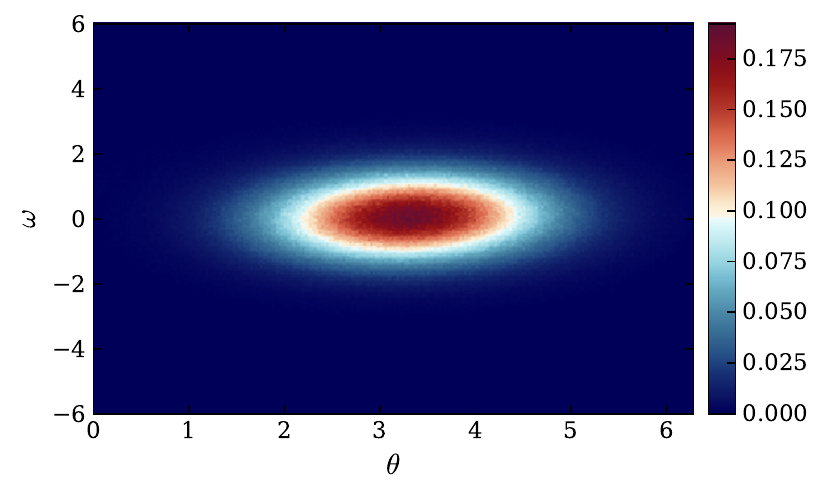}}
	\end{minipage}\\
	\raisebox{1em}{\begin{minipage}[c]{0.035\textwidth}\centering\rotatebox{90}{\footnotesize Kuramoto potential}\end{minipage}}\hfill
	\begin{minipage}[c]{0.95\textwidth}
		\centering
		\subfloat[$(\theta,\omega)$ density, $t=0$]{\includegraphics[width=0.32\textwidth]{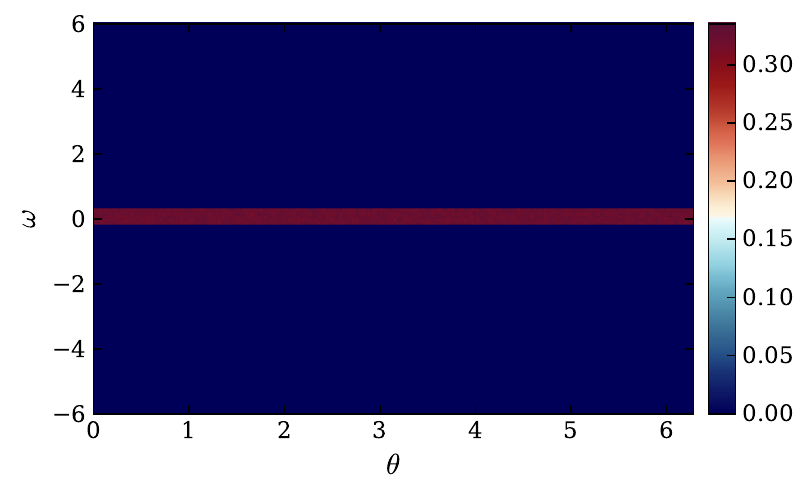}}\hfill
		\subfloat[$(\theta,\omega)$ density, $t=5$]{\includegraphics[width=0.32\textwidth]{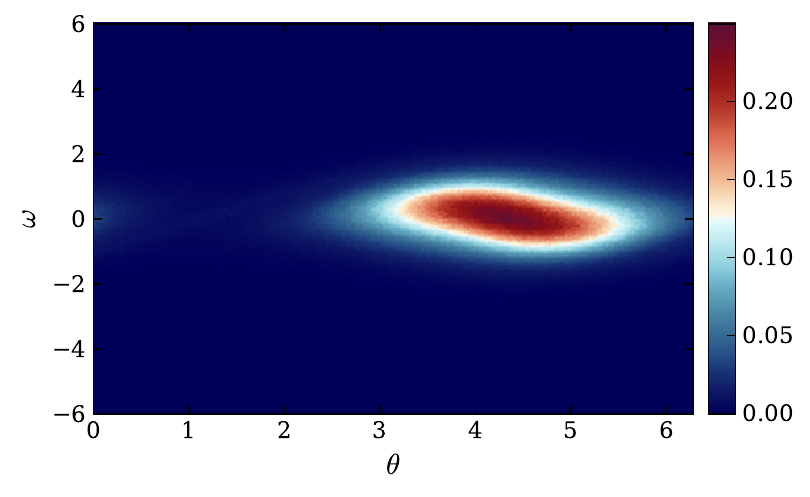}}\hfill
		\subfloat[$(\theta,\omega)$ density, $t=10$]{\includegraphics[width=0.32\textwidth]{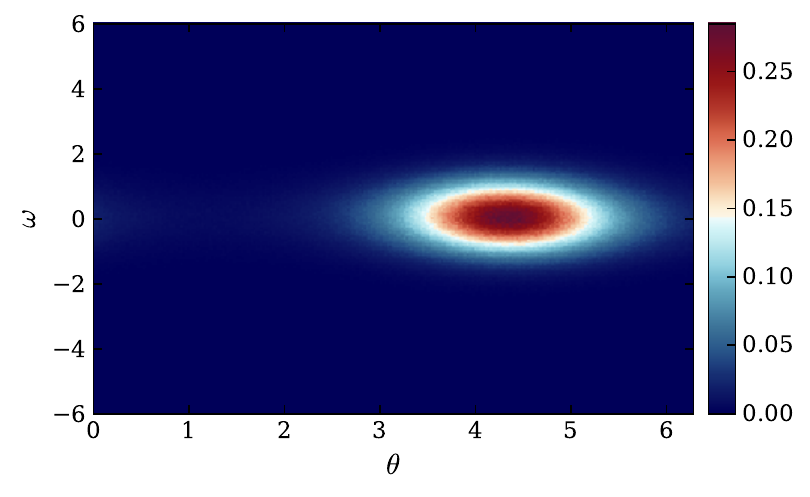}}
	\end{minipage}\\
	\raisebox{1em}{\begin{minipage}[c]{0.035\textwidth}\centering\rotatebox{90}{\footnotesize Quadratic potential}\end{minipage}}\hfill
	\begin{minipage}[c]{0.95\textwidth}
		\centering
		\subfloat[temperature]{\includegraphics[width=0.32\textwidth]{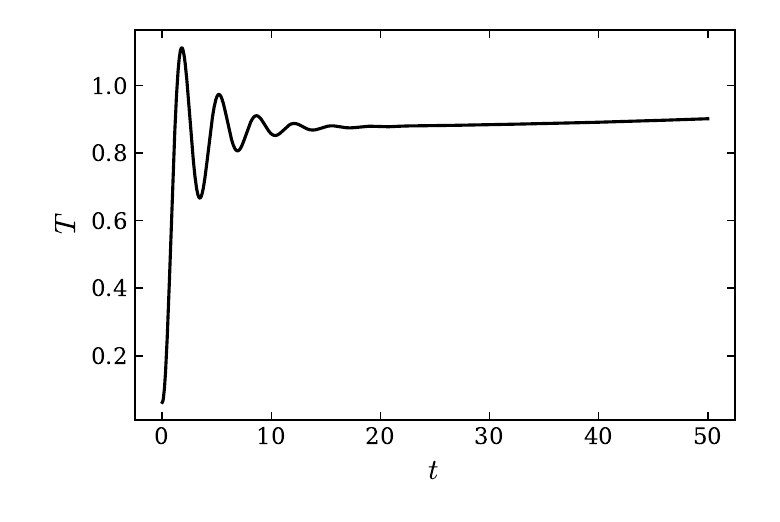}}\hfill
		\subfloat[circular variance]{\includegraphics[width=0.32\textwidth]{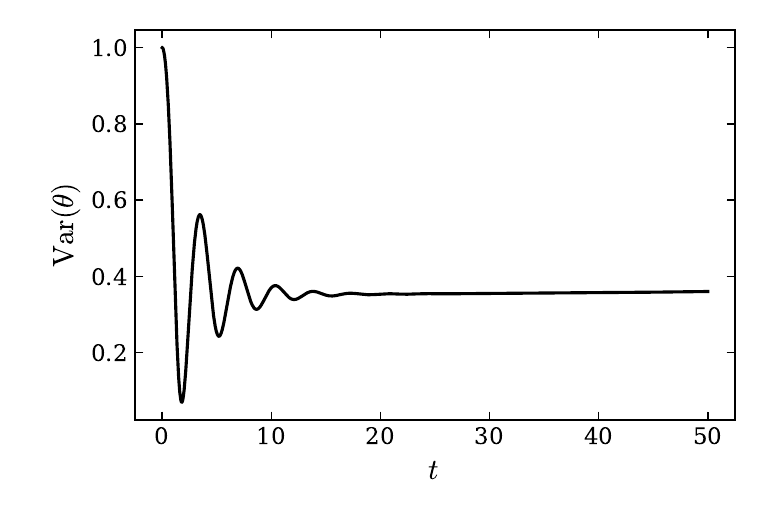}}\hfill
		\subfloat[kinetic energy]{\includegraphics[width=0.32\textwidth]{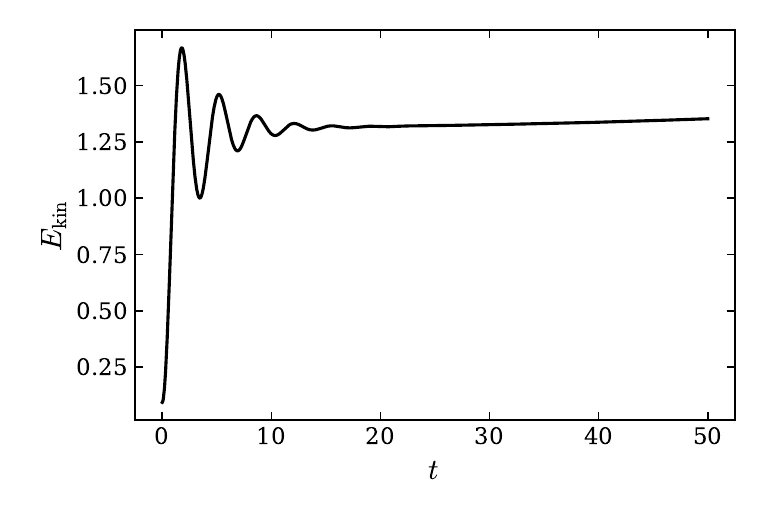}}
	\end{minipage}\\
	\raisebox{1em}{\begin{minipage}[c]{0.035\textwidth}\centering\rotatebox{90}{\footnotesize Kuramoto potential}\end{minipage}}\hfill
	\begin{minipage}[c]{0.95\textwidth}
		\centering
		\subfloat[temperature]{\includegraphics[width=0.32\textwidth]{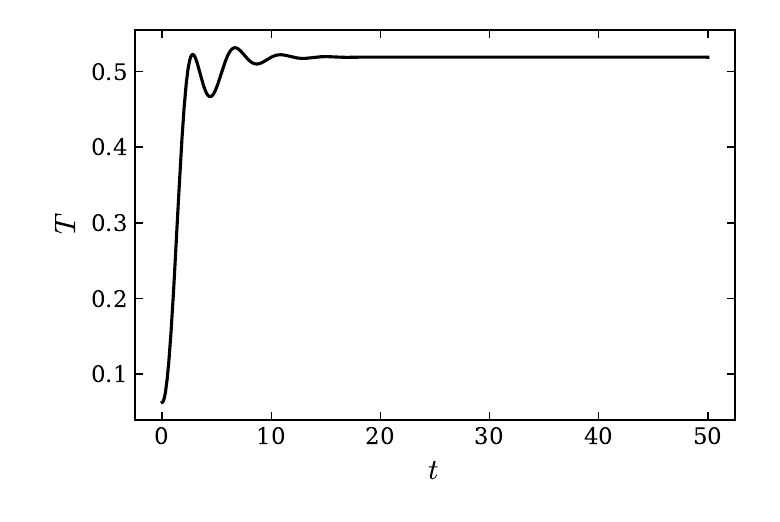}}\hfill
		\subfloat[circular variance]{\includegraphics[width=0.32\textwidth]{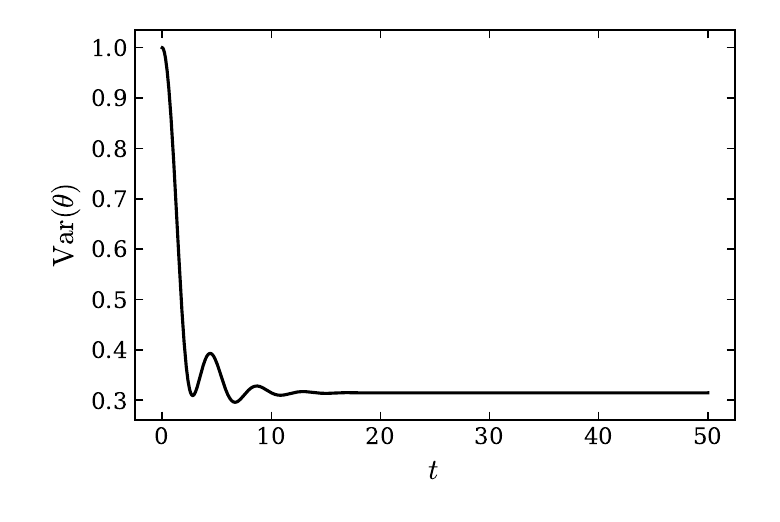}}\hfill
		\subfloat[kinetic energy]{\includegraphics[width=0.32\textwidth]{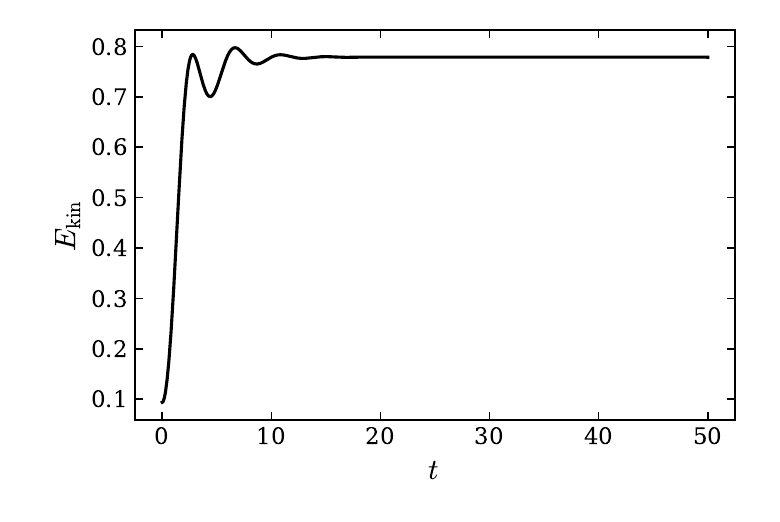}}
	\end{minipage}
	\caption{Test~2. \revjac{We simulate \eqref{eq:Boltzmann2D_calamitic} with the quadratic force \cref{eq:quadratic_force} ($\alpha=1$, $\beta=0$) in panels (a)--(c), (g)--(i) and with the Kuramoto force \cref{eq:cosine_force} in panels (d)--(f), (j)--(l), to show the mean-field alignment of the rods about the mean director.} Each of the first two rows shows three snapshots of the $(\theta,\omega)$ density during the initial transient, as it concentrates about the mean director. The last two rows show the time evolution of the temperature, the circular variance $\sigma^2(\theta)$ of \cref{eq:circvar}, and the kinetic energy, whose damped oscillations are the signature of the self-consistent mean-field coupling. As the rods align, interaction energy is converted into kinetic energy and the kinetic energy rises to a plateau. \reviv{$N=10^7$, $\Delta t=0.05$, $1/\tau=4$. The $(\theta,\omega)$ densities are shown on $\omega\in[-6,6]$.}}
	\label{fig:alignment}
\end{figure}

A more delicate test involves nucleating order from a nearly isotropic state. \revii{To do this, we perturb the uniform initial datum along its first harmonic:
\begin{equation}\label{eq:ic_perturbed}
	f^0(\theta) = \frac{1}{2\pi}\Big(1+\varepsilon\cos\left(\theta-\theta_0\right)\Big), \qquad \varepsilon = 10^{-1}, \qquad \theta_0 = 0.3.
\end{equation}
This $\theta$-dependent perturbation subtly breaks symmetry and singles out a preferred orientation $\theta_0$.} Under the Kuramoto dynamics, the distribution condenses precisely around this preferred director (see \cref{fig:nucleation}). The mean-field force catches the small initial anisotropy and amplifies it until a macroscopic ordered state emerges. This test is numerically delicate because the initial signal sits at the perturbation scale $\varepsilon$, making the early-time mean-field force extremely weak. Nevertheless, the growth remains robustly exponential, and the circular variance decreases appreciably.

\begin{figure}[tb]
	\centering
	\subfloat[$(\theta,\omega)$ density, $t=0$]{\includegraphics[width=0.32\textwidth]{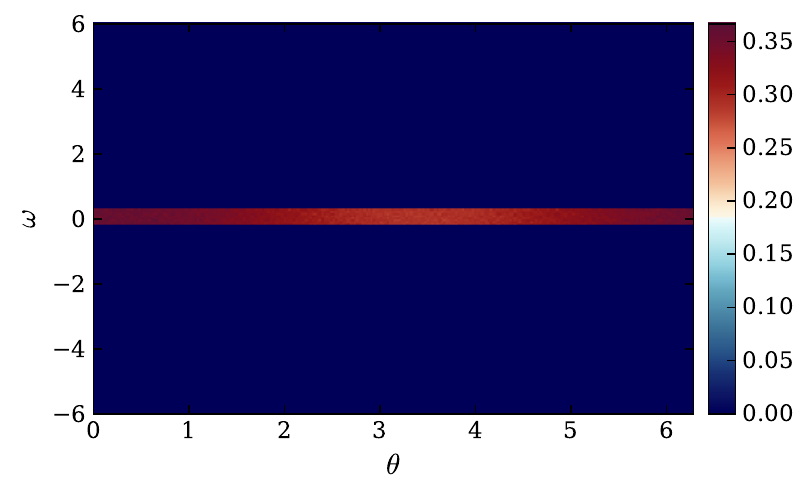}}\hfill
	\subfloat[$(\theta,\omega)$ density, $t=5$]{\includegraphics[width=0.32\textwidth]{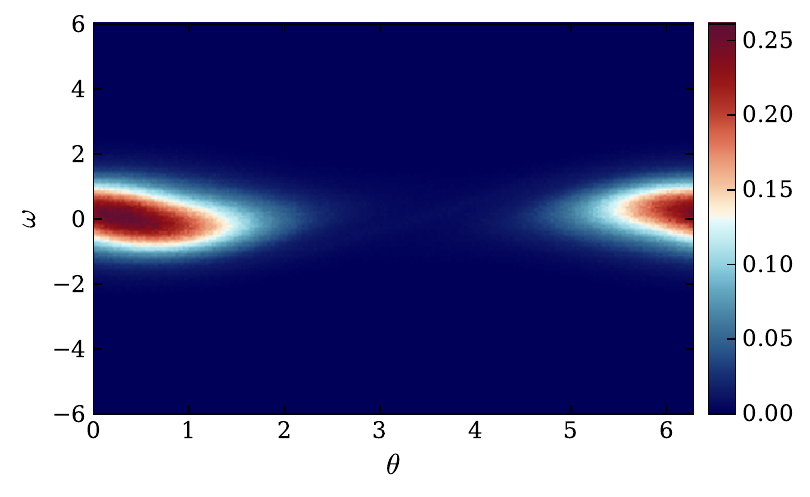}}\hfill
	\subfloat[$(\theta,\omega)$ density, $t=10$]{\includegraphics[width=0.32\textwidth]{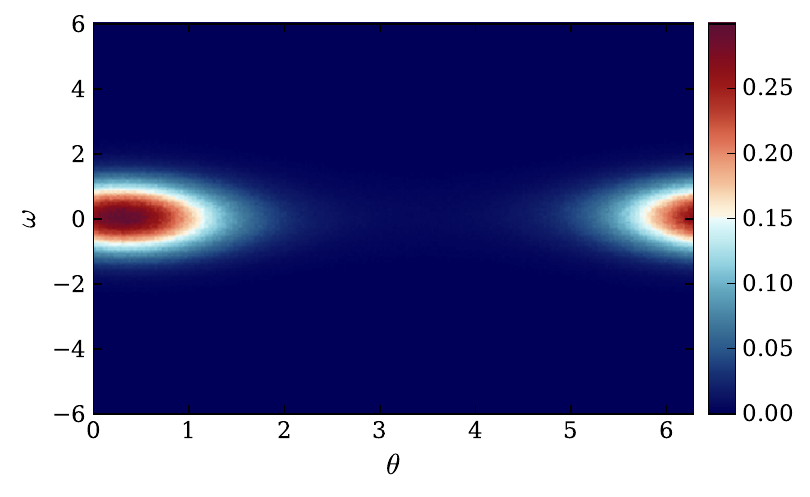}}\\
	\subfloat[$\theta$ marginal, $t=0$]{\includegraphics[width=0.32\textwidth]{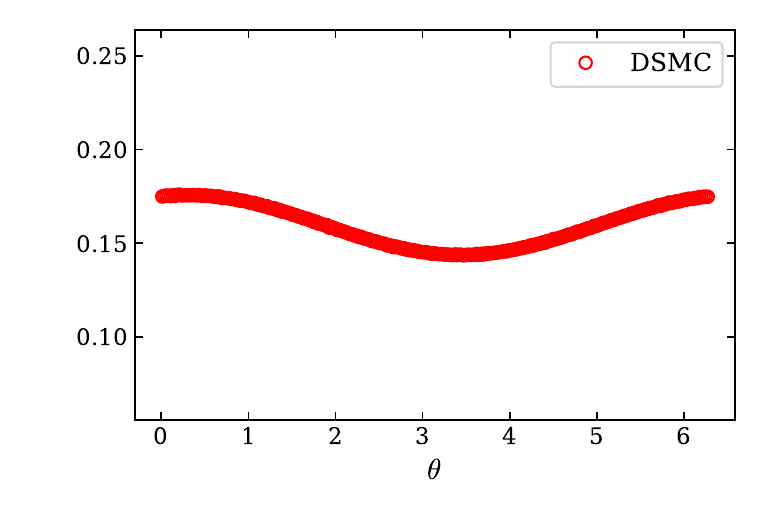}}\hfill
	\subfloat[$\theta$ marginal, $t=5$]{\includegraphics[width=0.32\textwidth]{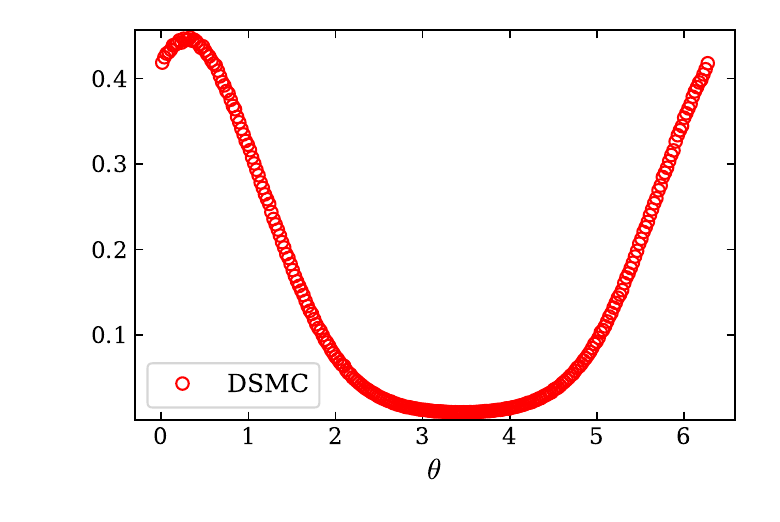}}\hfill
	\subfloat[$\theta$ marginal, $t=10$]{\includegraphics[width=0.32\textwidth]{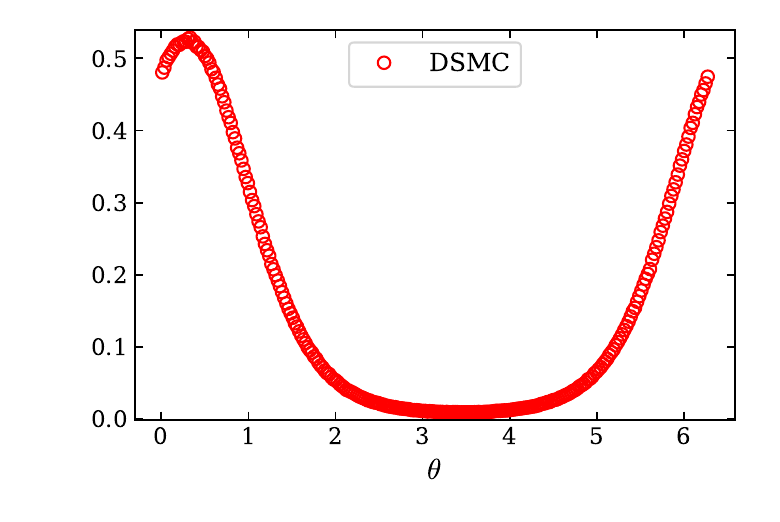}}\\
	\subfloat[temperature]{\includegraphics[width=0.32\textwidth]{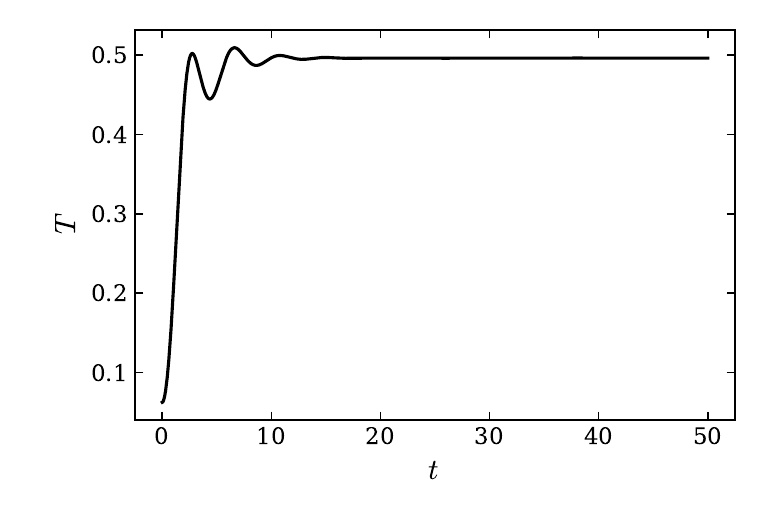}}\hfill
	\subfloat[circular variance]{\includegraphics[width=0.32\textwidth]{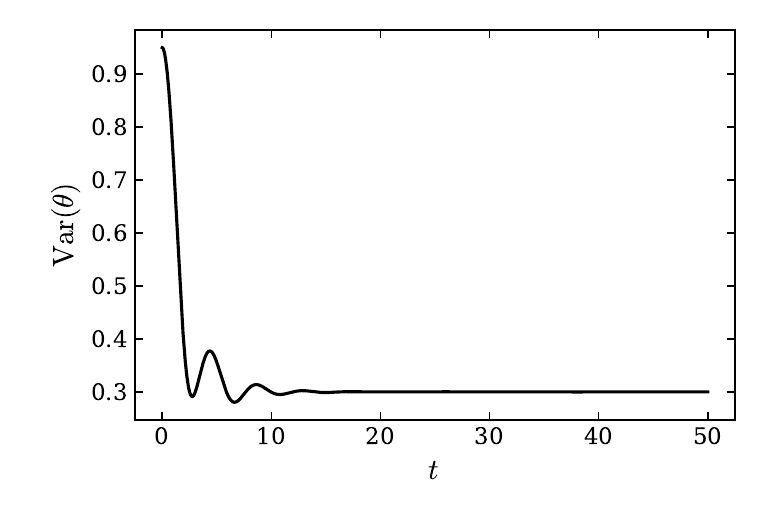}}\hfill
	\subfloat[kinetic energy]{\includegraphics[width=0.32\textwidth]{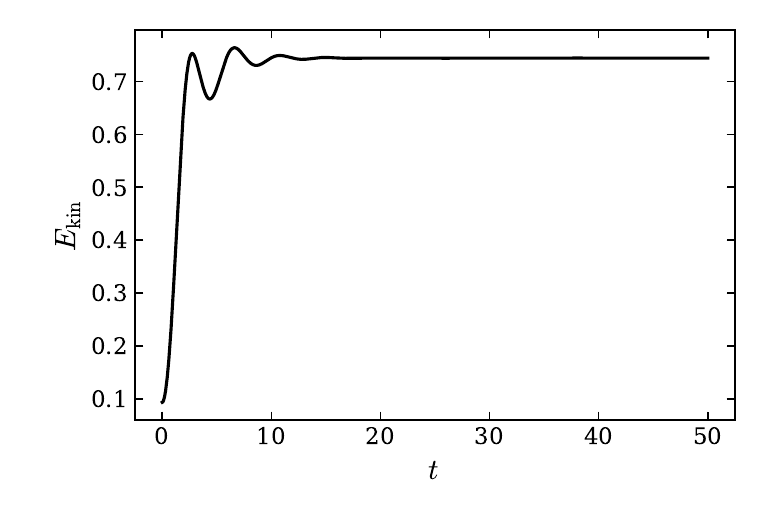}}
	\caption{Test~2. \revjac{We simulate \eqref{eq:Boltzmann2D_calamitic} with the Kuramoto force \cref{eq:cosine_force} and the perturbed isotropic initial datum \cref{eq:ic_perturbed}, to show the nucleation of order from a small anisotropy.} The top row shows three snapshots of the $(\theta,\omega)$ density during the initial transient, and the middle row the corresponding $\theta$ marginals. In panel (b) the nucleating cluster is still draining the residual isotropic band, which has disappeared by $t=10$. The bottom row shows the time evolution of the temperature, the circular variance, and the kinetic energy. The small initial anisotropy is amplified into a macroscopic ordered state. \reviv{$N=10^7$, $\Delta t=0.05$, $1/\tau=4$. The $(\theta,\omega)$ densities are shown on $\omega\in[-6,6]$.}}
	\label{fig:nucleation}
\end{figure}

\subsection{Test 3: The Onsager potential: energy conversion and instabilities}
\label{sec:onsager}
We next examine the physically motivated Onsager potential \cref{eq:onsager} \revjac{in \cref{eq:Boltzmann2D_calamitic}}, driven by the full mean-field force \cite{carrillo2025kinetic}. In a microcanonical setting (without a thermostat), the total energy \cref{eq:totalenergy} is rigorously conserved (\cref{prop:energy}). As nematic order crystallizes, the system lowers its interaction energy. By conservation, the kinetic energy---and thus the temperature---must rise by exactly the same amount. \Cref{fig:onsager} plots this energy conversion alongside the total energy evolution, while the top-row snapshots capture the morphological emergence of order. Initially, the isotropic state forms a flat band in phase space. The unstable nematic mode causes this band to buckle and roll up into two distinct bumps separated by half a period in $\theta$. These two bumps naturally represent the single nematic director, perfectly respecting head--tail symmetry. As noted earlier, the fully discrete scheme does not conserve total energy perfectly; over the entire simulation horizon, it drifts upwards by roughly $0.6\%$. Crucially, however, this numerical drift remains more than an order of magnitude smaller than the physical energy exchanged between the interaction and kinetic components.

\begin{figure}[tb]
	\centering
	\subfloat[$(\theta,\omega)$ density, $t=2.5$]{\includegraphics[width=0.32\textwidth]{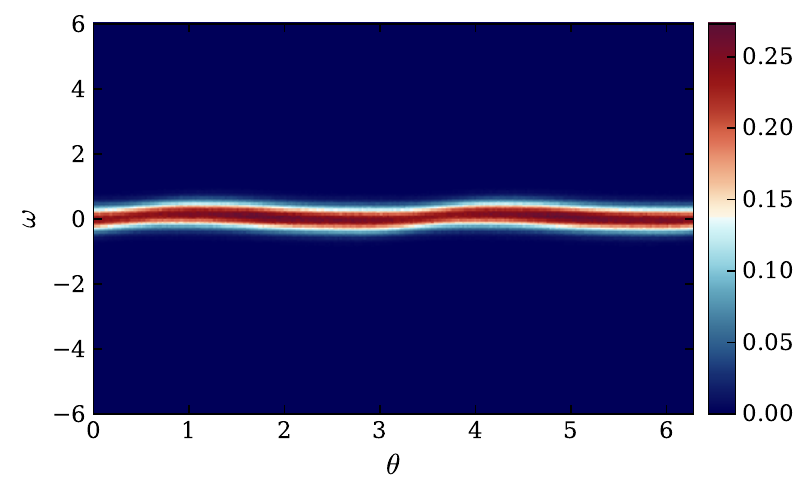}}\hfill
	\subfloat[$(\theta,\omega)$ density, $t=5$]{\includegraphics[width=0.32\textwidth]{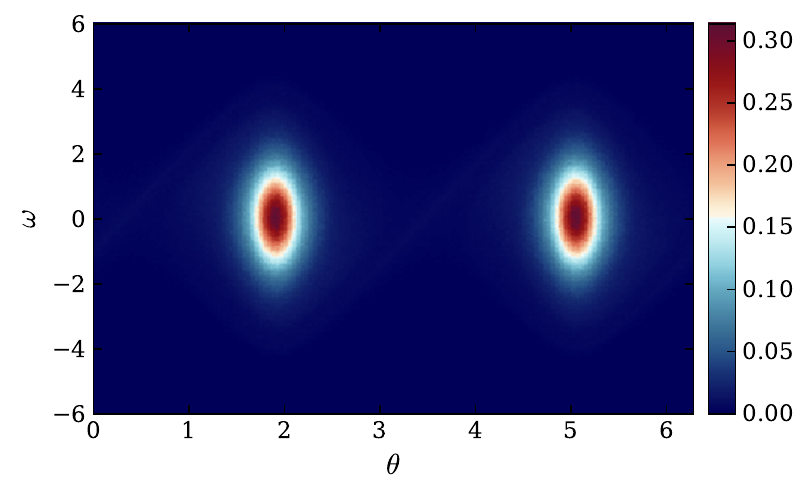}}\hfill
	\subfloat[$(\theta,\omega)$ density, $t=10$]{\includegraphics[width=0.32\textwidth]{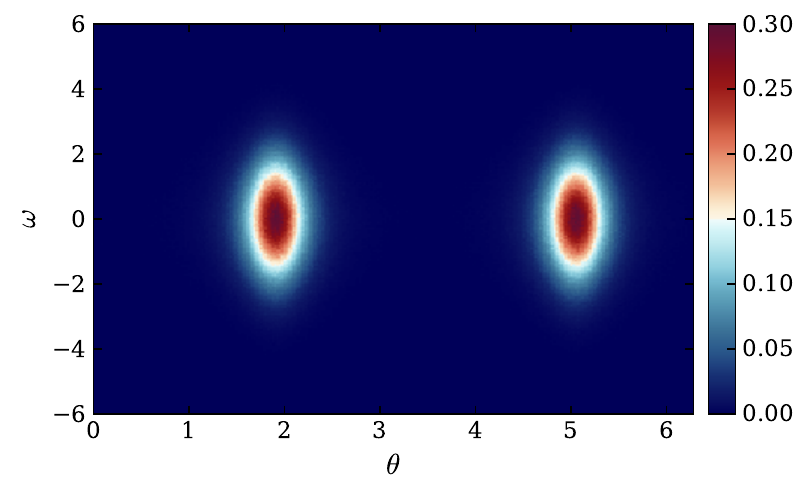}}\\
	\subfloat[temperature]{\includegraphics[width=0.24\textwidth]{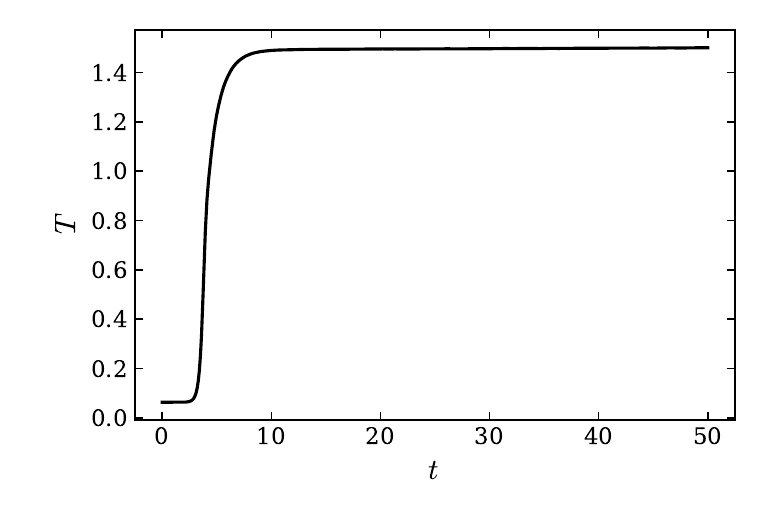}}\hfill
	\subfloat[circular variance]{\includegraphics[width=0.24\textwidth]{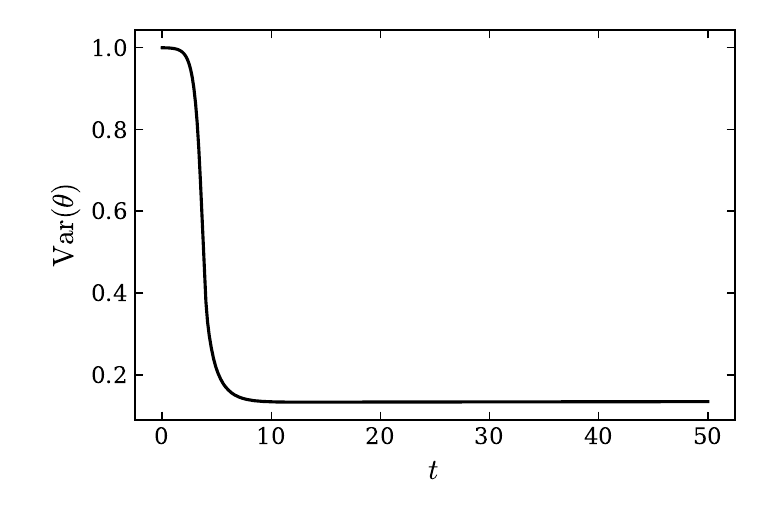}}\hfill
	\subfloat[interaction energy]{\includegraphics[width=0.24\textwidth]{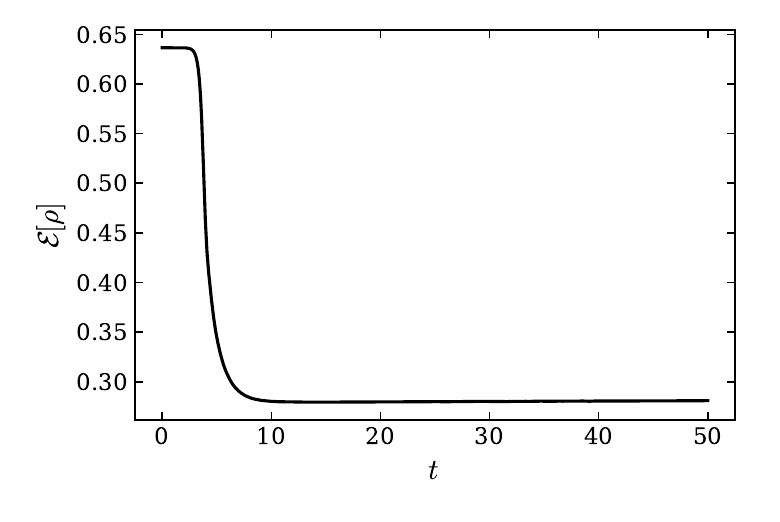}}\hfill
	\subfloat[total energy]{\includegraphics[width=0.24\textwidth]{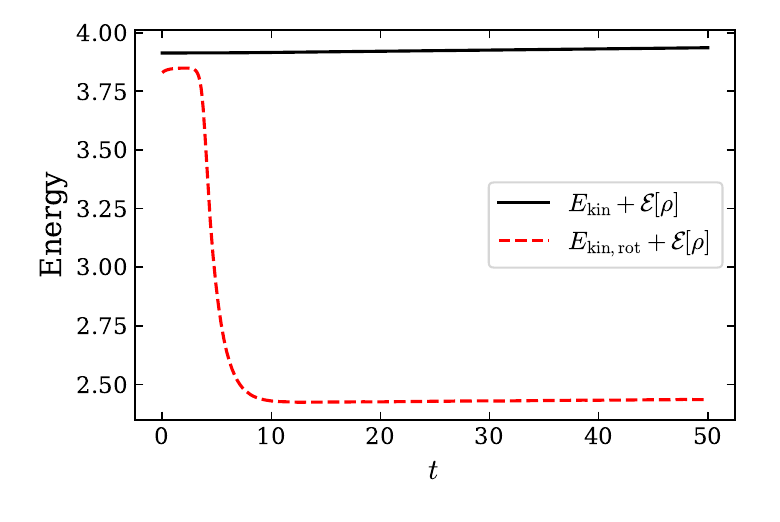}}
	\caption{Test~3. \revjac{We simulate \eqref{eq:Boltzmann2D_calamitic} with the Onsager potential \cref{eq:onsager} in the microcanonical ensemble, to show the emergence of nematic order and the conversion of interaction energy into kinetic energy.} The top row shows three early snapshots of the $(\theta,\omega)$ density. The initially flat band buckles in panel (a), rolls up into the two head--tail symmetric nematic bumps in panel (b), and by $t=10$ the bumps have relaxed to their equilibrium shape in panel (c). The bottom row shows the time evolution of the scalar observables. The temperature rises as the interaction energy in panel (f) is converted into kinetic energy. In panel (g) the total energy (solid line) drifts by about $0.6\%$ of its value over the whole run. The sum of rotational kinetic and interaction energy (dashed line) is not a conserved quantity, since the collisions exchange rotational and translational kinetic energy. \reviv{$L=\sqrt{12}$, $N=10^7$, $\Delta t=0.05$, $1/\tau=4$, $N_\theta=256$. The $(\theta,\omega)$ densities are shown on $\omega\in[-6,6]$.}}
	\label{fig:onsager}
\end{figure}

To isolate the transport dynamics, we rerun the Onsager potential with collisions deactivated. As \cref{fig:filamentation} shows, the collisionless mean-field flow spontaneously drives a striking filamentation of phase space, strongly resembling instabilities found in plasma physics. Specifically, the snapshots mirror the diocotron instability of non-neutral plasmas, where an annular electron column rolls up into a chain of vortices shrouded by ever-finer filaments \cite{levy1965,knauer1966,driscollFine}. This is precisely the mechanism driving orientational concentration in the collisionless regime. 

Initially, the distribution forms a uniform band in the $(\theta,\omega)$ plane. The mean-field force shears this band, causing particles at different angular velocities $\omega$ to rotate at different rates. Consequently, any perturbation to the band winds rather than relaxes. This winding rolls the band up into the chain of vortices visible in the second snapshot. Each vortex traps orientational mass and vigorously stirs it toward its center, while the interstitial material is stretched into tight filaments winding around the vortices. It is exactly this macroscopic winding that aggregates the orientational marginal into the nematic bumps. Coarse observables (like the circular variance and interaction energy) thus converge, even though the phase-space density never relaxes pointwise---a direct analogue to phase mixing in the Vlasov--Poisson equation. Lacking collisions to dissipate them, these filaments grow increasingly thin until they fall below the resolution of our density reconstruction. Here, the particle method excels: it transports these filaments without artificial diffusion, whereas a purely Eulerian mesh-based method would artificially smear them once they drop below the grid scale. Throughout this collisionless run, the total energy drifts by only $0.4\%$, remaining an order of magnitude smaller than the physical energy exchange.

\begin{figure}[tb]
	\centering
	\subfloat[$(\theta,\omega)$ density, $t=5$]{\includegraphics[width=0.32\textwidth]{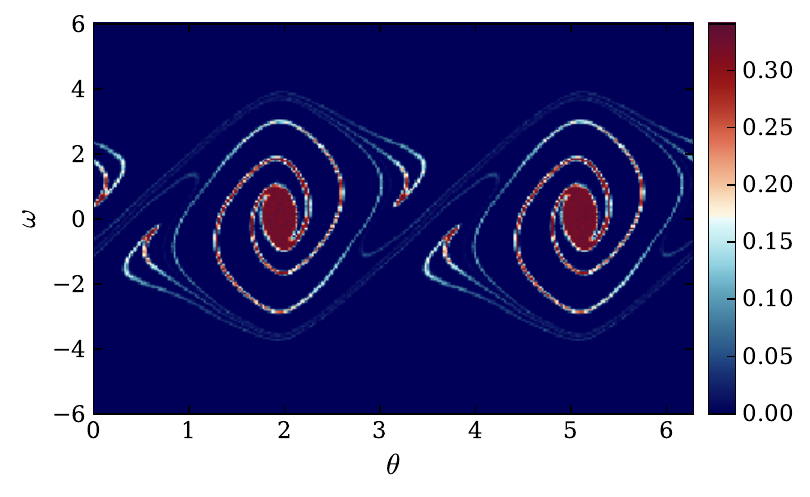}}\hfill
	\subfloat[$(\theta,\omega)$ density, $t=15$]{\includegraphics[width=0.32\textwidth]{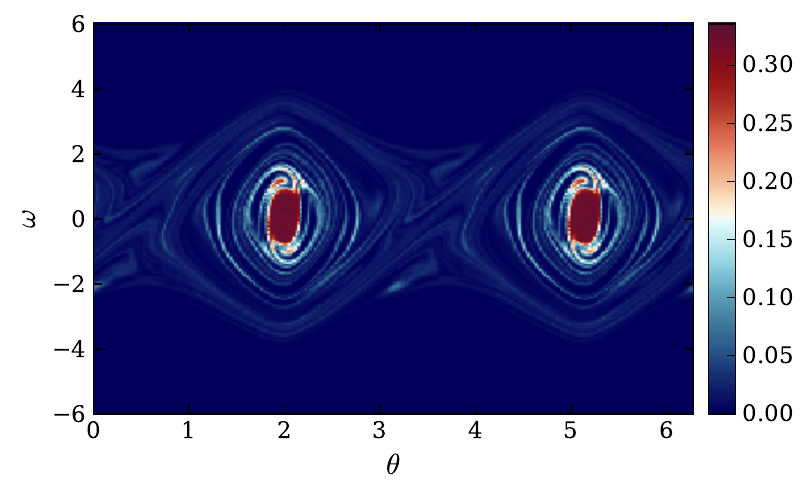}}\hfill
	\subfloat[$(\theta,\omega)$ density, $t=35$]{\includegraphics[width=0.32\textwidth]{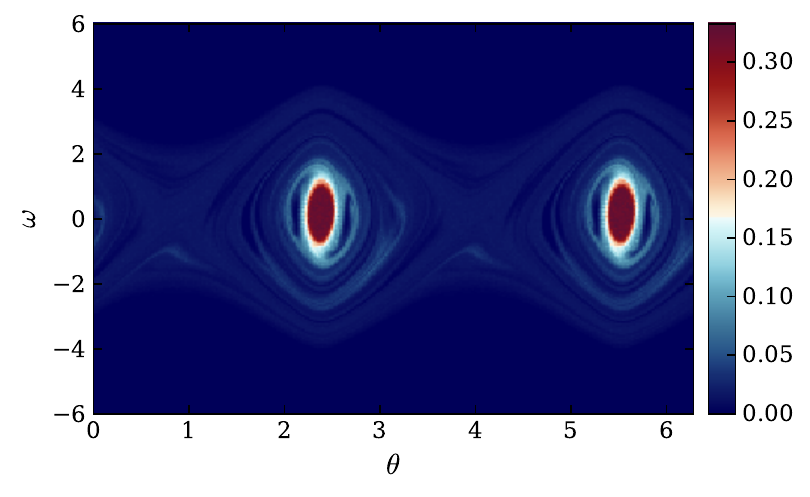}}\\
	\subfloat[total energy]{\includegraphics[width=0.45\textwidth]{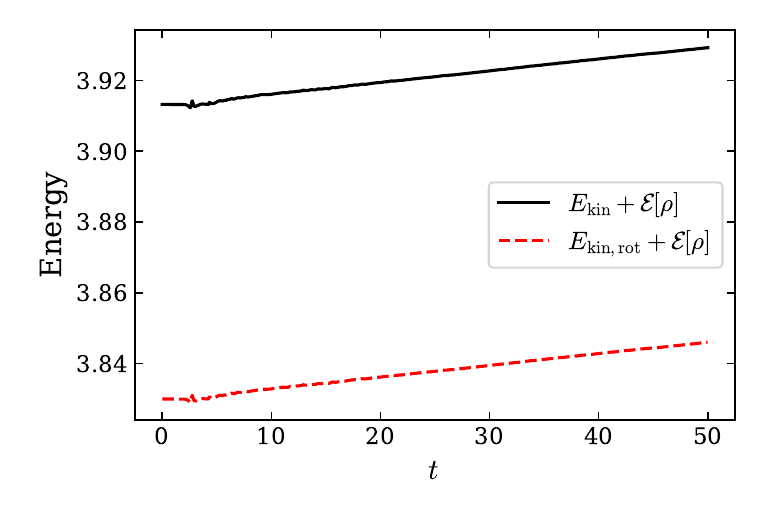}}
	\caption{Test~3. \revjac{We simulate \eqref{eq:Boltzmann2D_calamitic} with the Onsager potential \cref{eq:onsager} and without collisions, to show the filamentation of phase space.} The $(\theta,\omega)$ density at three successive times shows \revjac{the band rolling up} into a chain of vortices, resembling the diocotron instability of non-neutral plasmas \cite{levy1965,driscollFine}. The total energy in panel (d) (solid line) drifts by about $0.4\%$ of its value over the whole run. In the absence of collisions the sum of rotational kinetic and interaction energy (dashed line) behaves in the same way, as no energy is exchanged with the translational degrees of freedom. \reviv{Parameters as in \cref{fig:onsager}. The $(\theta,\omega)$ densities are shown on $\omega\in[-6,6]$.}}
	\label{fig:filamentation}
\end{figure}

\subsection{Test 4: The isotropic--nematic transition}
\label{sec:transition}
Finally, we apply the thermostat from \cref{sec:thermostat} \revjac{to \cref{eq:Boltzmann2D_calamitic} with the Onsager potential} to investigate the phase diagram of the Onsager model as the temperature is varied. \reviv{By \cref{prop:critical} the isotropic state loses its stability at the temperature $T_c$ of \cref{eq:Tc}.}

In terms of the dimensionless coupling $\alpha = L^2/T$, the stability condition \cref{eq:Tc} reads $\alpha < \alpha_c = 3\pi/2$. For $L=\sqrt{12}$ the critical temperature is $T_c = 8/\pi \approx 2.55$. Varying the bath temperature $T_{\mathrm{bath}}$ across $T_c$ and recording \revjac{the nematic order parameter $R_2=1-\sigma^2(\theta)$ of} \cref{eq:circvar} at equilibrium, in \cref{fig:phasediagram} we observe a continuous transition from the isotropic phase (\revjac{$R_2\approx 0$} for $T_{\mathrm{bath}}>T_c$) to the nematic phase (\revjac{$R_2\to 1$} for $T_{\mathrm{bath}}<T_c$). For each value of $T_{\mathrm{bath}}$ we run a new simulation with $N=4\times 10^5$ particles and thermostat frequency $1/\tau_{\mathrm{bath}}=4$, the remaining parameters being those of the preamble of \cref{sec:experiments}. 

We regard this observed agreement as a strong consistency check of the kinetic theory: the critical temperature \cref{eq:Tc} was derived from the self-consistency equation alone, with no input from the simulation, and the numerical simulation of the kinetic equation reproduces it without adjustment. The continuous character of the transition agrees with the Landau--de Gennes theory of two-dimensional nematics. In two dimensions the $Q$-tensor is a $2\times 2$ symmetric traceless matrix, for which $\tr Q^3 = 0$, so the Landau expansion of the free energy contains no cubic invariant and the mean-field transition is of second order. In three dimensions the cubic invariant does not vanish, and the transition is of first order \rev{\cite[Chapter~2]{degennes}}. Our observation is also consistent with Fatkullin and Slastikov's analysis of the critical points of the Onsager functional \cite{fatkullinSlastikov} and with the nematic-MPCD simulations by Shendruk and Yeomans \cite{shendrukYeomans}, which also report a second-order transition in two dimensions.

\begin{figure}[tb]
	\centering
	\includegraphics[width=0.6\textwidth]{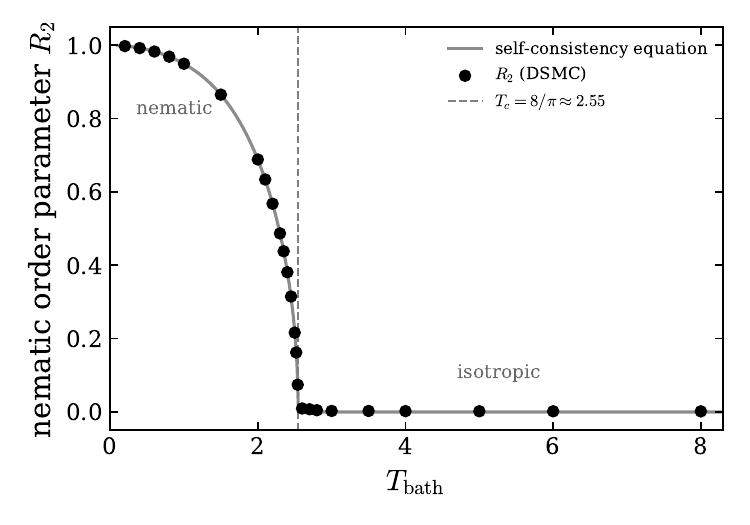}
	\caption{Test~4. \revjac{We simulate \eqref{eq:Boltzmann2D_calamitic} with the Onsager potential \cref{eq:onsager} and the thermostat of \cref{sec:thermostat} at a range of bath temperatures, to show the isotropic--nematic phase transition. Nematic order parameter $R_2=1-\sigma^2(\theta)$} of \cref{eq:circvar} at the final time as a function of the bath temperature $T_{\mathrm{bath}}$. The order parameter vanishes continuously at the critical temperature $T_c=8/\pi$ of \cref{prop:critical} (dashed line), the signature of a second-order transition. \revjac{At $T_c$ the isotropic branch $R_2=0$ loses its stability and the nematic branch bifurcates from it.} \reviv{One run per $T_{\mathrm{bath}}$ with $N=4\times10^5$, $\Delta t=0.05$, $1/\tau=1/\tau_{\mathrm{bath}}=4$, $L=\sqrt{12}$.}}
	\label{fig:phasediagram}
\end{figure}

\subsection{Test 5: The hard-needle kernel}
\label{sec:hardneedle}
The previous tests all employ a Max\-wellian kernel, where every molecular pair has an equal probability of colliding. To ensure our results are not artifacts of this simplification, we rerun the microcanonical Onsager test (\cref{sec:onsager}) using the physical hard-needle kernel (\cref{sec:kernels}), sampled via Bird's method. \revjac{The equation solved is \cref{eq:boltzmann_generic} with the collision operator \cref{eq:collision_operator_generic} and the hard-needle kernel, in place of the Maxwellian kernel of \cref{eq:Boltzmann2D_calamitic}.} As \cref{fig:hardneedle} shows, the core phenomenology perfectly matches the Maxwellian run (\cref{fig:onsager}). The rods still align into two head--tail symmetric nematic bumps, interaction energy converts into kinetic energy, and total energy remains strictly conserved up to $0.2\%$ fluctuations, with no visible drift. Notably, the hard-needle cross-section vanishes for perfectly parallel rods. Consequently, collisions become increasingly rare as the system orders. Despite this, the macroscopic relaxation remains unaltered over the timeframe of our simulation. %

\begin{figure}[tb]
	\centering
	\subfloat[$(\theta,\omega)$ density, $t=2.5$]{\includegraphics[width=0.32\textwidth]{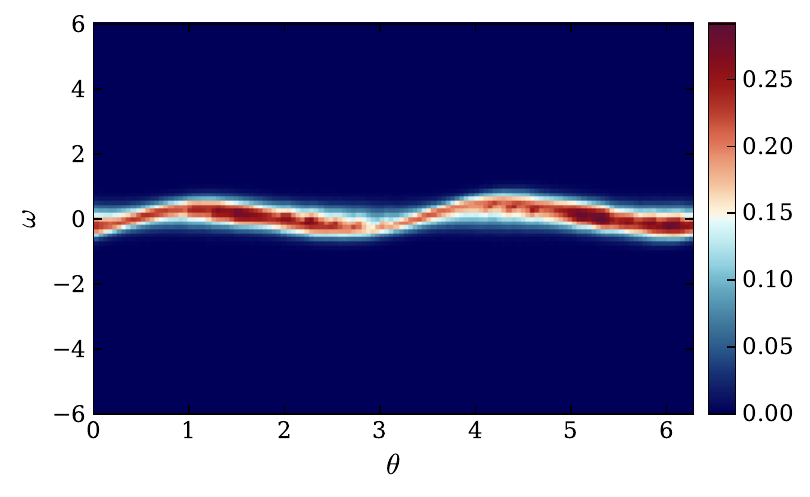}}\hfill
	\subfloat[$(\theta,\omega)$ density, $t=5$]{\includegraphics[width=0.32\textwidth]{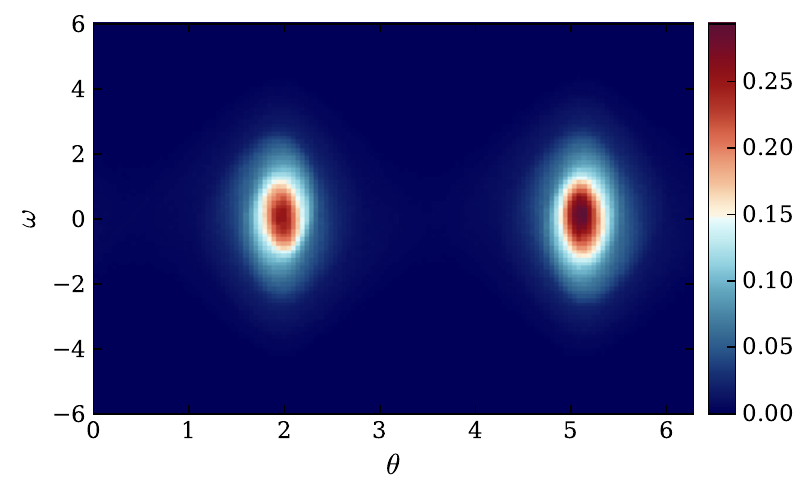}}\hfill
	\subfloat[$(\theta,\omega)$ density, $t=10$]{\includegraphics[width=0.32\textwidth]{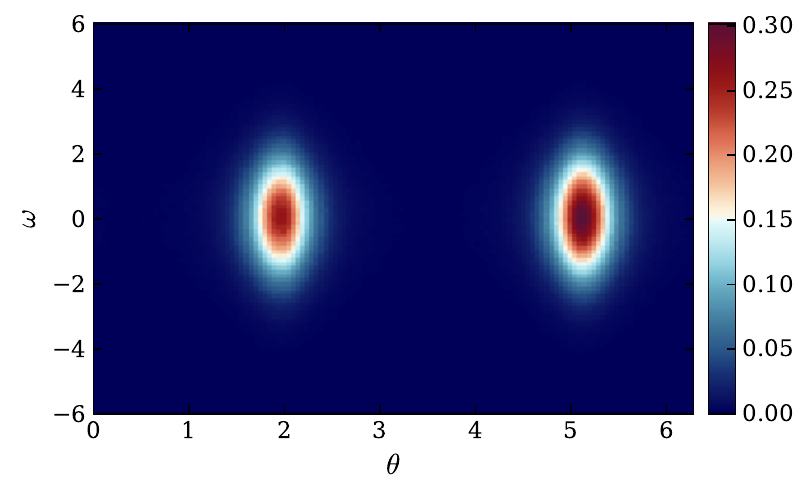}}\\
	\subfloat[temperature]{\includegraphics[width=0.24\textwidth]{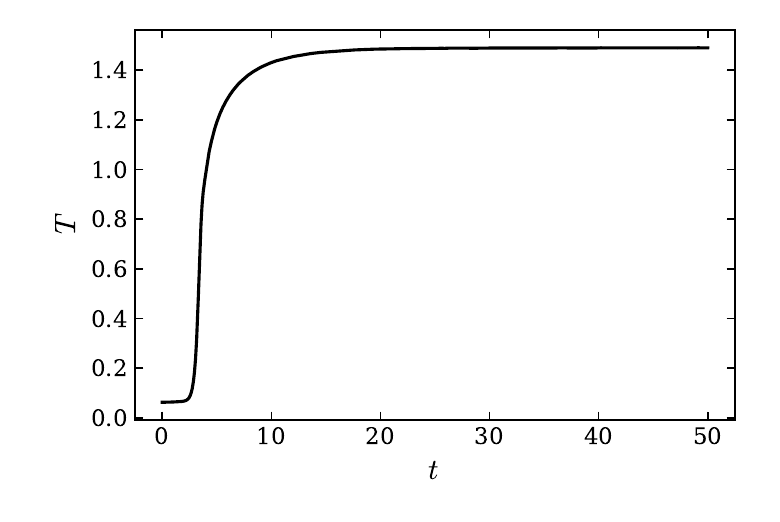}}\hfill
	\subfloat[circular variance]{\includegraphics[width=0.24\textwidth]{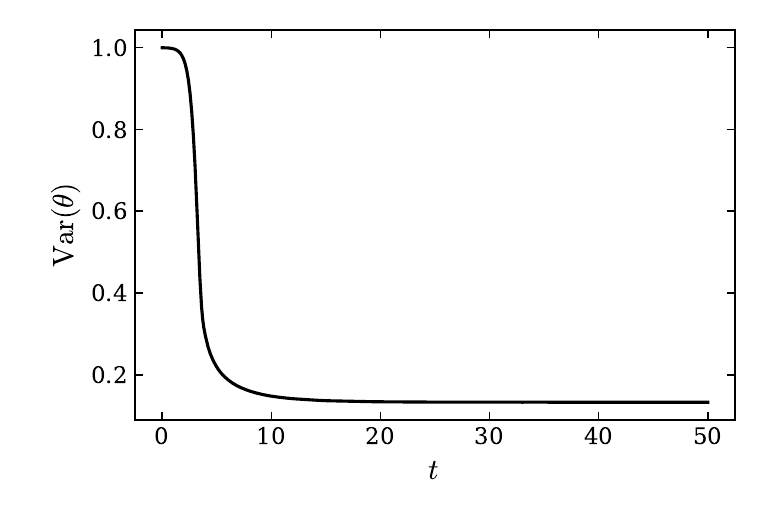}}\hfill
	\subfloat[interaction energy]{\includegraphics[width=0.24\textwidth]{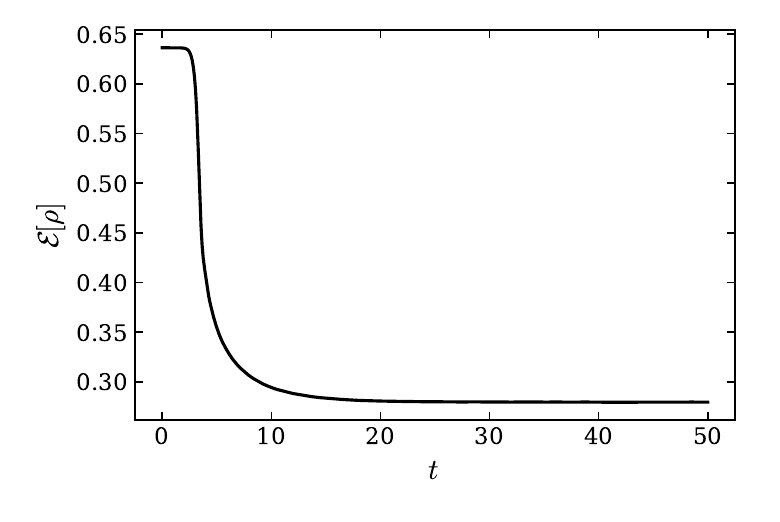}}\hfill
	\subfloat[total energy]{\includegraphics[width=0.24\textwidth]{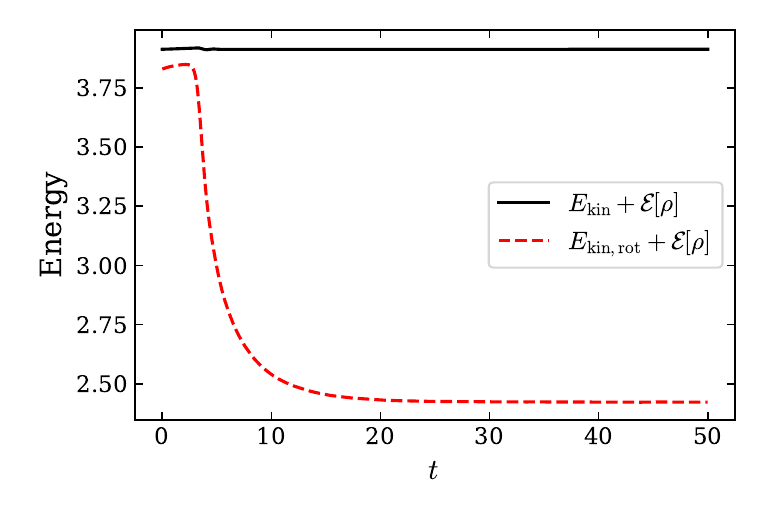}}
	\caption{Test~5. \revjac{We simulate \eqref{eq:boltzmann_generic} with the hard-needle kernel of \cref{sec:kernels}, sampled by Bird's method, and the Onsager potential \cref{eq:onsager} in the microcanonical ensemble, to show that the ordering dynamics is robust with respect to the choice of the collision kernel.} Layout as in \cref{fig:onsager}: three early snapshots of the $(\theta,\omega)$ density on top, the time evolution of the scalar observables below. The results match the Maxwellian-kernel run. In panel (g) the total energy is the solid line, conserved up to fluctuations of about $0.2\%$ of its value. \reviv{Parameters as in \cref{fig:onsager}. The $(\theta,\omega)$ densities are shown on $\omega\in[-6,6]$.}}
	\label{fig:hardneedle}
\end{figure}

\subsection{Test 6: Thermalization under a non-symmetric kernel}
\label{sec:handed}
\rev{Crucially, the $H$-theorem \cite{carrillo2025kinetic} relies on the reciprocity principle rather than strict symmetry of the transition probability. The handed exchange rule from \cref{sec:handedrule} perfectly distinguishes these two concepts: reciprocity holds exactly, but detailed balance fails pointwise. We therefore rerun the thermalization test from \cref{sec:thermalization} \revjac{for \cref{eq:Boltzmann2D_calamitic}} replacing the rigid-rod collision rule with this handed rule, keeping all other parameters and the initial datum \cref{eq:ic} identical. Because the initial datum is not equipartitioned ($T_{\mathrm{tr}}=1/12$ and $T_{\mathrm{rot}}=1/48$), total energy conservation determines the final equilibrium temperature at $T=(2\,T_{\mathrm{tr}}+T_{\mathrm{rot}})/3=1/16$.

\Cref{fig:handed} confirms our theoretical expectations. The velocity and angular-velocity marginals robustly relax to the Maxwellian. The translational and rotational temperatures cleanly converge to $1/16$ within just two time units and stay tightly coupled (within $0.2\%$ of each other) for the remainder of the simulation. Moreover, because the handed maps strictly exchange momentum and energy on the pair energy shell, the scheme conserves total energy and momentum to machine precision at every single collision. The simulation exhibits the behavior expected from the theory, even with the failure of detailed balance.}
\begin{figure}[tb]
	\centering
	\subfloat[$v_y$ marginal, $t=0$]{\includegraphics[width=0.32\textwidth]{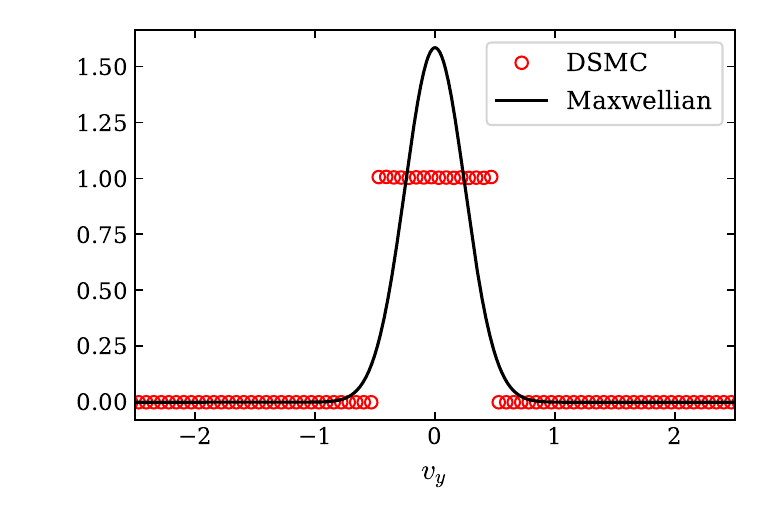}}\hfill
	\subfloat[$v_y$ marginal, $t=5$]{\includegraphics[width=0.32\textwidth]{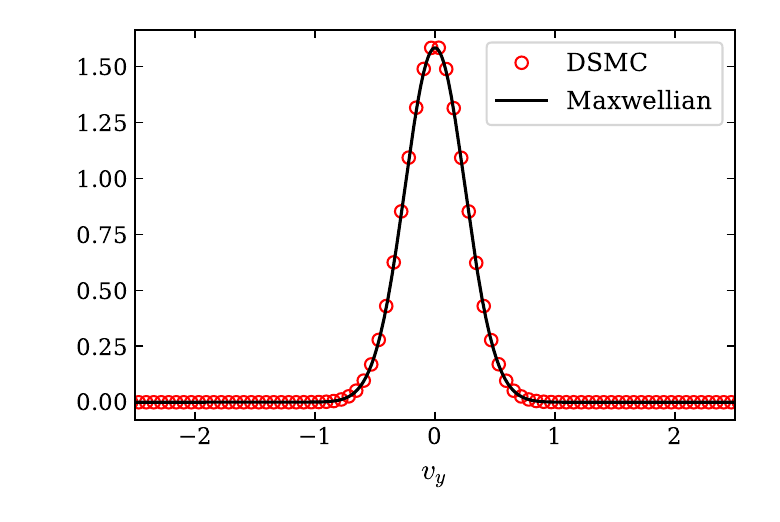}}\hfill
	\subfloat[$v_y$ marginal, $t=50$]{\includegraphics[width=0.32\textwidth]{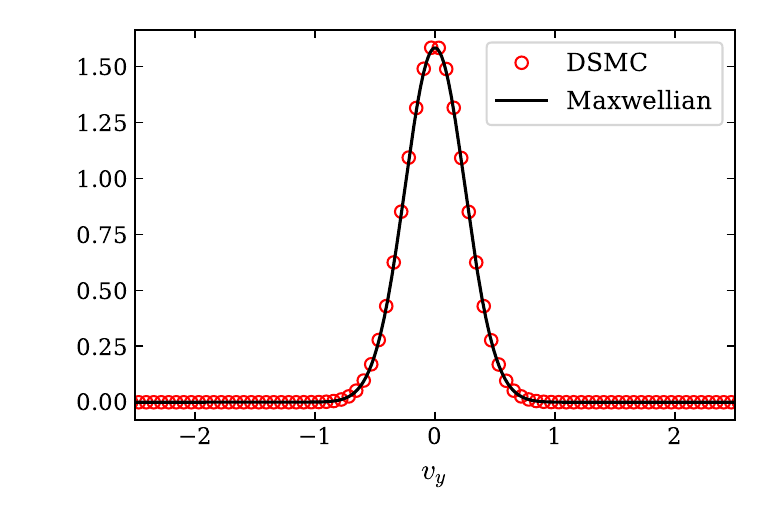}}\\
	\subfloat[$\omega$ marginal, $t=50$]{\includegraphics[width=0.32\textwidth]{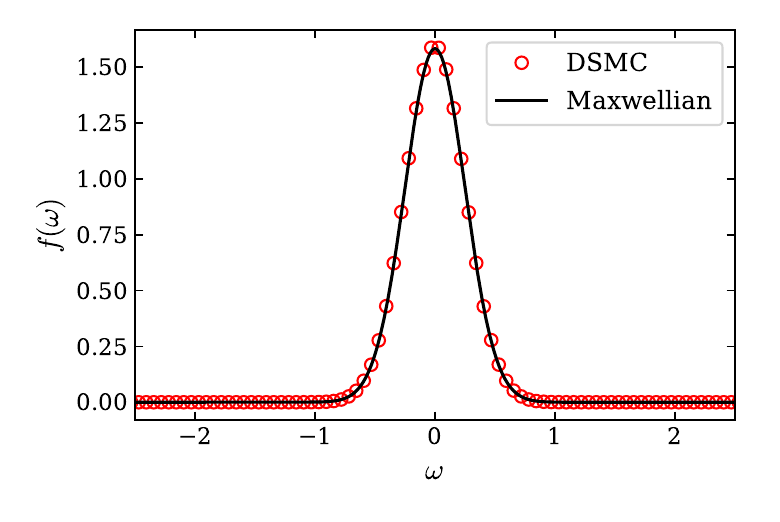}}\hfill
	\subfloat[temperatures]{\includegraphics[width=0.32\textwidth]{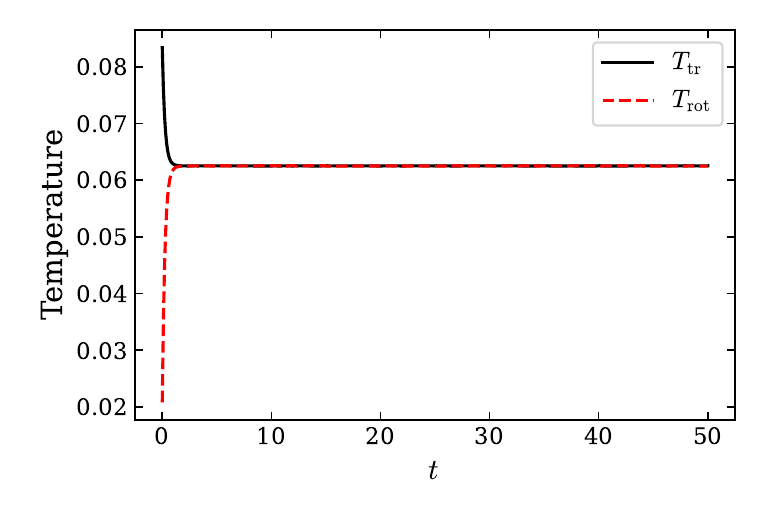}}\hfill
	\subfloat[temperature history]{\includegraphics[width=0.32\textwidth]{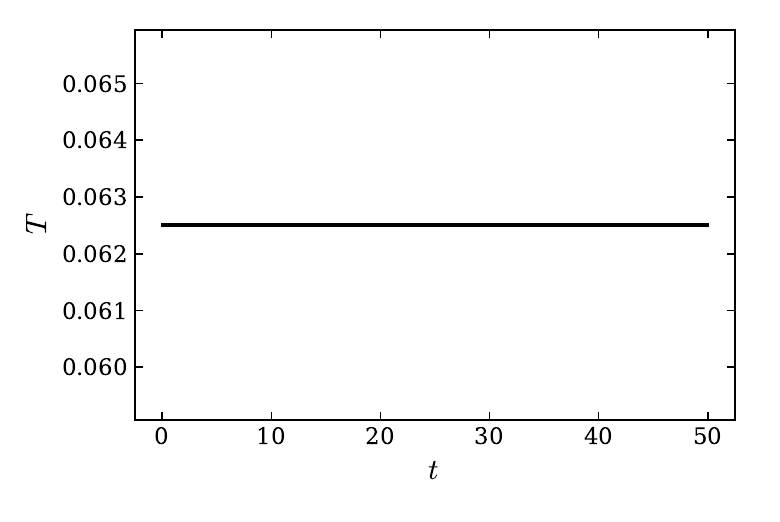}}\\
	\subfloat[$(v_x,v_y)$ density, $t=50$]{\includegraphics[width=0.32\textwidth]{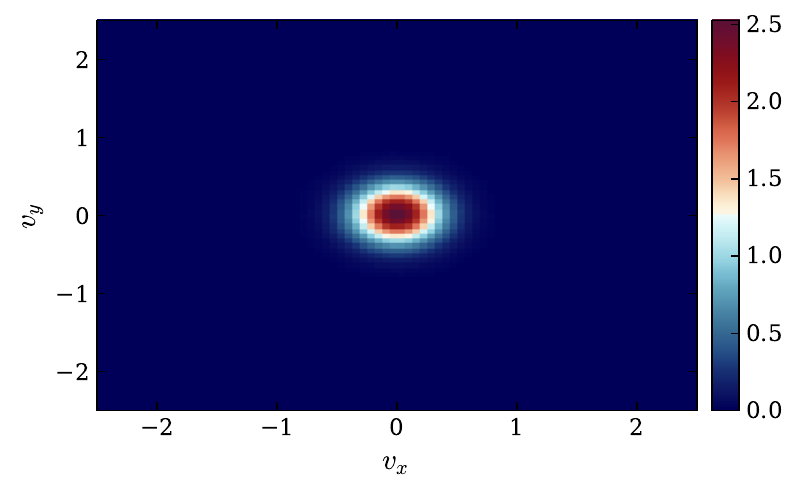}}\qquad
	\subfloat[$(\theta,\omega)$ density, $t=50$]{\includegraphics[width=0.32\textwidth]{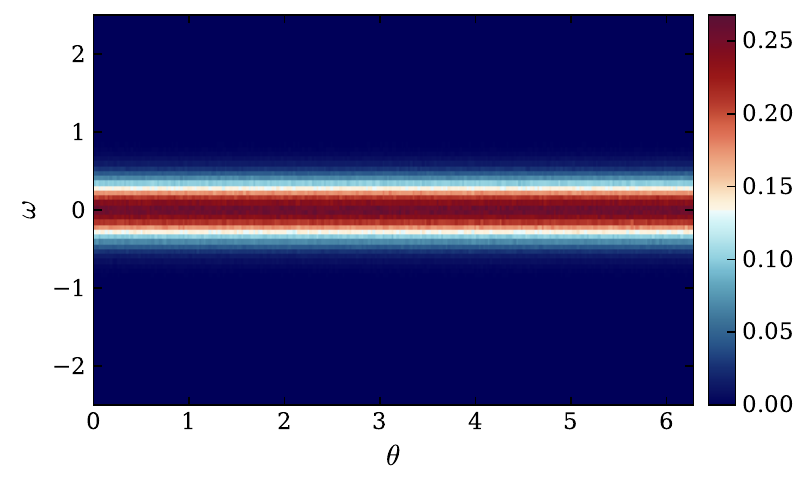}}
	\caption{Test~6. \revjac{We simulate \eqref{eq:Boltzmann2D_calamitic} with $\mathcal{V}=0$ and the handed exchange rule of \cref{sec:handedrule}, which violates detailed balance while satisfying the reciprocity principle, to show thermalization to the Maxwellian.} The top row shows three snapshots of the $v_y$ marginal, relaxing from the flat-top initial datum \cref{eq:ic} to the Maxwellian. The \revjac{second} row shows the $\omega$ marginal at the final time and the temperature histories. The translational and rotational temperatures converge to the value $T=1/16$ fixed by the conservation of the total energy. \revjac{The bottom row shows the $(v_x,v_y)$ and $(\theta,\omega)$ densities at the final time.} \reviv{$N=10^7$, $\Delta t=0.05$, $1/\tau=10$. The velocity and angular-velocity marginals are shown in detail on $[-2.5,2.5]$.}}
	\label{fig:handed}
\end{figure}

\section{Conclusion}
\label{sec:conclusion}

We have presented a direct simulation Monte Carlo scheme for the kinetic theory of ordered fluids \cite{carrillo2025kinetic}. The scheme is formulated for a generic order-parameter manifold and combines, through a Strang splitting, a Monte Carlo treatment of the binary collisions with a symplectic integration of the mean-field Vlasov transport. \revii{The collisions are treated by the Nanbu--Babovsky scheme for a Maxwellian kernel, by Bird's acceptance--rejection method for a general one, and by a BGK relaxation when they are replaced by a thermostat.} The mean-field force is reconstructed from the particle ensemble at every time step.

We have proved that the continuous dynamics conserves the total energy (\cref{prop:energy}), and we have verified that the numerical scheme approximately preserves this, up to a residual error below one percent over the full simulation horizon. This property is necessary to resolve the conversion of interaction energy into kinetic energy that accompanies the emergence of order.

The numerical tests confirm the relaxation to the Maxwellian predicted by the $H$-theorem and demonstrate the mean-field alignment induced by several interaction potentials, with both the Maxwellian and the hard-needle collision kernels. \rev{The relaxation persists under the handed exchange rule, whose transition probability violates detailed balance while satisfying the reciprocity principle, the regime for which the $H$-theorem was designed.} Varying the temperature with a BGK/Andersen thermostat, we recover the second-order isotropic--nematic transition at the critical temperature $T_c=8/\pi$ predicted by the linear stability analysis of the isotropic state and by the Landau--de Gennes theory of two-dimensional nematics.

\begin{remark}[Relation to multi-particle collision dynamics]
\label[remark]{rem:mpcd}
\revjac{The Andersen thermostat is also the central ingredient of multi-particle collision dynamics (MPCD), i.e. a mesoscopic method widely used for complex fluids, and particularly of the nematic-MPCD algorithm of Shendruk and Yeomans \cite{shendrukYeomans}. However, the two approaches discretize entirely different mathematical objects. Our scheme is a particle solver for the Boltzmann--Vlasov equation \cref{eq:boltzmann_generic} \cite{carrillo2025kinetic}. Its collisions are true binary interactions following microscopically derived scattering rules. The method explicitly approximates a kinetic equation whose $H$-theorem and Maxwellian equilibria are rigorous and known.}

\revjac{In contrast, MPCD evolves point particles through cell-based, multi-particle stochastic collisions. These collisions are phenomenologically designed to reproduce the correct macroscopic hydrodynamic limit, without relying on any specific underlying kinetic equation. In nematic-MPCD, alignment is enforced directly by relaxing the orientations in each cell toward the local director. In our scheme, alignment emerges organically from the mean-field force generated by the interaction potential. Finally, in MPCD, Andersen-type resampling is the defining mesoscopic collision mechanism. For us, it is merely an optional tool to fix the bath temperature.}
\end{remark}

Natural directions for future work include the extension to spatially inhomogeneous flows, to three-dimensional and head--tail symmetric molecules, and to asympto\-tic-preserving time-relaxed Monte Carlo schemes for the stiff collisional regime \cite{pareschi2001time}.

\section*{Declarations}
\revvii{The authors used generative AI tools in preparing this work. An initial DSMC code was written by the authors in PETSc \cite{petsc-user-ref}, building on earlier implementations in MATLAB. Claude Opus~5 and Claude Fable~5.1 (Anthropic) were later used to improve and extend this code, to write the scripts that launch the parameter sweeps and produce the figures from the stored simulation output, and to draft and edit the text. GPT 5.6 Sol, GPT-6 Astra (OpenAI) and Gemini~3.1 Pro (Google) were used to obtain critical readings of drafts of the manuscript. No AI tool was used in deriving or proving any of the mathematical results of the paper. The authors tested all code and checked every figure from data written by the simulations. The code and the scripts that reproduce the numerical tests of \cref{sec:experiments} are publicly available at \url{https://github.com/UZerbinati/dsmc}. The authors assume responsibility for all content.}

\bibliographystyle{abbrv}
\bibliography{references}
\end{document}

%% file: ex_shared.tex
\usepackage{lipsum}
\usepackage{amsfonts}
\usepackage{graphicx}
\usepackage{epstopdf}
\usepackage{algorithmic}
\usepackage{accents}
\usepackage{tikz}

\definecolor{verde}{RGB}{0,128,128}
\definecolor{rosso}{RGB}{191,0,64}
\tikzset{
  Arrow/.style={->, draw=black},
  GreenDashed/.style={-, draw=verde, thick, dashed},
  greenArrow/.style={->, draw=verde},
  ArrowRed/.style={->, draw=rosso},
  none/.style={inner sep=0pt, outer sep=0pt}
}
\pgfdeclarelayer{nodelayer}
\pgfdeclarelayer{edgelayer}
\pgfsetlayers{edgelayer,nodelayer,main}
\ifpdf
  \DeclareGraphicsExtensions{.eps,.pdf,.png,.jpg}
\else
  \DeclareGraphicsExtensions{.eps}
\fi

\newsiamremark{remark}{Remark}
\crefname{remark}{Remark}{Remarks}
\newsiamremark{hypothesis}{Hypothesis}
\crefname{hypothesis}{Hypothesis}{Hypotheses}
\newsiamthm{claim}{Claim}
\newsiamremark{fact}{Fact}
\crefname{fact}{Fact}{Facts}

\headers{Direct Simulation Monte Carlo for Ordered Fluids}{J.~A.~Carrillo, P.~E.~Farrell, A.~Medaglia, U.~Zerbinati}

\title{A Kinetic Theory Approach To Ordered Fluids: Direct Simulation Monte Carlo Methods\thanks{Submitted to the editors DATE.
\funding{This work was funded by 
the Engineering and Physical Sciences Research Council [grant number EP/W026163/1],
the Science and Technology Facilities Council [grant number UKRI/ST/B000495/1],
the Donatio Universitatis Carolinae Chair ``Mathematical modelling of multicomponent systems'',
and
the UKRI Digital Research Infrastructure Programme through the Science and Technology Facilities Council's Computational Science Centre for Research Communities (CoSeC).
The authors gratefully acknowledge the hospitality of the Erwin Schrödinger International Institute for Mathematics and Physics, Vienna, where part of this work was carried out. \\
JAC and AM acknowledge the support by the Advanced Grant Nonlocal-CPD,“Nonlocal PDEs for Complex Particle Dynamics: Phase Transitions, Patterns and Synchronization”, of the European Research Council Executive Agency (ERC) under the European Union’s Horizon 2020 research and innovation programme (grant 883363).
JAC was also partially supported by the EPSRC EP/V051121/1 and by the ``Maria de Maeztu'' Excellence Unit IMAG, reference CEX2020-001105-M, funded by MCIN/AEI/10.13039/501100011033/.
AM acknowledges the support by Fondo Italiano per la Scienza (FIS2023-01334) advanced grant “ADvanced numerical Approaches for MUltiscale Systems with uncertainties” - ADAMUS.
The research of U.Z.~was supported by the Research Council of Norway, project number 357556 (LaVa).
}}}

\author{
José A. Carrillo\thanks{Mathematical Institute, University of Oxford, Oxford, UK(\email{carrillo@maths.ox.ac.uk}).} 
\and Patrick E.~Farrell\thanks{Mathematical Institute, University of Oxford, Oxford, UK and Mathematical Institute, Faculty of Mathematics and Physics, Charles University, Czechia (\email{patrick.farrell@maths.ox.ac.uk}).}
\and Andrea Medaglia\thanks{Department of Mathematics and Computer Science, University of Ferrara, Ferrara, IT (\email{andrea.medaglia@unife.it}).}
\and Umberto Zerbinati\thanks{Simula Research Laboratory, Oslo, NO (\email{umberto@simula.no}).}
}

\usepackage{amsopn}

\usepackage{mathtools}
\usepackage{todonotes}
\makeatletter
\newcommand{\mat}[1]{{\mathpalette\mat@{#1}}}
\newcommand{\mat@}[2]{%
  \begingroup
  \sbox\z@{$\m@th#1\underline{#2}$}%
  \dimen@=\dp\z@ \advance\dimen@ -2\mat@dimen{#1}%
  \dp\z@=\dimen@
  \sbox\z@{$\m@th\underline{\box\z@}$}%
  \box\z@
  \endgroup
}
\newcommand\mat@dimen[1]{%
  \fontdimen8
  \ifx#1\displaystyle\textfont\else
  \ifx#1\textstyle\textfont\else
  \ifx#1\scriptstyle\scriptfont\else
  \scriptscriptfont\fi\fi\fi 3
}
\newcommand{\tensor}[1]{{\mathpalette\tensor@{#1}}}
\newcommand{\tensor@}[2]{%
  \begingroup
  \sbox\z@{$\m@th#1\underline{#2}$}%
  \dimen@=\dp\z@ \advance\dimen@ -2\tensor@dimen{#1}%
  \dp\z@=\dimen@
  \sbox\z@{$\m@th\underline{\box\z@}$}%
  \box\z@
  \endgroup
}
\newcommand\tensor@dimen[1]{%
  \fontdimen8
  \ifx#1\displaystyle\textfont\else
  \ifx#1\textstyle\textfont\else
  \ifx#1\scriptstyle\scriptfont\else
  \scriptscriptfont\fi\fi\fi 3
}
\def\contraction{\vbox{\baselineskip2.5\p@ \lineskiplimit\z@
  \kern\p@\hbox{.}\hbox{.}\hbox{.}\hbox{.}}}
\makeatother

\definecolor{revorange}{RGB}{217,95,2}
\definecolor{revviolet}{RGB}{117,45,160}
\definecolor{revochre}{RGB}{184,134,11}
\definecolor{revgreen}{RGB}{27,120,55}
\newcommand{\rev}[1]{#1}
\newcommand{\revii}[1]{#1}
\newcommand{\reviii}[1]{#1}
\newcommand{\reviv}[1]{#1}
\newcommand{\revjac}[1]{#1}
\newcommand{\revam}[1]{{#1}}
\newcommand{\revvi}[1]{#1}
\newcommand{\revvii}[1]{#1}

\DeclareMathOperator{\tr}{tr}
\let \vec \underline
\let \pt \mathbf
\newsiamremark{example}{Example}
\crefname{example}{Example}{Examples}
\makeatletter
\newcommand{\hathat}[1]{%
\begingroup%
  \let\macc@kerna\z@%
  \let\macc@kernb\z@%
  \let\macc@nucleus\@empty%
  \hat{\raisebox{.28ex}{\vphantom{\ensuremath{#1}}}\smash{\hat{#1}}}%
\endgroup%
}
\makeatother